\documentclass[aps,reprint,amsmath,amssymb]{revtex4-2}
\usepackage{graphicx}
\usepackage{amsthm}
\usepackage[colorlinks=true, citecolor=blue, urlcolor=blue]{hyperref}
\usepackage{booktabs}
\newcommand{\F}{\mathbb F}
\newcommand{\Tr}{\mathrm{Tr}}

\theoremstyle{definition}
\newtheorem{lemma}{Lemma}
\newtheorem{theorem}{Theorem}
\newtheorem{proposition}{Proposition}

\begin{document}

\title{Quantifying Nonstabilizerness of Codeword-Stabilized Codes}
\author{Yuan Liu}
\author{Ke-Mi Xu}
\email{xukemi@bit.edu.cn}
\affiliation{MIIT Key Laboratory of Complex-field Intelligent Exploration, School of Optics and Photonics, Beijing Institute of Technology, Beijing 100081, China}

\begin{abstract}
Fault-tolerant quantum computation requires non-Clifford gates, which stabilizer codes cannot supply transversally. Non-stabilizer codes are the natural place to look for them, yet no quantitative theory of the nonstabilizerness (or magic) carried by such a code has existed. We develop one for codeword-stabilized (CWS) codes and show that the key quantity is classical: a code's nonstabilizerness is fixed by how its codewords collide under translation, a question that belongs to additive combinatorics. We show that the most magical codes are exactly the Sidon sets whenever a Sidon set of the required size exists, whose pairwise differences are all distinct. No code carries more than twice its number of logical qubits of nonstabilizerness however large it is physically. Furthermore, the same reduction gives structural and operational results. Nonstabilizerness is unchanged by coset closure, which yields non-stabilizer codes with arbitrarily many logical qubits and constant nonstabilizerness as the number of logical qubits grows. A diagonal transversal gate with $k$ logic qubits that is non-Clifford on $t$ coordinates forces the code's nonstabilizerness to be at most $2(k-t)$; thus the nonstabilizerness also bounds the non-Clifford gates needed to build the code and the cost of classically simulating it. Finally, entire families become exactly computable, and we obtain closed-form values for the Kerdock codes. Together these results turn the search for magic-rich codes and transversal non-Clifford gates into classical counting problems, which can be approached with standard tools from additive combinatorics.
\end{abstract}

\maketitle

\section{Introduction}

Fault-tolerant quantum computation needs non-Clifford gates, and an efficient way to obtain them is transversally \cite{Gottesman98arXiv, Jacoby25PRXQ, Daguerre25PRXQ, Rodriguez25N, Gupta24N, Lee26Feb, Nakhl25May, Chitambar19RMP}, although no error-detecting code admits a universal set of transversal gates \cite{Eastin09Mar}. For stabilizer codes even the non-Clifford piece is closed off : the Bravyi--K\"onig theorem confines the transversal logical gates of a stabilizer code to the Clifford hierarchy \cite{Gottesman99N, Bravyi13Apr}, so a stabilizer code carries no transversal non-Clifford power of its own \cite{Bravyi13Apr, Kobayashi26Jun, Wills25NPhys}. Non-stabilizer codes are the escape route, and the codeword-stabilized (CWS) codes, specified by a graph and a classical binary code $C$, are the canonical family in which to pursue it \cite{Cross09TIT, Chuang09JMP}. The magic of a code, i.e., the nonstabilizerness of its code state (the uniform superposition of its codewords), is a natural measure of the non-Clifford content the code itself carries \cite{Bravyi05PRA, Leone22Feb, Liu22PRXQ}. The code state is the image of the encoding isometry on a stabilizer input, so its nonstabilizerness is exactly the non-Clifford content the encoder must inject, and it bounds the cost of classically simulating the code state. However, a quantitative theory of  the nonstabilizerness carried by such a code is still missing. This work supplies it, in a form that reduces the question to classical combinatorics.

Two established facts frame the answer. First, a CWS code is a stabilizer code if and only if its classical code $C$ is an affine subspace of the $n$-dimensional binary field $\F_2^n$, a coset of a linear subspace \cite{Cross09TIT, Dehaene03PRA}. Quantifying the nonstabilizerness is the quantitative version of this ``affine--or--not'' certificate. On the other hand, correctability is also combinatorial. For CWS codes the Knill--Laflamme condition \cite{Knill97PRA} for erasure errors is a geometric condition on the classical difference set $C\oplus C$ and the graph's symplectic structure \cite{Cross09TIT, Chuang09JMP, Kapshikar21arXiv}. The single object $C$ therefore controls both correctability and magic computational power.

Writing $A(x)=|C\cap(C\oplus x)|$ for the difference multiplicity of $C$, the dictionary that runs the paper is the following identity for the second-order stabilizer R\'enyi entropy (SRE) $M_2$ \cite{Leone22Feb, Haug23PRXQ}:
\[
  M_2 = 4k - \log_2 E^{(1)}(C),
  \quad
  E^{(1)}(C) = \sum_x E\bigl(C\cap(C\oplus x)\bigr),
\]
where $E^{(1)}(C)$ is the parallelogram energy, the fourth moment of $A$. The fourth-moment identity itself is elementary, but the reason why it qualifies as an indicator is nontrivial. This identity is the entry point where the quantum question meets classical combinatorics. It expresses the magic of a CWS code state in terms of the parallelogram energy $E^{(1)}(C)$ of its classical code, and most of the results below follow from it. The identity holds for an arbitrary code-space state once $E^{(1)}(C)$ is replaced by a weighted parallelogram energy $\mathcal E^{(1)}_{\psi}(C)$ carrying the amplitudes and phases of the state; the code state is then the flat benchmark of this family, and the magic capacity of the code space is its extremum.

From this identity the results follow in sequence, from how much magic a code carries, to where that magic resides, to what operational consequences it has. Throughout, ``the magic of a code'' means the magic of the code state $|C\rangle$, the uniform superposition of codewords.
\begin{itemize}
\item\emph{Extremal.} Magic is maximized by Sidon sets, the sets with the sparsest possible difference structure, and every CWS code state obeys $M_2\le 3k-\log_2(7\cdot 2^k-6)$.

\item\emph{Structural.} Magic is invariant under coset closure and depends only on the resulting quotient, giving non-stabilizer codes of constant magic and arbitrarily many logical qubits. The same invariance makes the extremal bound exact: the standard Kerdock code of every even $m$ has magic $3(m-1)-\log_2(7\cdot 2^{m-1}-6)$, and the Nordstrom--Robinson code has $9-\log_2 50\approx3.356$.

\item\emph{Gates.} The diagonal transversal gates of a CWS code are exactly those compatible with the period subgroup $P(C)=\{x\mid A(x)=m\}$, the translations that leave $C$ invariant, and their logical action is a product of logical single-qubit rotations. A logical diagonal transversal gate that is non-Clifford on $t$ coordinates, with $\varphi_i\notin\frac\pi2\mathbb Z$ for at least $t$ indices, forces $M_2(C)\le 2(k-t)$. We also classify all single-qubit transversal gates, diagonal or not.
\end{itemize}

Related work places this paper in context. Exact magic formulas are known for several state families, e.g., the closed forms for W and Dicke states \cite{Odavic23SciPost, Catalano24arXiv, Liu25TIMPS}, the hypergraph-state formula \cite{Chen24Quantum}, matrix-product state results \cite{Chen24PRB}, and qudit generalizations \cite{Wang23QIP}, while the permutation-invariant machinery of \cite{Passarelli24PRA} evaluates the SRE of symmetric systems exactly in a Dicke basis. Our approach reproduces the known closed forms for hypergraph states and for Dicke states, extending the latter to arbitrary excitation number (see Appendix~\ref{sec:related}).

The long-range and code-structure theory of nonstabilizerness is developed in \cite{Wei26arXiv, Korbany25Oct}. That $M_2$ is a magic monotone, and hence bounds other resource measures, follows from \cite{Leone22Feb, Leone24PRA}. Signed enumerators govern the linear and stabilizer side \cite{Rall17arXiv, Cao24PRXQ}, but a nonlinear code's Pauli spectrum is not determined by its weight distribution, so enumerator methods do not reach the nonlinear families treated here. On the numerical side, exact evaluation of SRE from state vectors has been accelerated to $O(N 2^{2N})$ by fast Hadamard transforms, practical up to $N \approx 25$ qubits \cite{Sierant26Quantum}; our dictionary instead yields exact values for entire CWS code families of unbounded block length. On the gate side, the Bravyi--K\"onig theorem \cite{Bravyi13Apr} and its Clifford-hierarchy refinements \cite{Kobayashi26Jun} delimit what stabilizer codes can do transversally, while recent searches construct nonadditive codes with transversal non-Clifford gates \cite{Zhang25arXiv, Kubischta23Dec}, and the diagonal case for stabilizer codes was classified in \cite{Anderson14QuantumInfComput, Dasu25arXiv}. Those works ask \emph{which} non-stabilizer codes admit such gates; we ask \emph{how much} non-Clifford power a code carries and how to bound it, turning the certificate $M_2>0$ into a quantitative bound on the transversal gate group itself. The two questions are complementary and are met by a common object, the classical code $C$. Sidon and sum-free sets have recently appeared in classical coding theory \cite{Czerwinski24AdvMathCommun}; nevertheless, their connection to nonstabilizerness is new, to our knowledge. For the Kerdock family, the individual codewords are known to form stabilizer states and a unitary $2$-design \cite{Can20TIT}; the object we compute here is different---the code state, the uniform superposition over the code, which is non-stabilizer.

The rest of the paper is organized as follows. In Sec.~\ref{sec:set-dic} we set up the CWS formalism and provide the central dictionary. Section~\ref{sec:quant-cost} draws the quantitative consequences of this identity. It solves the extremal problem (Sidon sets maximize the code-state nonstabilizerness), computes the code-space capacity and the encoding-sensitive Choi value, and attaches an operational cost, which lower-bounds the non-Clifford gates needed to realize the encoding and the cost of classically simulating the code state. Section~\ref{sec:co-Ker} develops the structural theory of coset closure, proving that nonstabilizerness lives on the quotient and is invariant under stabilizing the code, and applies it to compute exactly the nonstabilizerness of the Kerdock codes. Section~\ref{sec:gates} turns to transversal gates. We classify the diagonal transversal gates through the period subgroup $P(C)$, compute their logical action, and derive the quantitative tradeoff between nonstabilizerness and transversal non-Clifford power. We conclude in Sec.~\ref{sec:conc}.

\begin{figure}
  \centering
  \includegraphics[width=8.6cm]{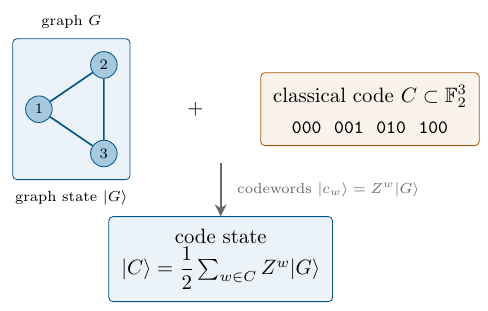}
  \caption{A CWS code is specified by a graph $G$ and a classical code $C$ (Here $C\subset\mathbb F_2^3$ is shown); the code state $|C\rangle=2^{-k/2}\sum_{w\in C}Z^w|G\rangle$ is the uniform superposition of the codewords $|c_w\rangle=Z^w|G\rangle$. Each codeword is a stabilizer state, so the entire nonstabilizerness of the code state resides in the superposition.}
  \label{Fig1}
\end{figure}

\section{Setting and the Dictionary}
\label{sec:set-dic}

\subsection{CWS codes and the code state}
\label{sec:setting}

Consider a CWS code $(G,C)$: a graph $G=(V, E)$ on $n$ vertices with adjacency matrix $\Gamma$, and a classical binary code $C\subseteq\F_2^n$ of size $m=2^k$ \cite{Cross09TIT, Chuang09JMP}. The codewords are $|c_w\rangle=Z^w|G\rangle$ for $w\in C$, where $|G\rangle=\prod_{(u,v)\in E}\mathrm{CZ}_{uv}|+\rangle^{\otimes n}$ is the graph state. The construction is summarized in Fig.~\ref{Fig1}. A non-affine choice of $C$ is what takes the code outside the stabilizer family.

A central simplification is that the nonstabilizerness of an encoded state depends on $C$ alone, not on the graph. Recall that the graph state is constructed by placing controlled-$Z$ gates on every edge of $G$ acting on the product state $|+\rangle^{\otimes n}$ \cite{Briegel01Jan, Raussendorf01May, Hein04PRA}:
\begin{equation}
  |G\rangle = \prod_{(i,j)\in E} \mathrm{CZ}_{ij}\,|+\rangle^{\otimes n}.
\end{equation}
Now apply a global Hadamard rotation $H^{\otimes n}$. Since $|+\rangle = H|0\rangle$, we may rewrite $|G\rangle = \bigl(\prod_{e\in E}\mathrm{CZ}_e\bigr)H^{\otimes n}|0\rangle^{\otimes n}$, and consequently
\begin{equation}
  H^{\otimes n}|G\rangle = \prod_{(i,j)\in E} \bigl(H^{\otimes n}\mathrm{CZ}_{ij}H^{\otimes n}\bigr)\,|0\rangle^{\otimes n}.
\end{equation}
For a single edge $(i,j)$, the Hadamards on qubits $k\notin\{i,j\}$ commute past $\mathrm{CZ}_{ij}$ and cancel, leaving a purely two-qubit Clifford gate:
\begin{equation}
  H^{\otimes n}\mathrm{CZ}_{ij}H^{\otimes n} = (H_i\otimes H_j)\,\mathrm{CZ}_{ij}\,(H_i\otimes H_j) \equiv \tilde{U}_{ij}.
\end{equation}
The full conjugated circuit is therefore $\tilde{U}_G = \prod_{(i,j)\in E} \tilde{U}_{ij}$, a product of two-qubit Clifford gates, one per edge. Crucially, $\tilde{U}_{ij}$ and $\tilde{U}_{kl}$ on disjoint edges act on independent qubit pairs and can be executed in parallel. The circuit depth is thus controlled by the maximum number of edges that share a vertex: the graph's maximum degree $\Delta(G)$. We now state this result formally. The detailed proof is given in Appendix~\ref{sec:decoupling-proof}.

\begin{lemma}[Graph--code decoupling]
\label{lem:decouple}
Let $|\psi_{\mathrm{enc}}\rangle = \sum_{w\in C} \psi_w Z^w |G\rangle$ be any encoded state of a CWS code based on graph $G = (V,E)$, where $Z^w \equiv \bigotimes_{i=1}^n Z^{w_i}$. Then
  \begin{equation}\label{eq:decoupling}
    H^{\otimes n} |\psi_{\mathrm{enc}}\rangle = \tilde{U}_G \sum_{w\in C} \psi_w |w\rangle,
  \end{equation}
  where
  \begin{equation}\label{eq:tildeU}
    \tilde{U}_G = \prod_{(i,j)\in E} (H_i \otimes H_j) \,\mathrm{CZ}_{ij}\, (H_i \otimes H_j)
  \end{equation}
  is a Clifford circuit whose depth equals the edge-chromatic number $\chi'(G)$ of $G$. By Vizing's theorem \cite{Diestel2017}, $\chi'(G) \le \Delta(G) + 1$. For bounded-degree graphs ($\Delta = O(1)$, the universal case for any geometrically local architecture) $\tilde{U}_G$ has \emph{constant depth} independent of the code length $n$.
\end{lemma}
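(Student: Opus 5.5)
The plan is to prove the identity \eqref{eq:decoupling} first and then treat the depth claim separately.

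For the identity, write $|G\rangle = U_G H^{\otimes n}|0\rangle^{\otimes n}$ with $U_G=\prod_{(i,j)\in E}\mathrm{CZ}_{ij}$. All CZ gates and all $Z^w$ are diagonal in the computational basis, so $Z^w$ commutes with $U_G$. Hence
\begin{equation*}
Z^w|G\rangle = U_G Z^w H^{\otimes n}|0\rangle^{\otimes n} = U_G H^{\otimes n} X^w|0\rangle^{\otimes n} = U_G H^{\otimes n}|w\rangle,
\end{equation*}
where the middle step uses $HXH=Z$ qubitwise. Applying $H^{\otimes n}$ and inserting $H^{\otimes n}H^{\otimes n}=I$ between consecutive CZ factors gives
\begin{equation*}
H^{\otimes n}U_G H^{\otimes n} = \prod_{(i,j)\in E} H^{\otimes n}\mathrm{CZ}_{ij}H^{\otimes n}.
\end{equation*}
Each factor reduces to $(H_i\otimes H_j)\mathrm{CZ}_{ij}(H_i\otimes H_j)$, because the Hadamards on qubits outside $\{i,j\}$ commute with $\mathrm{CZ}_{ij}$ and square to the identity. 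This establishes \eqref{eq:tildeU}. Linearity in $\psi_w$ then yields \eqref{eq:decoupling} for an arbitrary encoded state. The Clifford property is immediate, since each factor is a conjugate of CZ by Hadamards.

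For the depth claim, the key observation is that the factors $\tilde U_{ij}$ pairwise commute. They are conjugates, by the single fixed unitary $H^{\otimes n}$, of mutually commuting diagonal gates. (In fact each $\tilde U_{ij}$ is a CNOT-type gate $(H\otimes I)\mathrm{CNOT}(H\otimes I)$ diagonal in the $X$ basis.) So the product may be reordered freely. Now take a proper edge colouring of $G$ with $\chi'(G)$ colours. Each colour class is a matching, so its gates act on disjoint qubit pairs and form one layer. This gives depth at most $\chi'(G)$, and Vizing's theorem gives $\chi'(G)\le\Delta(G)+1$. When $\Delta=O(1)$ the depth is constant, independent of $n$.

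The main obstacle is the phrase ``depth equals $\chi'(G)$''. Upper bounding the depth by $\chi'(G)$ is straightforward. Equality needs care, because depth depends on the chosen decomposition into two-qubit layers. I would interpret the statement as the depth of the natural layered circuit whose gates are the edge gates $\tilde U_{ij}$. Under that interpretation the lower bound holds for two reasons. First, distinct edge gates are distinct two-qubit operations. Second, two edges sharing a vertex cannot be placed in the same layer of disjoint-support gates. Any layering into such layers is therefore a proper edge colouring, so it uses at least $\chi'(G)$ layers. I would state this interpretation explicitly rather than claim optimality over all Clifford circuits. Only the upper bound is needed in later sections.
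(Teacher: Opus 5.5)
Your proof is correct and is essentially the paper's argument in Appendix~\ref{sec:decoupling-proof}. The only difference is the order of two steps. You commute $Z^w$ past the diagonal circuit $\prod_{e\in E}\mathrm{CZ}_e$ before conjugating by $H^{\otimes n}$. The paper conjugates first and then proves the equivalent statement that each $X_i$ commutes with $\tilde U_{ij}$, so your ordering makes that separate commutation lemma trivial.

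You read ``depth equals $\chi'(G)$'' as the depth of the natural layering by edge gates, and that is exactly what the paper's Step~5 establishes. Your caveat is warranted: the unitary itself can sometimes be compiled more shallowly. For a star, a CNOT parity tree gives depth $O(\log\Delta)$.
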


This Lemma is the key to relate nonstabilizerness to a classical structure. Since the second-order SRE $M_2$ (a measure of nonstabilizerness  introduced in Ref.~\cite{Leone22Feb}) is a Clifford invariant that vanishes exactly on stabilizer states \cite{Leone22Feb, Xiao26Aug}, the lemma reduces the nonstabilizerness of any encoded state to that of the corresponding subset superposition. We therefore identify the nonstabilizerness of the code with the nonstabilizerness of the uniform subset state
\begin{equation}
  |C\rangle := 2^{-k/2}\sum_{w\in C}|w\rangle .
\label{eq:codestate}
\end{equation}

The operational reason for this choice is that $|C\rangle$ is the image of the encoding isometry on the stabilizer input $|\bar{+}\rangle^{\otimes k}$. Since the input has $M_2=0$ and $M_2$ is a nonstabilizerness monotone \cite{Leone22Feb, Leone24PRA}, a positive value $M_2(|C\rangle)>0$ is non-Clifford content the encoder itself must inject, and, as shown in Sec.~\ref{sec:cost}, $M_2(|C\rangle)/2\log_2 c$ (where $c$ is a constant) lower-bounds the number of non-Clifford ($T$) gates in any realization of the encoding. Three further properties make the state natural. First, it is the standard input for $X$-basis logical computation. Second, it is the phase-flat maximally coherent state of the code space in the codeword basis, whose dephasing is the maximally mixed logical state $\Pi_C/2^k$; its nonstabilizerness is thus a natural set-level quantity, independent of basis and phase choices. Third, each individual codeword $Z^w|G\rangle$ is a stabilizer state \cite{Hein04PRA} with $M_2=0$, so all nonstabilizerness resides in the superposition, that is, in the code space itself rather than in the codewords.

We stress that $|C\rangle$ is a benchmark (which we justify later in Sec.~\ref{sec:ext}), not the most magical state the code space can host. Changing the phases of the superposition while keeping the same support changes the nonstabilizerness and can exceed $M_2(|C\rangle)$. For a $n$-qubit pure state $|\psi\rangle$, $M_2$ is given by \cite{Leone22Feb}
\begin{equation}
  M_2(|\psi\rangle)=n-\log_2\sum_{P\in\mathcal{P}_n}\langle\psi|P|\psi\rangle^4,
\end{equation} 
where $\mathcal{P}_n:=\{I, X, Y, Z\}^{\otimes n}$. For a single logical qubit, $C=\{0,1\}$, the state $|C\rangle=|+\rangle$ has $M_2=0$ while $T|+\rangle$, which lies in the same span, has $M_2=\log_2(4/3)\approx0.415$. For the Sidon $4$-set $\{0,e_1,e_2,e_3\}\subset\F_2^3$, the phase-shifted state $\tfrac12(|000\rangle+|100\rangle+|010\rangle+\mathrm{e}^{\mathrm{i}\pi/4}|001\rangle)$ has $M_2\approx1.752$, against the unshifted value $M_2(|C\rangle)=6-\log_2 22\approx1.541$. 

\begin{table*}
\caption{The dictionary between the additive structure of the classical code $C$ and the quantum resources of the CWS code. The fourth moment $E^{(1)}(C)$ and the encoding cost satisfy $J\le E^{(1)}(C)$, and consequently the code-state nonstabilizerness $M_2$ and the Choi-state nonstabilizerness satisfy $M_2(\mathrm{Choi})\ge M_2$. The weighted row applies to an arbitrary code-space state $|\psi\rangle=\sum_w\psi_w|w\rangle$ and reduces to the second row for $\psi_w\equiv m^{-1/2}$.}
\begin{tabular}{llll}
\toprule
combinatorial object & quantum object & physical reading & where\\
\midrule
$A(x)=|C\cap(C\oplus x)|$ & correlator $c(x,0)$ & nontrivial Pauli support & Eq.~\eqref{eq:A}\\
$E^{(1)}(C)=\sum_x E(C\cap(C\oplus x))$ & $M_2=4k-\log_2 E^{(1)}(C)$ & nonstabilizerness (code state) & Eq.~\eqref{eq:M2full}\\
$\mathcal E^{(1)}_{\psi}(C)=\sum_{x,u}|A_2^{\psi}(x,u)|^2$ & $M_2(|\psi\rangle)=-\log_2\mathcal E^{(1)}_{\psi}(C)$ & nonstabilizerness (any code-space state) & Eq.~\eqref{eq:M2gen}\\
$J$ (encoding-sensitive) & $M_2(\mathrm{Choi})=4k-\log_2 J$ & encoding cost & Eq.~\eqref{eq:J}\\
\bottomrule
\end{tabular}
\label{tab:dict}
\end{table*}

\subsection{nonstabilizerness is a fourth moment}
\label{sec:dict}

The difference multiplicity counts, for each $x\in\F_2^n$, how many ordered pairs of codewords differ by $x$:
\begin{equation}
  A(x):=|C\cap(C\oplus x)| .
\label{eq:A}
\end{equation}
It satisfies $A(0)=m$ and $\sum_x A(x)=m^2$, and physically $A(x)=m\,c(x,0)$ is $m$ times the Pauli correlator $c(x,0):=\langle C|X^x|C\rangle$. $A(x)$ measures how strongly the $X$-type Pauli string of support $x$ acts on the code state, and larger multiplicities mean more stabilizer-like structure. Two moments of $A$ matter. The additive energy
\begin{equation}
  E(C):=\sum_{x}A(x)^2=\#\{(a,b,c,d)\in C^4 \mid a\oplus b\oplus c\oplus d=0\},
\label{eq:E}
\end{equation}
counts additive quadruples, and the parallelogram energy
\begin{equation}
\begin{aligned}
  E^{(1)}(C):&=\sum_{x}E\bigl(C\cap(C\oplus x)\bigr)=\sum_{x,y}A_2(x,y)^2,\\
  A_2(x,y):&=\#\{w \mid w, w\oplus x, w\oplus y, w\oplus x\oplus y\in C\},
\end{aligned}
\label{eq:E1}
\end{equation}
counts the affine $2$-planes (parallelograms) contained in $C$. The chain $A\mapsto E\mapsto E^{(1)}$ is the combinatorial spine; nonstabilizerness turns out to be its top, fourth moment, as we now show.

Writing a Pauli string as $X^xZ^z$ with $x,z\in\F_2^n$,
\begin{equation}
  c(x,z):=\langle C|X^xZ^z|C\rangle=\frac{1}{m}\sum_{w\in C\cap(C\oplus x)}(-1)^{z\cdot w},
\label{eq:corr}
\end{equation}
because $X^xZ^z|w\rangle\propto|w\oplus x\rangle$ picks out the codewords paired by the shift $x$.

\begin{lemma}[The dictionary]
\label{lem:dict}
Let $C\subseteq\F_2^n$ have size $m$, and let $|\psi\rangle=\sum_{w\in C}\psi_w|w\rangle$ be any normalized state of the coordinate subspace $\mathrm{span}\{|w\rangle \mid w\in C\}$. With the correlator
\begin{equation}
  c_\psi(x,z):=\langle\psi|X^xZ^z|\psi\rangle
  =\sum_{w\in C\cap(C\oplus x)}\overline{\psi_{w\oplus x}}\,\psi_w\,(-1)^{z\cdot w},
\label{eq:corrgen}
\end{equation}
define the amplitude-weighted parallelogram energy
\begin{equation}
\begin{aligned}
  A_2^{\psi}(x,u)&:=\sum_{\substack{w\\ w,\,w\oplus x,\,w\oplus u,\,w\oplus x\oplus u\,\in C}}
  \overline{\psi_{w\oplus x}}\,\psi_w\,\psi_{w\oplus x\oplus u}\,\overline{\psi_{w\oplus u}},\\
  \mathcal E^{(1)}_{\psi}(C)&:=\sum_{x,u}\bigl|A_2^{\psi}(x,u)\bigr|^2 .
\end{aligned}
\label{eq:E1psi}
\end{equation}
Then the order-$2$ SRE of $|\psi\rangle$ is
\begin{equation}
  M_2(|\psi\rangle)
  =n-\log_2\sum_{x,z}\bigl|c_\psi(x,z)\bigr|^4
  =-\log_2\mathcal E^{(1)}_{\psi}(C).
\label{eq:M2gen}
\end{equation}
For the code state $|C\rangle=m^{-1/2}\sum_{w\in C}|w\rangle$ with $m=2^k$, the correlator reduces to $c_\psi(x,z)=c(x,z)$ of Eq.~\eqref{eq:corr}, $A_2^{\psi}(x,u)=m^{-2}A_2(x,u)$, and $\mathcal E^{(1)}_{\psi}(C)=m^{-4}E^{(1)}(C)$, so that
\begin{equation}
  M_2(|C\rangle)=4k-\log_2 E^{(1)}(C).
\label{eq:M2full}
\end{equation}
In either form the entropy depends on $C$ and the amplitudes $\psi$ alone, not on the graph $G$ or the physical length $n$.
\end{lemma}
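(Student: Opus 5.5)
The plan is a direct computation in three stages. First compute the correlator, then sum its fourth power over $z$ using character orthogonality, and finally specialize to flat amplitudes. For a Pauli string $X^xZ^z$ (the phase convention, e.g.\ for $Y$, only contributes an overall phase and so does not affect $|c_\psi|^4$) we have $X^xZ^z|w\rangle=(-1)^{z\cdot w}|w\oplus x\rangle$. Taking the inner product with $|\psi\rangle=\sum_{w\in C}\psi_w|w\rangle$ selects exactly those $w\in C$ with $w\oplus x\in C$, which gives Eq.~\eqref{eq:corrgen}. Since $\sum_{P}\langle\psi|P|\psi\rangle^4=\sum_{x,z}|c_\psi(x,z)|^4$, the first equality in Eq.~\eqref{eq:M2gen} is simply the definition of $M_2$, rewritten with the $4^n$ Paulis indexed by $(x,z)\in\F_2^{2n}$.

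The substantive step is the second equality. Fix $x$ and set $f_x(w):=\overline{\psi_{w\oplus x}}\psi_w\,\mathbf 1[w,w\oplus x\in C]$. Then $c_\psi(x,z)=\hat f_x(z):=\sum_w f_x(w)(-1)^{z\cdot w}$ is an unnormalized Walsh--Hadamard transform. I would expand $|c_\psi(x,z)|^4=|\hat f_x(z)^2|^2$. The square is $\hat f_x(z)^2=\sum_{s}(-1)^{z\cdot s}\,(f_x*f_x)(s)$, where $(f_x*f_x)(s)=\sum_{w}f_x(w)f_x(w\oplus s)$. Parseval then gives
\[
\sum_z|\hat f_x(z)^2|^2=2^n\sum_s|(f_x*f_x)(s)|^2 .
\]
Next I would write out $(f_x*f_x)(u)=\sum_w \overline{\psi_{w\oplus x}}\psi_w\,\overline{\psi_{w\oplus u\oplus x}}\psi_{w\oplus u}$, summed over $w$ with all four points in $C$. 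This matches $A_2^\psi(x,u)$ only up to the placement of conjugates: in the paper's expression the conjugates sit on $w\oplus x$ and $w\oplus u$. So I would take the modulus and use the substitution $w\mapsto w\oplus u$. That substitution permutes the four-point configuration and shows that $|(f_x*f_x)(u)|$ equals $|A_2^\psi(x,u)|$, possibly after replacing $u$ by $u\oplus x$ or after complex conjugation. Because such a relabeling is a bijection on $u$, the total $\sum_u|\cdot|^2$ is unchanged.

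This bookkeeping of conjugates and the relabeling is the step I expect to be the main obstacle. I would settle it by writing the four points as $w,\,w\oplus x,\,w\oplus u,\,w\oplus x\oplus u$ and checking that the conjugated pair forms a diagonal of the parallelogram in both expressions. Summing over $x$ then gives $\sum_{x,z}|c_\psi|^4=2^n\mathcal E^{(1)}_\psi(C)$, and therefore $M_2=n-\log_2(2^n\mathcal E^{(1)}_\psi)=-\log_2\mathcal E^{(1)}_\psi(C)$, as in Eq.~\eqref{eq:M2gen}.

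Finally, for flat amplitudes $\psi_w=m^{-1/2}$ every product of four amplitudes equals $m^{-2}$. Hence $A_2^\psi(x,u)=m^{-2}A_2(x,u)$, and so $\mathcal E^{(1)}_\psi=m^{-4}E^{(1)}(C)$ by Eq.~\eqref{eq:E1}. Substituting $m=2^k$ gives Eq.~\eqref{eq:M2full}. The last assertion, that the result is independent of $G$ and $n$, follows in two parts. The $G$-independence follows from Lemma~\ref{lem:decouple} together with Clifford invariance of $M_2$. The $n$-independence is visible because the factor $2^n$ produced by Parseval cancels the $n$ in the definition of $M_2$, and the quantity $\mathcal E^{(1)}_\psi$ involves only points of $C$ and their differences.
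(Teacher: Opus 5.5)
Your proposal is correct and follows the paper's route. You view $c_\psi(x,\cdot)$ as the Walsh transform of the layer function $f_x$, apply the fourth-moment Parseval identity, identify the resulting correlation with $A_2^\psi$, sum over $x$, and specialize to flat amplitudes. The only difference is that you expand $\hat f_x^{\,2}$ rather than $|\hat f_x|^2$. The bookkeeping you flagged resolves exactly as $(f_x*f_x)(u)=A_2^{\psi}(x,u\oplus x)$, so the bijection $u\mapsto u\oplus x$ finishes it. The paper's conjugate correlation $\sum_w f_x(w)\overline{f_x(w\oplus u)}$ equals $A_2^\psi(x,u)$ directly and needs no relabeling.
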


\begin{proof}
By Lemma~\ref{lem:decouple} and the Clifford invariance of $M_2$, it suffices to compute for $|\psi\rangle=\sum_{w\in C}\psi_w|w\rangle$. For fixed $x$, put $f_x(w):=\overline{\psi_{w\oplus x}}\psi_w$ on the layer $S_x=C\cap(C\oplus x)$ and $f_x(w)=0$ outside it; then $c_\psi(x,z)=\widehat f_x(z)$, the Walsh--Hadamard transform of $f_x$. The fourth-moment form of Parseval equality,
\[
  \sum_z\bigl|\widehat f_x(z)\bigr|^4
  =2^n\sum_u\Bigl|\textstyle\sum_w f_x(w)\overline{f_x(w\oplus u)}\Bigr|^2 ,
\]
becomes, with $\sum_w f_x(w)\overline{f_x(w\oplus u)}=A_2^{\psi}(x,u)$,
\[
  \sum_z\bigl|c_\psi(x,z)\bigr|^4=2^n\sum_u\bigl|A_2^{\psi}(x,u)\bigr|^2 .
\]
Summing over $x$ gives $\sum_{x,z}|c_\psi(x,z)|^4=2^n\,\mathcal E^{(1)}_{\psi}(C)$, hence $M_2(|\psi\rangle)=n-\log_2\bigl(2^n\mathcal E^{(1)}_{\psi}(C)\bigr)=-\log_2\mathcal E^{(1)}_{\psi}(C)$. For $\psi_w\equiv m^{-1/2}$ each summand of $A_2^{\psi}(x,u)$ is $m^{-2}$, so $A_2^{\psi}(x,u)=m^{-2}A_2(x,u)$ and $\mathcal E^{(1)}_{\psi}(C)=m^{-4}\sum_{x,u}A_2(x,u)^2=m^{-4}E^{(1)}(C)$, giving $M_2(|C\rangle)=4\log_2 m-\log_2 E^{(1)}(C)=4k-\log_2 E^{(1)}(C)$.
\end{proof}

Writing $\psi_w=\sqrt{p_w}\,\mathrm e^{\mathrm i\phi_w}$, each summand of $A_2^{\psi}(x,u)$ has magnitude $\sqrt{p_wp_{w\oplus x}p_{w\oplus u}p_{w\oplus x\oplus u}}$ and phase $\phi_w-\phi_{w\oplus x}-\phi_{w\oplus u}+\phi_{w\oplus x\oplus u}$, the discrete second difference of $\phi$ across the parallelogram $\{w,w\oplus x,w\oplus u,w\oplus x\oplus u\}$. The amplitudes thus weight the parallelograms of $C$, while the phases enter only through this curvature; for the flat code state the curvature vanishes identically, which is why $M_2(|C\rangle)$ sees the support $C$ and nothing else.

This is the bridge from the quantum to the classical. The nonstabilizerness of $|C\rangle$ is determined by the parallelogram energy $E^{(1)}(C)$ alone, which transfers the full additive-combinatorics toolbox to the problem. Table~\ref{tab:dict} collects the dictionary; the results of the following sections are its consequences. By Eq.~\eqref{eq:M2full}, $M_2$ decreases with $E^{(1)}(C)$, so the more additive structure $C$ has, the less nonstabilizerness it carries. The zero-nonstabilizerness end is the affine case, where $E^{(1)}(C)=2^{4k}$; the maximal-nonstabilizerness end is the sparsest difference structure, a Sidon set, which we turn to next.

The dictionary extends to every R\'enyi order. Replacing the fourth moment by the $2\alpha$-th moment in the computation above shows that
\begin{equation}
\begin{aligned}
  M_\alpha(|C\rangle)&=\frac{2\alpha k-\log_2 E^{(\alpha-1)}(C)}{\alpha-1},\\
  E^{(\alpha-1)}(C):&=\sum_x E_\alpha\bigl(C\cap(C\oplus x)\bigr),
\end{aligned}
\label{eq:Malpha}
\end{equation}
where $E_\alpha$ is the order-$2\alpha$ energy and $E^{(1)}$ is the $\alpha=2$ case. Since the higher orders play a secondary role in the physics, we relegate the derivation and the corresponding coset-invariance statement to Appendix~\ref{sec:appalpha}.

\begin{figure*}
  \centering
  \includegraphics[width=12.9cm]{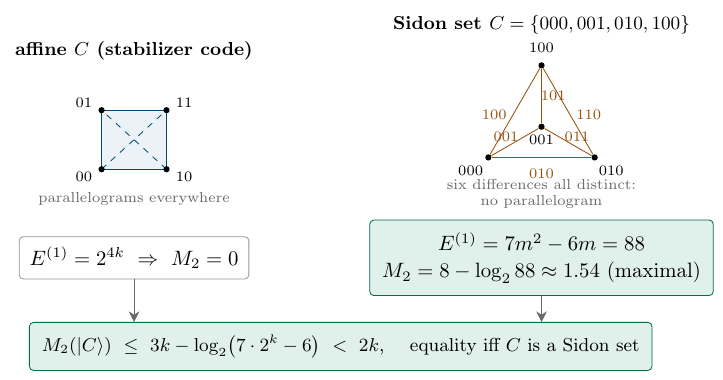}
  \caption{Extremes. Left: affine $C$ is a stabilizer code; the whole
  set is an affine $2$-plane, so parallelograms are everywhere,
  $E^{(1)}=2^{4k}$, and $M_2=0$. Right: the Sidon set
  $C=\{000,001,010,100\}\subset\mathbb F_2^3$; its six pairwise
  differences are all distinct, so $A(x)\in\{0,2\}$ and $C$ contains no
  parallelogram, giving $E^{(1)}=7m^2-6m=88$ and
  $M_2=8-\log_2 88\approx 1.54$. Sidon sets attain the universal bound
  $M_2\le 3k-\log_2(7\cdot 2^k-6)<2k$.}
  \label{Fig2}
\end{figure*}

\section{How Much nonstabilizerness, and What It Costs}
\label{sec:quant-cost}

\subsection{How much nonstabilizerness can a code carry}
\label{sec:ext}

Equation \eqref{eq:M2full} turns the question ``how much nonstabilizerness'' into a minimization problem over the parallelogram energy $E^{(1)}(C)$. This section solves two forms of that problem. We first minimize $E^{(1)}(C)$ over all codes of a fixed size, which locates the codes whose code state is most magical; we then fix a code and minimize the weighted energy of Lemma~\ref{lem:dict} over the states of its code space, which gives the most magical state that code space can host. The two problems are treated in turn.

\subsubsection{The code-state maximum}

A set $C\subset\F_2^n$ is a Sidon set \cite{Sidon1932, Babai1985101} if all its nonzero pairwise differences are distinct, equivalently $A(x)\in\{0,2\}$ for every $x\ne0$. For a Sidon set of size $m=2^k$ the two energies $E(C)$ and $E^{(1)}(C)$ are explicit: each of the $\binom{m}{2}$ distinct differences contributes a layer of two elements with energy $8$, so
\begin{equation}
\begin{aligned}
  E(C)&=3m^2-2m,\\
  E^{(1)}(C)&=E(C)+8\binom{m}{2}=7m^2-6m,
\end{aligned}
\label{eq:sidonclosed}
\end{equation}
and Eq.~\eqref{eq:M2full} gives $M_2=3k-\log_2(7\cdot 2^k-6)=2k-\log_2 7+O(2^{-k})$.

\begin{theorem}[Sidon maximizes nonstabilizerness]
\label{thm:ext}
For every $C\subseteq\F_2^n$ of size $m=2^k$,
\begin{equation}
  M_2(|C\rangle) \le 3k-\log_2\bigl(7\cdot 2^k-6\bigr) < 2k,
\end{equation}
with equality if and only if $C$ is a Sidon set. A CWS code state therefore carries less than twice its logical-qubit count of nonstabilizerness, independent of the physical size $n$.
\end{theorem}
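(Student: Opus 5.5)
The plan is to bound the parallelogram energy from below. By Eq.~\eqref{eq:M2full}, $M_2(|C\rangle)=4k-\log_2 E^{(1)}(C)$, so the theorem is equivalent to
\[
E^{(1)}(C)\ \ge\ 7m^2-6m, \qquad \text{with equality iff } C \text{ is Sidon}.
\]
The Sidon value $7m^2-6m$ was already computed in Eq.~\eqref{eq:sidonclosed}. So what remains is the inequality together with the characterization of equality.

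I will split $E^{(1)}(C)=\sum_{x,y}A_2(x,y)^2$ according to the position of $(x,y)$.

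\textbf{Step 1: the degenerate pairs.} These are the pairs with $x=0$, or $y=0$, or $x=y$. On them the parallelogram collapses:
\[
A_2(0,0)=m,\qquad A_2(0,y)=A(y),\qquad A_2(x,0)=A(x),\qquad A_2(x,x)=A(x).
\]
Their total contribution is therefore $m^2+3\sum_{x\ne0}A(x)^2$.

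\textbf{Step 2: the generic pairs.} These are the pairs with $x,y$ distinct and nonzero. They contribute $R:=\sum_{x\ne y,\;x,y\ne0}A_2(x,y)^2\ge 0$. This gives
\[
E^{(1)}(C)\ \ge\ m^2+3\sum_{x\ne0}A(x)^2 .
\]

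\textbf{Step 3: bounding the second moment of $A$.} This is the key step. For $x\ne0$, the map $w\mapsto w\oplus x$ is a fixed-point-free involution of $C\cap(C\oplus x)$. Hence $A(x)$ is even, so $A(x)\in\{0\}\cup[2,\infty)$, and therefore $A(x)^2\ge 2A(x)$, with equality iff $A(x)\in\{0,2\}$. Combining this with $\sum_{x\ne0}A(x)=m^2-m$ gives
\[
\sum_{x\ne0}A(x)^2\ \ge\ 2(m^2-m),
\]
and hence $E^{(1)}(C)\ge m^2+6m^2-6m=7m^2-6m$.

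\textbf{Step 4: equality.} Equality forces two things. First, $A(x)\in\{0,2\}$ for every $x\ne0$, which is exactly the Sidon condition. Second, it forces $R=0$. I expect this second condition to be the only point needing care, but the Sidon condition already implies it. Suppose $x\ne y$ are nonzero and $w,w\oplus x,w\oplus y,w\oplus x\oplus y\in C$. Then the two distinct pairs $\{w,w\oplus x\}$ and $\{w\oplus y,w\oplus x\oplus y\}$ both realize the difference $x$, so $A(x)\ge4$. Thus Sidon sets have $A_2(x,y)=0$ on all generic pairs, $R=0$, and they attain the bound, consistent with Eq.~\eqref{eq:sidonclosed}. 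Conversely, equality in the bound requires equality in Step~3, so the extremal sets are exactly the Sidon sets.

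\textbf{Step 5: the strict inequality $<2k$.} The claim $3k-\log_2(7\cdot2^k-6)<2k$ is equivalent to $2^k<7\cdot 2^k-6$, i.e. $6\cdot2^k>6$, which holds for all $k\ge1$. In the trivial case $k=0$ (a single codeword) both sides vanish, and the code state is a computational basis state with $M_2=0$. I would state that the strict bound is meant for $k\ge1$. The independence from $n$ is immediate, since the bound involves only $k$ (Lemma~\ref{lem:dict}).
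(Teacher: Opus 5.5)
Your proof is correct and is essentially the paper's argument. The paper writes $E^{(1)}(C)=E(C)+\sum_{x\ne0}E(S_x)$, bounds $E(C)\ge 3m^2-2m$ via the evenness of $A(y)$, and bounds each nonzero layer by its degenerate parallelograms, $E(S_x)\ge 4A(x)$; your split into the families $(0,y)$, $(x,0)$, $(x,x)$ plus the remainder $R$ is a uniform repackaging of the same count, and it makes the equality case ($\text{Sidon}\Rightarrow R=0$) explicit instead of relying on the closed form \eqref{eq:sidonclosed}. Your caveat is also correct: the strict inequality $<2k$ needs $k\ge1$, since at $k=0$ both sides vanish.
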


The two extremes, i.e., affine and Sidon codes, are illustrated in
Fig.~\ref{Fig2}. The bound follows from two elementary lower bounds on the energies. First, $E(C)\ge3m^2-2m$: write $E(C)=m^2+\sum_{y\ne0}A(y)^2$, where the nonzero multiplicities are even, satisfy $\sum_{y\ne0}A(y)=m(m-1)$, and are minimized by spreading the mass as evenly as possible, that is, $A(y)\in\{0,2\}$, which is exactly the Sidon condition. Second, for $x\ne0$ the layer $S_x=C\cap(C\oplus x)$ has $|S_x|=A(x)\ge2$ and energy $E(S_x)\ge4A(x)$ (a two-element set has energy $8=4\cdot2$; larger layers satisfy $E(S_x)\ge|S_x|^2\ge4A(x)$). Summing up, we obtain
\begin{equation}
\begin{aligned}
  E^{(1)}(C)&=E(C)+\sum_{x\ne0}E(S_x)\\
  &\ge(3m^2-2m)+4\sum_{x\ne0}A(x)\\
  &=(3m^2-2m)+4m(m-1)=7m^2-6m,
\end{aligned}
\end{equation}
which together with Eq.~\eqref{eq:M2full} is the theorem.

The bound $M_2(|C\rangle) \le 3k-\log_2\bigl(7\cdot 2^k-6\bigr)$ is tight whenever a Sidon set exists. A Sidon set of size $m$ consumes $\binom m2$ distinct nonzero differences, so it requires $\binom m2\le2^n-1$; conversely, Sidon sets of size $2^k$ exist in $\F_2^n$ for all $k\le\lfloor n/2\rfloor$, for instance $\{(x,x^3) \mid x\in\F_{2^s}\}\subseteq\F_2^{s}\times\F_2^{s}$ at even $n=2s$ \cite{Czerwinski24AdvMathCommun}. Hence for every $k\le\lfloor n/2\rfloor$ the bound is attained exactly. Two observations sharpen the statement. First, the nonstabilizerness of a code state is capped by the logical qubit count, whereas a generic $n$-qubit state has $M_2\approx n$ \cite{Szombathy25PRR}, indicating that a code holds only logical nonstabilizerness. Second, the extremal structure is the absence of additive structure, not its presence, so the most magical codes are the ones that look least like stabilizer codes. Exhaustive enumeration confirms the theorem: all $35960$ four-element subsets of $\F_2^5$ and all $635376$ four-element subsets of $\F_2^6$ attain the maximum $M_2=1.5406$ exactly when they are Sidon.

When $\binom m2>2^n-1$ no Sidon set exists, and the exact maximizer is a near-Sidon set whose difference multiplicities take a small set of values. This regime is technically richer but physically secondary; we summarize the outcome here and give the details in Appendix~\ref{sec:appunsat}. The essential point is that the energies admit a closed form in terms of the affine $2$-planes of $C$,
\begin{equation}
\begin{aligned}
  E(C)&=3m^{2}-2m+24N,\\
  E^{(1)}(C)&=7m^{2}-6m+72N+96Q,
\end{aligned}
\label{eq:planes}
\end{equation}
where $N$ counts the affine $2$-planes contained in $C$ and $Q$ is the weighted count by 2-subspaces (see Appendix~\ref{sec:appunsat} for details), so the extremal problem reduces to a finite integer program. In the half-space regime $n=k+1$ the optimum can be solved by dynamic programming; for example $(n,k)=(4,3)$ has maximal $M_2=12-\log_2 1240\approx1.724$, and as $k\to\infty$ the maximum is $M_2=k+1-\log_2 7+o(1)$, half the Sidon slope of $2$.

\subsubsection{The code-space capacity}

We now maximize $M_2$ over the code space rather than over codes. Let
\begin{equation}
  V_C:=\mathrm{span}\bigl\{Z^w|G\rangle \mid w\in C\bigr\}
\end{equation}
be the code space, and define its nonstabilizerness capacity, the subspace stabilizer entropy of Ref.~\cite{Cepollaro25arXiv},
\begin{equation}
  M_2^{\max}(C):=\max_{|\psi\rangle\in V_C}M_2(|\psi\rangle).
\label{eq:capacity}
\end{equation}
By Lemma~\ref{lem:decouple} and the Clifford invariance of $M_2$ \cite{Leone22Feb, Xiao26Aug}, the maximization is over the coordinate subspace $\mathrm{span}\{|w\rangle \mid w\in C\}\cong\mathbb C^{2^k}$; the graph $G$ and the physical length $n$ drop out. The weighted dictionary of Lemma~\ref{lem:dict} turns this into an explicit variational problem: with the amplitudes $p_w=|\psi_w|^2$ and phases $\phi_w=\arg\psi_w$ ranging independently over the simplex $\sum_w p_w=1$,
\begin{equation}
  M_2^{\max}(C)=-\log_2\Bigl[\min_{\psi:\ \sum_w|\psi_w|^2=1}\mathcal E^{(1)}_{\psi}(C)\Bigr].
\label{eq:capmin}
\end{equation}
The code state is the flat point $\psi_w\equiv m^{-1/2}$, whose value $4k-\log_2 E^{(1)}(C)$ of Eq.~\eqref{eq:M2full} is the benchmark, and $M_2^{\max}(C)$ is the ceiling. The two solvable endpoints of this minimization, Sidon and affine codes, are treated next.

The capacity obeys the same logical-qubit law as the code-state nonstabilizerness. For every CWS code $(G,C)$ with $|C|=2^k$, every state of the code space satisfies
\begin{equation}\label{eq:capbound}
  M_2(|\psi\rangle)\le 2k ,
\end{equation}
hence $M_2^{\max}(C)\le2k$ independent of the physical block length $n$. Indeed, by Lemma~\ref{lem:decouple} it suffices to bound $|\psi\rangle=\sum_{w\in C}\psi_w|w\rangle$ with $\sum_w|\psi_w|^2=1$. Put $p_w=|\psi_w|^2$; the $x=0$ slice of the Pauli spectrum is $\langle\psi|Z^z|\psi\rangle=\sum_w p_w(-1)^{z\cdot w}=\hat p(z)$, and Parseval equality with Cauchy--Schwarz inequality give
\begin{equation}
\begin{aligned}
  \sum_{x,z}|c(x,z)|^4 & \ge\sum_z|\hat p(z)|^4
  \ge2^{-n}\Bigl(\sum_z|\hat p(z)|^2\Bigr)^2\\
  &=2^n\Bigl(\sum_w p_w^2\Bigr)^2\ge2^{n-2k},
\end{aligned}
\end{equation}
so $M_2(|\psi\rangle)\le n-(n-2k)=2k$. The bound is optimal to $O(1)$: for a Sidon code it equals $2k-\log_2 6+O(2^{-k})$, as the next theorem shows, while a generic $n$-qubit state reaches $M_2\approx n$ \cite{Szombathy25PRR}.

This capacity is computable in closed form. The Pauli fourth moment of a general code-space state factorizes through the codeword amplitudes and phases, and the resulting amplitude optimization is solved; both steps are carried out in Appendix~\ref{sec:appcapacity}. They yield the following Theorem.

\begin{theorem}[Nonstabilizerness capacity of a Sidon code]
\label{thm:capacity}
Let $C\subseteq\F_2^n$ be a Sidon set of size $m=2^k\ge4$. Then
\begin{equation}
  M_2^{\max}(C)=3k-\log_2\bigl(6\cdot2^k-6\bigr)=2k-\log_2 6+O(2^{-k}) ,
\end{equation}
attained by every phase-balanced state $|\psi\rangle=m^{-1/2}\sum_{w\in C}\mathrm{e}^{\mathrm i\phi_w}|w\rangle$ with $\sum_w \mathrm{e}^{4\mathrm i\phi_w}=0$, for instance $\phi_w=\pi u(w)/(2m)$ for any bijection $u:C\to\{0,\dots,m-1\}$. 
\end{theorem}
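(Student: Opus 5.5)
The plan is to specialise the weighted dictionary of Lemma~\ref{lem:dict} to a Sidon set. The Sidon property kills every nondegenerate parallelogram, so $\mathcal E^{(1)}_\psi(C)$ becomes an explicit function of the weights $r_w:=|\psi_w|^2$ and the phases $\phi_w$. We then minimise it as in Eq.~\eqref{eq:capmin}: first over phases at fixed weights, then over weights. By Lemma~\ref{lem:decouple} it suffices to take $|\psi\rangle=\sum_{w\in C}\psi_w|w\rangle$.

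\emph{Step 1: the energy in closed form.} Since $C$ is Sidon, a quadruple $w,w\oplus x,w\oplus u,w\oplus x\oplus u\in C$ forces $x=0$, $u=0$ or $u=x$. Write $\Delta_{ab}=\phi_a-\phi_b$, and for $v\ne0$ let $v=a\oplus b$ be its unique representation. The surviving terms of Eq.~\eqref{eq:E1psi} are as follows.
\begin{itemize}
\item $A_2^\psi(0,0)=\sum_w r_w^2=:S$.
\item $A_2^\psi(0,u)=2r_ar_b$ for $u=a\oplus b$.
\item $A_2^\psi(x,0)=2r_ar_b$ for $x=a\oplus b$.
\item $A_2^\psi(x,x)=2\,\mathrm{Re}\bigl(\overline{\psi_b}^{\,2}\psi_a^2\bigr)$, whose squared modulus is $4r_a^2r_b^2\cos^2(2\Delta_{ab})$.
\end{itemize}
Put $P=\sum_{a<b}r_a^2r_b^2=(S^2-T)/2$ with $T=\sum_w r_w^4$, and $Z=\sum_w r_w^2\mathrm e^{4\mathrm i\phi_w}$. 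Using $\cos^2(2\Delta)=\tfrac12(1+\cos 4\Delta)$ and $\sum_{a<b}r_a^2r_b^2\cos 4\Delta_{ab}=\tfrac12(|Z|^2-T)$, I expect
\begin{equation}
\mathcal E^{(1)}_\psi(C)=S^2+10P+|Z|^2-T=6S^2-6T+|Z|^2 .
\end{equation}
As a sanity check, the flat, phase-free point ($r_w=1/m$, $|Z|=1/m$) gives $m^{-4}(7m^2-6m)$, which reproduces Eq.~\eqref{eq:sidonclosed}.

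\emph{Step 2: phase optimisation.} For fixed $r$, only $|Z|^2$ depends on the phases. Its minimum is $\max\bigl(0,\,2\max_w r_w^2-S\bigr)^2$, the standard polygon-closing condition for complex numbers with prescribed moduli. At flat weights with $m\ge 2$ this minimum is $0$, attained exactly when $\sum_w\mathrm e^{4\mathrm i\phi_w}=0$. The choice $\phi_w=\pi u(w)/(2m)$ gives $4\phi_w=2\pi u(w)/m$, the $m$-th roots of unity, which sum to zero. Substituting $S=1/m$ and $T=1/m^3$ gives $\mathcal E^{(1)}=m^{-4}(6m^2-6m)$, i.e.\ $M_2=3k-\log_2(6\cdot2^k-6)$. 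This proves achievability.

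\emph{Step 3: weight optimisation (main obstacle).} It remains to show that
\begin{equation}
F(r)=6S^2-6T+\max\bigl(0,\,2r_{\max}^2-S\bigr)^2\ \ge\ \frac{6(m-1)}{m^3}
\end{equation}
on the simplex $\sum_w r_w=1$. This is delicate because $6S^2-6T=6\sum_{a\ne b}r_a^2r_b^2$ alone can be made small by concentrating the weight. The penalty term is what rules that out: e.g.\ $r=e_1$ gives $F=1$.

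I would split into two regimes.

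\emph{Balanced regime,} $2r_{\max}^2\le S$. Here $F=6\sum_{a\ne b}r_a^2r_b^2$. I would minimise it by Lagrange multipliers with the constraint active, and argue via a smoothing (two-coordinate averaging) step that moving two coordinates toward each other while keeping the constraint decreases $F$. This pushes toward the flat point, or toward a boundary point where $2r_{\max}^2=S$.

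\emph{Dominant regime,} $2r_{\max}^2>S$. Write $r_{\max}=t$ and let the rest have squared sum $R$, so $S=t^2+R$. Then
\begin{equation}
F=12t^2R+6(R^2-T')+(t^2-R)^2\ \ge\ (t^2+R)^2+8t^2R ,
\end{equation}
where $T'$ is the quartic sum of the non-maximal weights. I would check that this quantity exceeds $6(m-1)/m^3$ using $t\ge 1/m$ and the simplex constraint. I expect the hypothesis $m\ge4$ to be needed here: for small $m$ a lopsided configuration could compete with the flat one.

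Taking $-\log_2$ of the minimum then yields the capacity. The expansion $2k-\log_2 6+O(2^{-k})$ follows directly.
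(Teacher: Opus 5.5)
Your Steps 1 and 2 are correct and follow the paper's route. The closed form $6S^2-6T+|Z|^2$ is Lemma~\ref{lem:sidonfourth}; the paper derives it from layer energies rather than from the surviving $A_2^\psi$ families, but the content is the same. The polygon-closing phase minimization and the achievability computation also agree with the paper.

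The gap is in Step 3, which carries the entire upper bound, and specifically in your dominant-regime estimate. Discarding $6(R^2-T')$ is not harmless. Let $t\downarrow q_0:=(1+\sqrt{m-1})^{-1}$, with the other $m-1$ weights equal to $(1-t)/(m-1)$; this stays in the dominant regime $t^2>R$ and forces $R\to q_0^2$. Then $(t^2+R)^2+8t^2R\to 12q_0^4=12/(1+\sqrt{m-1})^4$. This is \emph{smaller} than $6(m-1)/m^3$ for $m=4,8,16$; for $m=4$ it is $\approx0.215$ against $0.281$. So the check you propose fails precisely for $k=2,3,4$.

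The true value at that point is $F=6(3m-4)/[(m-1)(1+\sqrt{m-1})^4]\approx0.287$ for $m=4$, only barely above the flat value. No estimate that throws away the quartic interaction among the small weights can close the argument; you have to locate the minimizer. The paper does this with Lagrange multipliers. The stationarity equations are $r_w(S-r_w^2)=\lambda$ in the balanced regime, and $r_w(7S-2t^2-6r_w^2)=\lambda$ for the non-maximal weights in the dominant regime. Both left-hand sides are strictly concave in $r_w$, so a minimizer takes at most two distinct values, with a unique maximum in the dominant regime. The two-valued families $(a^{(r)},b^{(m-r)})$ are then settled by explicit one-variable derivative computations, including the branch $r=1$ that crosses the interface at $a=q_0$ (Lemma~\ref{lem:amplitude}).

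Your balanced-regime smoothing step is also wrong as stated. Write the two averaged weights as $c\pm d$ and let $S_0$ be the sum of the remaining squared weights. Then $\partial(S^2-T)/\partial(d^2)=4(S_0-r_1r_2)$, so averaging lowers $F$ only when $S_0>r_1r_2$. For instance, $(0.36,0.34,0.15,0.15)\mapsto(0.35,0.35,0.15,0.15)$ stays in the balanced regime but raises $S^2-T$ from about $0.05304$ to $0.05308$.

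A correct variant averages the largest weight with the smallest. For that pair, $S_0>r_{\max}r_{\min}$ does hold throughout the balanced region when $m\ge4$. But this only pushes the minimizer to the interface $2r_{\max}^2=S$, which is exactly where your dominant-regime estimate breaks down.
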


Indeed, by Appendix~\ref{sec:appcapacity}, $\min_\psi\sum_{x,z}|c(x,z)|^4=2^n\cdot6(m-1)/m^3$, with the phase term vanishing at the balanced phases, so $M_2^{\max}(C)=n-\log_2\bigl[2^n 6(m-1)/m^3\bigr]=3k-\log_2(6\cdot2^k-6)$. The code state itself has $M_2(|C\rangle)=3k-\log_2(7\cdot2^k-6)$ by Eq.~\eqref{eq:sidonclosed}, so phase optimization over a Sidon code space raises the nonstabilizerness by only $\log_2(7/6)+O(2^{-k})\approx0.22$ bits, i.e., the flat benchmark is \emph{nearly optimal}. For $k=1$ the unique two-element Sidon set is affine and $M_2^{\max}(C)=\log_2(3/2)$. 

For affine codes the situation is the opposite.

\begin{proposition}
\label{prop:affinecap}
If $C$ is an affine subspace of dimension $k$, then $M_2^{\max}(C)=\mu_k$, where
\begin{equation}
  \mu_k:=\max_{|\phi\rangle\in(\mathbb C^2)^{\otimes k}}M_2(|\phi\rangle)
\end{equation}
is the maximal SRE of a $k$-qubit state, independent of $n$ and of the graph.
\end{proposition}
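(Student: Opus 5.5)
By Lemma~\ref{lem:decouple} and the Clifford invariance of $M_2$, the capacity $M_2^{\max}(C)$ equals the maximum of $M_2$ over the coordinate subspace $W_C:=\mathrm{span}\{|w\rangle \mid w\in C\}$. The graph and the prefactor $H^{\otimes n}$ drop out here exactly as in the discussion after Eq.~\eqref{eq:capacity}. So it suffices to show that $W_C$ is the image of $(\mathbb C^2)^{\otimes k}\otimes|0\rangle^{\otimes(n-k)}$ under a Clifford unitary. Stabilizer entropy is also invariant under appending stabilizer ancillas: $M_2(|\phi\rangle\otimes|0\rangle^{\otimes(n-k)})=M_2(|\phi\rangle)$, because the Pauli spectrum factorizes and the $|0\rangle$ factor contributes $\sum_P\langle0|P|0\rangle^4=2$ per qubit, which cancels the extra $1$ in the $n$ term. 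Granting this, the maximum over $W_C$ is the maximum over all $k$-qubit states, namely $\mu_k$.

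To build the Clifford, write $C=a\oplus L$ with $L$ a linear subspace of dimension $k$. First apply $X^a$ (a Pauli) to map $W_C$ onto $W_L$. Next choose a basis $b_1,\dots,b_k$ of $L$ and extend it to a basis $b_1,\dots,b_n$ of $\F_2^n$. The invertible matrix $B$ sending $e_i\mapsto b_i$ acts on computational basis states by $|y\rangle\mapsto|By\rangle$. This linear reversible map is a product of CNOTs (Gaussian elimination over $\F_2$), hence Clifford, so $U_B^\dagger$ maps $W_L=\mathrm{span}\{|By\rangle \mid y\in\F_2^k\times\{0\}^{n-k}\}$ onto $\mathrm{span}\{|y\rangle\otimes|0^{n-k}\rangle\}$. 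Composing gives a Clifford $U$ with $U W_C=(\mathbb C^2)^{\otimes k}\otimes|0\rangle^{\otimes(n-k)}$. Every state of $W_C$ is therefore $U^\dagger(|\phi\rangle\otimes|0\rangle^{\otimes (n-k)})$ for some $k$-qubit $|\phi\rangle$, and conversely. The conclusion is $M_2^{\max}(C)=\max_\phi M_2(|\phi\rangle)=\mu_k$, which is independent of $n$ and of $G$.

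The main point to check carefully is the ancilla-invariance identity, because it is what removes $n$ from the answer. I would verify it directly from the definition $M_2=n-\log_2\sum_P\langle\psi|P|\psi\rangle^4$. Alternatively, it can be read off from the weighted dictionary: by Eq.~\eqref{eq:M2gen}, $M_2(|\psi\rangle)=-\log_2\mathcal E^{(1)}_\psi(C)$ has no explicit $n$-dependence. Moreover, $\mathcal E^{(1)}_\psi$ is invariant under the affine relabeling $w\mapsto B^{-1}(w\oplus a)$, since such a map preserves parallelograms. This dictionary route gives an equivalent and fully classical check of the invariance.
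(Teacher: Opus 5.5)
Your proposal is correct and follows the paper's route. The paper's one-line argument is Lemma~\ref{lem:decouple} together with a Clifford circuit mapping the affine support onto the coordinate subspace $\langle e_1,\dots,e_k\rangle$; your Pauli shift $X^a$ followed by the CNOT realization of the linear change of basis is exactly that circuit, and your check that appending $|0\rangle^{\otimes(n-k)}$ leaves $M_2$ unchanged (each ancilla contributes a factor $2$ to the Pauli fourth moment, which cancels the extra qubit in the $n$ term) fills in the step the paper leaves implicit in ``Clifford-equivalent to the full $k$-qubit Hilbert space.''
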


Indeed, by Lemma~\ref{lem:decouple} and a Clifford circuit mapping the affine support to the coordinate subspace $\langle e_1,\dots,e_k\rangle$, the code space is Clifford-equivalent to the full $k$-qubit Hilbert space. The value of $\mu_k$ is known, i.e., $\mu_1=\log_2(3/2)$ and $\mu_2=\log_2(16/7)$ \cite{Liu26QST}, and the symmetric informationally complete (SIC) bound \cite{Cuffaro24arXiv}
\begin{equation}
  \mu_k\le\log_2\frac{2^k+1}{2}=k-1+\log_2\bigl(1+2^{-k}\bigr)<k
\end{equation}
gives $\mu_k=k-O(1)$. An affine code space thus hosts up to $\mu_k\approx k$ bits of nonstabilizerness although its code state $|C\rangle$ is a stabilizer state with $M_2(|C\rangle)=0$. The flat benchmark and the capacity are separated by an unbounded gap in $k$, in sharp contrast to the $O(1)$ gap of the Sidon case.

\subsubsection{The benchmark and its stability}

The following proposition locates the code state within its code space and justifies the reading of $M_2(|C\rangle)$ as a benchmark.

\begin{proposition}
\label{prop:flatmin}
For every code, among the phase-flat states $|\psi\rangle=m^{-1/2}\sum_{w\in C}\mathrm{e}^{\mathrm i\phi_w}|w\rangle$ the code state $|C\rangle$ minimizes $M_2$:
\begin{equation}
  M_2(|\psi\rangle)\ge M_2(|C\rangle) ,
\end{equation}
with equality if and only if, on every difference layer $S_x=C\cap(C\oplus x)$, the relative phase $\mathrm{e}^{\mathrm i(\phi_{w\oplus x}-\phi_w)}$ is constant.
\end{proposition}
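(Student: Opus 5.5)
Set $\theta_w:=\phi_w$ and work directly with the weighted dictionary of Lemma~\ref{lem:dict}. For a phase-flat state $\psi_w=m^{-1/2}\mathrm e^{\mathrm i\phi_w}$, each summand of $A_2^{\psi}(x,u)$ has modulus $m^{-2}$ and phase equal to the discrete second difference $\delta_{x,u}(w):=\phi_w-\phi_{w\oplus x}-\phi_{w\oplus u}+\phi_{w\oplus x\oplus u}$. The plan is to show termwise that $|A_2^{\psi}(x,u)|\le m^{-2}A_2(x,u)$ and then sum. The inequality is immediate from the triangle inequality, since $A_2^{\psi}(x,u)$ is a sum of $A_2(x,u)$ unimodular terms scaled by $m^{-2}$. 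Squaring and summing over $(x,u)$ gives $\mathcal E^{(1)}_{\psi}(C)\le m^{-4}E^{(1)}(C)$, hence $M_2(|\psi\rangle)=-\log_2\mathcal E^{(1)}_\psi(C)\ge 4k-\log_2E^{(1)}(C)=M_2(|C\rangle)$ by Eqs.~\eqref{eq:M2gen} and \eqref{eq:M2full}.

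For the equality case, note that equality holds if and only if for every $(x,u)$ the unimodular numbers $\mathrm e^{\mathrm i\delta_{x,u}(w)}$, with $w$ ranging over the parallelogram set $T_{x,u}:=\{w\mid w,w\oplus x,w\oplus u,w\oplus x\oplus u\in C\}$, are all equal. Rewrite this in terms of $g_x(w):=\mathrm e^{\mathrm i(\phi_{w\oplus x}-\phi_w)}$ on the layer $S_x$. Then $\mathrm e^{-\mathrm i\delta_{x,u}(w)}=g_x(w)\overline{g_x(w\oplus u)}$, and $T_{x,u}=S_x\cap(S_x\oplus u)$. The converse direction is easy: if $g_x$ is constant on each $S_x$, then every $\delta_{x,u}(w)\equiv 0$ modulo $2\pi$, so $A_2^\psi=m^{-2}A_2$ and equality holds.

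The forward direction is the main obstacle, because equality only says that $g_x(w)\overline{g_x(w\oplus u)}$ is independent of $w\in T_{x,u}$, with some constant $\lambda_{x,u}$ that need not be $1$. Constancy of $g_x$ on $S_x$ must therefore be extracted from the group structure. The key observation is the involution $w\mapsto w\oplus u$ on $T_{x,u}$, which is fixed-point free for $u\ne 0$. It sends $g_x(w)\overline{g_x(w\oplus u)}$ to its conjugate, so $\lambda_{x,u}=\overline{\lambda_{x,u}}$, i.e. $\lambda_{x,u}=\pm1$. To exclude $\lambda_{x,u}=-1$ I would pair each $w\in S_x$ with $w\oplus x\in S_x$. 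The identity $g_x(w\oplus x)=\overline{g_x(w)}$ relates $(x,u)$ to the choice $u'=u\oplus x$, which gives $\lambda_{x,u}$ and $\lambda_{x,u\oplus x}$ consistent constraints. Failing that, I would fall back on the weaker but honest statement: equality holds iff every second difference $\mathrm e^{\mathrm i\delta_{x,u}}$ is constant on each parallelogram class. I would then check whether the $\pm1$ sign ambiguity can actually occur. If it can, the stated characterization would need to be read as ``constant up to the $\lambda$'' and I would flag this, since $S_x$ itself is not a group and constancy on all of $S_x$ may be strictly stronger than equality.
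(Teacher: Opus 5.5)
Your inequality argument is correct and is essentially the paper's. The paper bounds, for each $x$, the autocorrelation $\sum_w f_x(w)\overline{f_x(w\oplus u)}$ of the unit-modulus relative-phase function on $S_x$ by $|S_x\cap(S_x\oplus u)|$. That autocorrelation is, up to $m^{-2}$ and conjugation, your $A_2^{\psi}(x,u)$, so your triangle-inequality step is the same. Your ``if'' direction is also correct. Your fallback condition is exactly the equality case of the triangle inequality: $\mathrm e^{\mathrm i\kappa_{x,u}(w)}$ constant on each $\mathcal P_{x,u}$, equivalently $\Xi(\phi)=0$ in Proposition~\ref{prop:stability}.

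The gap is the forward direction, which you flagged, and it cannot be closed. The sign $\lambda_{x,u}=-1$ does occur, so the ``only if'' half of the statement is false.

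\emph{Minimal counterexample.} Take $k=1$, $C=\{0,1\}$, $\phi_0=0$, $\phi_1=\pi/2$. Then $|\psi\rangle=(|0\rangle+\mathrm i|1\rangle)/\sqrt2$ is a stabilizer state, so $M_2(|\psi\rangle)=0=M_2(|C\rangle)$. Yet on $S_1=\{0,1\}$ the relative phase is $\mathrm i$ at $w=0$ and $-\mathrm i$ at $w=1$, so it is not constant.

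\emph{Why your exclusion attempt fails.} The pairing $w\leftrightarrow w\oplus x$ you hoped would rule out $-1$ is exactly what produces it. The identity $g_x(w\oplus x)=\overline{g_x(w)}$ gives $\lambda_{x,x}=g_x(w)^2$, which equals $-1$ in this example. The same identity shows that constancy of $g_x$ on $S_x$ forces $g_x\equiv\pm1$. So the stated condition demands real relative phases, which is strictly stronger than equality.

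\emph{Real-phase counterexample.} Take $C=\F_2^2$ and $\phi_w=\pi w_1w_2$, i.e.\ the graph state $\mathrm{CZ}|{+}{+}\rangle$, so both sides have $M_2=0$. Here $\mathrm e^{\mathrm i\kappa_{x,u}}=(-1)^{x_1u_2+x_2u_1}$ is constant on every parallelogram, with $\lambda_{e_1,e_2}=-1$. But $\mathrm e^{\mathrm i(\phi_{w\oplus e_1}-\phi_w)}=(-1)^{w_2}$ is not constant on $S_{e_1}=C$.

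You should commit to your fallback as the correct equality condition and flag the proposition as misstated. The paper's own argument likewise proves only the inequality and equality for constant $f_x$; it never proves the converse.
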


For a phase-flat state the correlator reads $c_\psi(x,z)=m^{-1}\sum_{w\in S_x}\mathrm{e}^{\mathrm i(\phi_{w\oplus x}-\phi_w)}(-1)^{z\cdot w}$, i.e.\ $m^{-1}$ times the Walsh transform of a unit-modulus function on $S_x$, while $c_C(x,z)=m^{-1}\sum_{w\in S_x}(-1)^{z\cdot w}$ corresponds to the constant unit function. The fourth moment of a Walsh transform is maximized by the constant function: for unit $|f(w)|=1$ and every $u$, $|\sum_w f(w)\overline{f(w\oplus u)}|\le|S_x\cap(S_x\oplus u)|$, with equality for $f\equiv1$, so $\sum_z|c_\psi(x,z)|^4\le\sum_z|c_C(x,z)|^4$ for each $x$, and summing over $x$ gives the claim.

The quantitative form of Proposition~\ref{prop:flatmin} is an exact split of the nonstabilizerness shift into a phase and an amplitude part. For $(x,u)\in\F_2^n\times\F_2^n$ let
\begin{gather}
      \mathcal P_{x,u}:=\bigl\{w\in C \mid w, w\oplus x, w\oplus u, w\oplus x\oplus u\in C\bigr\},\\
      A_2(x,u):=|\mathcal P_{x,u}|,
\end{gather}
so that $E^{(1)}(C)=\sum_{x,u}A_2(x,u)^2$. Write an arbitrary code-space state as
\begin{equation}
  |\psi\rangle=\sum_{w\in C}m^{-1/2}\sqrt{1+\varepsilon_w}\;\mathrm e^{\mathrm i\phi_w}|w\rangle,
  \qquad \textstyle\sum_w\varepsilon_w=0,
\end{equation}
and define the curvature of the phase function along the parallelogram anchored at $w$,
\begin{equation}
  \kappa_{x,u}(w):=\phi_w-\phi_{w\oplus x}-\phi_{w\oplus u}+\phi_{w\oplus x\oplus u}.
\end{equation}

\begin{proposition}[Stability of the benchmark]
\label{prop:stability}
The deviation of a code-space state from the benchmark is
\begin{equation}
  M_2(|\psi\rangle)-M_2(|C\rangle)=\log_2\frac{E^{(1)}(C)}{m^{4}\,\mathcal E^{(1)}_{\psi}(C)} .
\label{eq:shift}
\end{equation}
It separates into a phase and an amplitude part.
(i) Phases raise the nonstabilizerness, exactly and quadratically. For $\varepsilon_w\equiv0$,
\begin{equation}
  M_2(|\psi\rangle)-M_2(|C\rangle)
  =-\log_2\Bigl(1-\frac{\Xi(\phi)}{E^{(1)}(C)}\Bigr)\ge0 ,
\label{eq:phase-exact}
\end{equation}
where
\begin{equation}
  \Xi(\phi):=\sum_{x,u}\sum_{\substack{w,w'\in\mathcal P_{x,u}\\ w\ne w'}}
  \Bigl[1-\cos\bigl(\kappa_{x,u}(w)-\kappa_{x,u}(w')\bigr)\Bigr]\ge0 ,
\end{equation}
and to leading order the shift is quadratic in the phases,
\begin{widetext}
\begin{equation}
  M_2(|\psi\rangle)-M_2(|C\rangle)
  =\frac{1}{2\ln2}\,
  \frac{\displaystyle\sum_{x,u}\sum_{w,w'\in\mathcal P_{x,u}}\bigl(\kappa_{x,u}(w)-\kappa_{x,u}(w')\bigr)^2}{E^{(1)}(C)}
  +O(\|\phi\|^4) .
\label{eq:phase-quad}
\end{equation}
\end{widetext}

\noindent (ii) Amplitude imbalance: first order in general, second order on the extremal families. For $\phi_w\equiv0$,
\begin{equation}
  \mathcal E^{(1)}_{\psi}(C)=m^{-4}\Bigl[E^{(1)}(C)+\textstyle\sum_w\varepsilon_w\,G(w)+O(\|\varepsilon\|^2)\Bigr],
\label{eq:amp-exp}
\end{equation}
with the vertex weight
\begin{equation}
\begin{aligned}
    G(w):&=\sum_{x,u}A_2(x,u)\,N_{x,u}(w),\\
  N_{x,u}(w):&=\sum_{w'\in\mathcal P_{x,u}}\;\sum_{v\in\{w',\,w'\oplus x,\,w'\oplus u,\,w'\oplus x\oplus u\}}\mathbf 1_{v=w} .
\end{aligned}
\end{equation}
For affine codes $G(w)=4m^3$ and for Sidon codes $G(w)=28m-24$, both independent of $w$, so there the benchmark is a critical point of $\mathcal E^{(1)}_{\psi}$ in the amplitude direction and the imbalance acts only at second order. For intermediate codes $G$ is non-constant---the non-Sidon set of Appendix~\ref{sec:appunsat} has $G\in\{536,704\}$---and the imbalance enters $M_2$ at first order.
\end{proposition}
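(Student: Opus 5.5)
The plan starts from Lemma~\ref{lem:dict}. It gives $M_2(|\psi\rangle)=-\log_2\mathcal E^{(1)}_{\psi}(C)$ and $M_2(|C\rangle)=-\log_2\bigl(m^{-4}E^{(1)}(C)\bigr)$; subtracting the two yields Eq.~\eqref{eq:shift} at once. The remaining work is to expand $\mathcal E^{(1)}_{\psi}(C)=\sum_{x,u}|A_2^{\psi}(x,u)|^2$ in the two special regimes.

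\emph{Phase part.} Take $\varepsilon_w\equiv0$. Each summand of $A_2^{\psi}(x,u)$ then equals $m^{-2}\mathrm e^{\mathrm i\kappa_{x,u}(w)}$. The sign convention needs to be checked against Eq.~\eqref{eq:E1psi}: the phase there is $\phi_w-\phi_{w\oplus x}-\phi_{w\oplus u}+\phi_{w\oplus x\oplus u}$, which is exactly $\kappa_{x,u}(w)$. Hence
\[
m^4|A_2^{\psi}(x,u)|^2=\Bigl|\sum_{w\in\mathcal P_{x,u}}\mathrm e^{\mathrm i\kappa_{x,u}(w)}\Bigr|^2=\sum_{w,w'\in\mathcal P_{x,u}}\cos\bigl(\kappa_{x,u}(w)-\kappa_{x,u}(w')\bigr).
\]
Writing $\cos=1-(1-\cos)$ and summing over $(x,u)$ gives $m^4\mathcal E^{(1)}_\psi=\sum A_2^2-\Xi(\phi)=E^{(1)}(C)-\Xi(\phi)$. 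Substituting into Eq.~\eqref{eq:shift} gives Eq.~\eqref{eq:phase-exact}. Nonnegativity is immediate, since $\Xi\ge0$ and $\Xi\le E^{(1)}$ because $\mathcal E^{(1)}_\psi\ge0$. For the quadratic form, I would use $1-\cos t=t^2/2+O(t^4)$. Then $\Xi$ becomes one half of the sum over ordered pairs $w\neq w'$; adding the diagonal terms, which vanish, turns this into half the full double sum. Finally $-\log_2(1-y)=y/\ln2+O(y^2)$ with $y=O(\|\phi\|^2)$ gives the factor $1/(2\ln2)$ and an $O(\|\phi\|^4)$ remainder.

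\emph{Amplitude part.} Take $\phi_w\equiv0$. Each summand becomes $m^{-2}\prod_{v}\sqrt{1+\varepsilon_v}$ over the four vertices $v$ of the parallelogram; to first order this is $m^{-2}\bigl(1+\tfrac12\sum_v\varepsilon_v\bigr)$. So $m^2A_2^\psi(x,u)=A_2(x,u)+\tfrac12\sum_{w'\in\mathcal P_{x,u}}\sum_v\varepsilon_v+O(\varepsilon^2)$, and the inner double sum equals $\sum_w\varepsilon_wN_{x,u}(w)$. Squaring doubles the cross term and cancels the $\tfrac12$, which gives Eq.~\eqref{eq:amp-exp} with $G(w)=\sum_{x,u}A_2(x,u)N_{x,u}(w)$. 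Because $\sum_w\varepsilon_w=0$, a constant $G$ kills the linear term, so the flat point is critical.

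\emph{The two explicit values of $G$.} For affine $C$ every $(x,u)$ with $x,u$ in the direction space $L$ has $\mathcal P_{x,u}=C$ and $A_2=m$; every other pair has $A_2=0$. Each $w$ then occurs in four vertex slots, so $N_{x,u}(w)=4$, and summing over $m^2$ pairs gives $G=4m^3$. For Sidon $C$, I would classify the pairs with $A_2\ne0$. First, $x=u=0$: here $A_2=m$ and $N=4$. Second, exactly one of $x,u$, or $x\oplus u$, is zero and the nonzero one is a difference: here $A_2=2$ and $N(w)=4$ on the two-element layer containing $w$, since its parallelogram is degenerate and each $w'$ hits $w$ in two slots. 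There are three such families, each contributing $\sum_{x\neq0,\,w\in S_x}2\cdot4=8(m-1)$. Third, genuine planes, which do not exist by the Sidon property. The total is $4m+24(m-1)=28m-24$, independent of $w$. The main obstacle is this bookkeeping of degenerate parallelograms, that is, counting multiplicities of repeated vertices correctly in $N_{x,u}$. The cited non-Sidon example would be checked by direct enumeration.
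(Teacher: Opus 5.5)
Your proposal is correct and follows essentially the same route as the paper: Eq.~\eqref{eq:shift} from Lemma~\ref{lem:dict}, the identity $m^4\mathcal E^{(1)}_\psi(C)=E^{(1)}(C)-\Xi(\phi)$ obtained by expanding $|\sum_{w\in\mathcal P_{x,u}}\mathrm e^{\mathrm i\kappa_{x,u}(w)}|^2$, and a first-order expansion of $A_2^\psi$ with the same layer bookkeeping for the affine and Sidon values of $G$. You read $N_{x,u}(w)$ as a count over the four vertex slots with multiplicity. This is exactly what the paper's contributions $4m$ and $8$ per pair implicitly use. The $\{536,704\}$ example is likewise only checked numerically in the paper.
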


\begin{proof}
Equation~\eqref{eq:shift} follows from $M_2(|\psi\rangle)=-\log_2\mathcal E^{(1)}_{\psi}(C)$ (Lemma~\ref{lem:dict}) and $M_2(|C\rangle)=-\log_2\bigl(m^{-4}E^{(1)}(C)\bigr)$.

(i) For flat amplitudes $A_2^{\psi}(x,u)=m^{-2}\sum_{w\in\mathcal P_{x,u}}\mathrm e^{\mathrm i\kappa_{x,u}(w)}$, so
\[
  m^{4}\,\mathcal E^{(1)}_{\psi}(C)
  =\sum_{x,u}\Bigl|\textstyle\sum_{w\in\mathcal P_{x,u}}\mathrm e^{\mathrm i\kappa_{x,u}(w)}\Bigr|^2
  =E^{(1)}(C)-\Xi(\phi),
\]
and \eqref{eq:phase-exact} follows; expanding $1-\cos\theta=\tfrac12\theta^2+O(\theta^4)$ gives \eqref{eq:phase-quad}. Since $\Xi\ge0$, this is the quantitative form of Proposition~\ref{prop:flatmin}.

(ii) Write $\psi_w=m^{-1/2}(1+\tfrac12\varepsilon_w)+O(\varepsilon^2)$ and expand $A_2^{\psi}(x,u)$ to first order; grouping the $\varepsilon_v$ term by the incidence of $v$ among the four vertices of each parallelogram and summing over $(x,u)$ gives \eqref{eq:amp-exp}. For affine $C$ the nonempty layers are $x,u\in C$, each with $A_2(x,u)=m$ and $N_{x,u}(w)=4$, so $G(w)=m^2\cdot m\cdot4=4m^3$. For Sidon $C$ the nonempty layers are the trivial layer $(0,0)$ and the three pair layers $(x,0),(0,x),(x,x)$ for each of the $\binom m2$ differences; the first contributes $4m$ to each $w$, and each pair contributes $8$ to each of its two endpoints through each of the three types, giving $G(w)=4m+24(m-1)=28m-24$.
\end{proof}

Two remarks temper the benchmark reading, and Proposition~\ref{prop:stability} quantifies them. First, $|C\rangle$ is a phase-flat minimizer only, not a minimizer in the whole code space. Each codeword $Z^w|G\rangle\in V_C$ is a stabilizer state \cite{Hein04PRA} with $M_2=0<M_2(|C\rangle)$ for non-affine $C$, and by \eqref{eq:amp-exp} the amplitude imbalance interpolating toward a single codeword acts on $M_2$ at first order away from the extremal families. Second, the benchmark is separated from the ceiling only through the phases. By Theorem~\ref{thm:capacity} and Proposition~\ref{prop:affinecap}, phase optimization reaches the capacity, which exceeds $M_2(|C\rangle)$ by $\log_2(7/6)+O(2^{-k})$ for Sidon codes and by $\mu_k\approx k$ for affine codes, while \eqref{eq:phase-exact} exhibits this phase contribution as exactly quadratic in the curvature. On a Sidon code, moreover, the flat state maximizes $M_2$ over all phase-flat states: the only layers are the trivial layer and the pair layers, so $\mathcal E^{(1)}_{\psi}=7s_2^2-6s_4$ with $s_q=\sum_w p_w^q$, whose minimum $(7m-6)/m^3$ is attained at $p_w\equiv m^{-1}$; amplitude imbalance can therefore only lower a Sidon code state's nonstabilizerness below the benchmark, in contrast to the affine code, where the benchmark is a stabilizer state at the bottom. Thus $M_2(|C\rangle)$ is the lower baseline of the code space, $M_2^{\max}(C)$ the upper one, and \eqref{eq:shift} interpolates between them.

\subsection{The Choi state: encoding-sensitive nonstabilizerness}
\label{sec:choi}

The Choi--Jamio{\l}kowski state encodes the full action of the encoding isometry \cite{Watrous18book, Wilde17book, Liu26arXiv}. A natural question is whether its nonstabilizerness agrees with the code's nonstabilizerness. We show that it never does less, and the gap measures how much the nonstabilizerness depends on the labeling rather than on the code space.

Let $|\Psi\rangle_{AP}=2^{-k/2}\sum_{i\in\F_2^k}|i\rangle_A\otimes Z^{w_i}|G\rangle_P$ be the Choi state, where $w:\F_2^k\to C$, $i\mapsto w_i$ is a bijection labeling the codewords. The register $A$ represents the $k$-qubit logical input space before encoding, while $P$ represents the $n$-qubit physical code space after encoding. By graph--code decoupling, $|\Psi\rangle$ is Clifford-equivalent to the subset state $|\Phi\rangle=2^{-k/2}\sum_i|i\rangle_A|w_i\rangle_P$ of the graph $\{(i,w_i)\mid i\in\F_2^k\}$, so
\begin{equation}\label{eq:J}
\begin{aligned}
  M_2(|\Psi\rangle)&=4k-\log_2 J,\\
  J:&=E^{(1)}\bigl(\{(i,w_i)\mid i\in\F_2^k\}\bigr)\subseteq\F_2^{k+n}.
\end{aligned}
\end{equation}

\begin{proposition}
\label{prop:Jle}
For any bijective encoding $w:\F_2^k\to C$,
\begin{equation}
  J\le E^{(1)}(C),
  \quad\text{equivalently}\quad
  M_2(\mathrm{Choi})\ge M_2(\mathrm{code}).
\end{equation}
The Choi state is never less magical than the code state.
\end{proposition}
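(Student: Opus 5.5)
The plan is to work entirely on the classical side. By Eq.~\eqref{eq:J} and Eq.~\eqref{eq:M2full}, both quantities are parallelogram energies, so the claim is the purely combinatorial inequality $E^{(1)}(D)\le E^{(1)}(C)$. Here $D:=\{(i,w_i)\mid i\in\F_2^k\}\subseteq\F_2^{k+n}$ is the graph of the labeling. The equivalence with $M_2(\mathrm{Choi})\ge M_2(\mathrm{code})$ is immediate, because both entropies have the form $4k-\log_2(\cdot)$ and $|D|=|C|=2^k$. The key structural fact I will use is that the linear projection $\pi:\F_2^{k+n}\to\F_2^n$ onto the physical coordinates restricts to a bijection $D\to C$, since $w$ is a bijection.

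First, I would compare parallelogram layers. Recall $E^{(1)}(D)=\sum_{x,u}A_2^D(x,u)^2$, with $A_2^D(x,u)=|\mathcal P^D_{x,u}|$. If $d\in\mathcal P^D_{x,u}$, then $d,d\oplus x,d\oplus u,d\oplus x\oplus u\in D$. Applying the linear map $\pi$ shows that $\pi(d)\in\mathcal P^C_{\pi x,\pi u}$. So $\pi$ maps $\mathcal P^D_{x,u}$ into $\mathcal P^C_{\pi x,\pi u}$, and this map is injective because $\pi|_D$ is.

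Second, for fixed $(y,v)\in\F_2^n\times\F_2^n$, I group the pairs $(x,u)$ in the fiber $\pi x=y,\ \pi u=v$. I claim the images $\pi(\mathcal P^D_{x,u})$ over this fiber are pairwise disjoint subsets of $\mathcal P^C_{y,v}$. Given $c\in\mathcal P^C_{y,v}$, the four vertices $c,c\oplus y,c\oplus v,c\oplus y\oplus v\in C$ have unique preimages in $D$. A parallelogram in $D$ projecting onto this one with base point $\pi^{-1}(c)$ must use exactly these preimages. Hence $x=\pi^{-1}(c)\oplus\pi^{-1}(c\oplus y)$ and $u=\pi^{-1}(c)\oplus\pi^{-1}(c\oplus v)$ are determined by $c$. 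This gives
\[
\sum_{(x,u):\,\pi x=y,\,\pi u=v}A_2^D(x,u)\le A_2^C(y,v).
\]

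Third, I conclude with the elementary inequality $\sum a_j^2\le(\sum a_j)^2$ for nonnegative $a_j$, applied fiberwise:
\[
J=\sum_{y,v}\sum_{\text{fiber}}A_2^D(x,u)^2\le\sum_{y,v}\Bigl(\sum_{\text{fiber}}A_2^D(x,u)\Bigr)^2\le\sum_{y,v}A_2^C(y,v)^2=E^{(1)}(C).
\]

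The main obstacle is the disjointness step, namely ensuring that distinct lifts $(x,u)$ cannot share a base point after projection. The argument above handles it via uniqueness of preimages, but I would double-check the bookkeeping of which vertex plays the role of the base point. As a sanity check, I would also verify that equality holds precisely when every fiber carrying a parallelogram has a single lift. An example is the affine labeling of an affine code, where both sides equal $2^{4k}$.
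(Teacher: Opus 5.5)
Your argument is correct and is essentially the paper's proof. The paper also fixes the physical offset $(a,b)$, uses bijectivity of the labeling to show that each physical anchor of $C$ lifts to at most one logical type $(u,v)$, and then sums over $(a,b)$; your version additionally spells out the fiberwise inequality $\sum_j a_j^2\le(\sum_j a_j)^2$, which the paper's ``summing over $(a,b)$'' leaves implicit.
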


\begin{proof}
Fix a physical offset $(a,b)$ and a physical anchor $w$ of the parallelogram $(a,b)$, i.e.\ $w,w\oplus a,w\oplus b,w\oplus a\oplus b\in C$. Because the labeling is bijective, each physical anchor contributes to at most one logical type: the label $i=\lambda(w)$ is unique, and a labeled anchor of type $(u,v)$ exists only for the single type $u=\lambda(w\oplus a)\oplus\lambda(w)$, $v=\lambda(w\oplus b)\oplus\lambda(w)$ satisfying $\lambda(w\oplus a\oplus b)=\lambda(w)\oplus\lambda(w\oplus a)\oplus\lambda(w\oplus b)$. Summing over $(a,b)$ gives $J\le E^{(1)}(C)$.
\end{proof}

$J$ is not a function of $C$ alone but of the encoding map $w$. Equality $J=E^{(1)}(C)$ holds for Sidon sets with any labeling and for affine codes with a linear labeling, but not in general: among the $12870$ eight-element subsets of $\F_2^4$, there are $774$ non-Sidon sets with $J=E^{(1)}(C)$, while the linear code $C=\F_2^3\subset\F_2^6$ has $E^{(1)}(C)=4096$ and $J=4096$ under the linear labeling but $J=736$ under the scrambled labeling $w=(0,1,2,4,3,6,5,7)$. A thorough characterization of the equality class remains open. The verification ledger of Appendix~\ref{sec:appledger} records these and the other numerical checks used throughout the paper.

\subsection{The encoding cost}
\label{sec:cost}

The dictionary of Eq.~\eqref{eq:M2full} assigns to a code state a number; this section attaches a computational task to that number. The task is to realize the code, i.e., to build the code state from a stabilizer state, and to simulate it classically. Both costs are governed by a single resource, the stabilizer rank and its convex relaxation, the stabilizer extent \cite{Bravyi16PRX, Bravyi18Science}. For a pure state $|\psi\rangle$ let $\chi(|\psi\rangle)$ be the minimum number of stabilizer states needed to expand $|\psi\rangle$, and let
\begin{equation}
  \xi(|\psi\rangle):=\min\Bigl\{\Bigl(\textstyle\sum_i|c_i|\Bigr)^2\ \Big|\ |\psi\rangle=\textstyle\sum_i c_i|\sigma_i\rangle\Bigr\},
\label{eq:extent}
\end{equation}
where $|\sigma_i\rangle$ range over stabilizer states, be its stabilizer extent. The SRE bounds the logarithm of the extent from below \cite{Leone22Feb, Leone24PRA}. For every pure state, we have
\begin{equation}
  \log_2\xi(|\psi\rangle)\ge \tfrac12 M_2(|\psi\rangle),
  \quad
  \log_2\chi(|\psi\rangle)\ge \tfrac12 M_2(|\psi\rangle).
\label{eq:srebound}
\end{equation}
The second inequality hold since $\xi\le\chi$. The logarithm of the stabilizer rank is the operational yardstick behind two tasks: the cost of classically simulating $|\psi\rangle$ in the stabilizer-rank formalism grows with $\chi(|\psi\rangle)$ \cite{Bravyi16PRX}, and a Clifford$+T$ circuit synthesizing $|\psi\rangle$ from a stabilizer state needs at least $\frac{\log_2\chi(|\psi\rangle)}{\log_2 c}-O(1)$ non-Clifford gates \cite{Howard17PRL}. Here $c$ is a constant such that a Clifford$+T$ circuit with $t$ non-Clifford gates maps any stabilizer state to a state of stabilizer rank (and hence extent) at most $c^{t}$. Through Eq.~\eqref{eq:srebound}, $M_2(|\psi\rangle)/2$ bounds this yardstick, so $M_2(|\psi\rangle)$ certifies both costs, the $T$-gate count up to the constant $\log_2 c$. A third operational cost of nonstabilizerness is distillation: the no-purification theorems of Refs.~\cite{FangLiu20PRL, FangLiu22PRXQ} lower-bound the number of noisy copies needed to distill a pure target $\psi$, a bound that diverges as the maximal stabilizer overlap $f_\psi=\max_{\sigma\in\mathrm{STAB}}\mathrm{Tr}[\psi\sigma]$ approaches one; a non-stabilizer target has $f_\psi<1$, so $M_2>0$ certifies a nonzero distillation cost.

For a CWS code the certificate acquires a code-specific form. By Lemma~\ref{lem:decouple} the encoding isometry maps the logical stabilizer input $|\bar{+}\rangle^{\otimes k}$ to the encoded state $2^{-k/2}\sum_{w\in C}Z^w|G\rangle$, which is Clifford-equivalent to $|C\rangle$; the input has $M_2=0$, so the entire non-Clifford content of the encoder is carried by this one step, and the code-state nonstabilizerness is precisely the non-Clifford content the encoder must inject.

\begin{theorem}[Encoding cost]
\label{thm:cost}
Every Clifford$+T$ circuit realizing the encoding isometry of a CWS code $(G,C)$ contains at least $\frac{M_2(|C\rangle)}{2\log_2 c}-O(1)$ non-Clifford ($T$) gates. Moreover, the codeword decomposition $|C\rangle=2^{-k/2}\sum_{w\in C}Z^w|G\rangle$ is a stabilizer decomposition with $\chi(|C\rangle)\le 2^k$ and $\xi(|C\rangle)\le 2^k$. For a Sidon code, $M_2(|C\rangle)=2k-\log_2 7+O(2^{-k})$ by Eq.~\eqref{eq:sidonclosed}, and the lower bound of Eq.~\eqref{eq:srebound} together with this decomposition pins the stabilizer extent to
\begin{equation}
  \tfrac12 M_2(|C\rangle)\le\log_2\xi(|C\rangle)\le\tfrac12 M_2(|C\rangle)+\tfrac12\log_2 7+O(2^{-k}),
\end{equation}
so the codeword decomposition realizes the lower bound of Eq.~\eqref{eq:srebound} to within $\tfrac12\log_2 7$ bits, that is, up to the constant factor $\sqrt7$. 
\end{theorem}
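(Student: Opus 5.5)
The theorem has three parts: a lower bound on $T$-count, an upper bound on rank and extent, and the resulting two-sided pinning of $\xi$ for Sidon codes. I would prove them in that order, each reducing to results already stated.

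\emph{$T$-count lower bound.} Let $U$ be any Clifford$+T$ circuit that realizes the encoding isometry, with $t$ non-Clifford gates. Feed it the stabilizer input $|\bar{+}\rangle^{\otimes k}$, padded with ancillas in $|0\rangle$ so that it is a full stabilizer state. The output is $2^{-k/2}\sum_{w\in C}Z^w|G\rangle$. By Lemma~\ref{lem:decouple} this output equals $H^{\otimes n}\tilde U_G|C\rangle$, a Clifford image of $|C\rangle$.

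By the defining property of $c$, the output has stabilizer rank at most $c^{t}$. Stabilizer rank is invariant under Cliffords, so $\chi(|C\rangle)\le c^t$. Combining this with Eq.~\eqref{eq:srebound} gives
\[
\tfrac12 M_2(|C\rangle)\le\log_2\chi(|C\rangle)\le t\log_2 c .
\]
Rearranging yields the stated lower bound, and the $-O(1)$ slack absorbs any ancilla or Clifford bookkeeping. The one point needing care is that $M_2$, $\chi$ and $\xi$ must all be read as Clifford invariants of the same state. Here I identify $|C\rangle$ with its Clifford-equivalent encoded form, exactly as the theorem's notation does.

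\emph{Rank and extent upper bound.} Each $Z^w|G\rangle$ is a stabilizer state \cite{Hein04PRA}. So $|C\rangle=\sum_{w\in C}2^{-k/2}Z^w|G\rangle$ is a stabilizer decomposition with $2^k$ terms, which gives $\chi\le2^k$ directly.

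Putting the coefficients $c_w=2^{-k/2}$ into Eq.~\eqref{eq:extent} gives
\[
\xi\le\Bigl(\textstyle\sum_w 2^{-k/2}\Bigr)^2=(2^{k}\cdot2^{-k/2})^2=2^k .
\]
Note that the decomposition is feasible even though the codeword states need not be orthogonal; only the coefficient sum enters the bound.

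\emph{Sidon pinning.} For a Sidon code, Eq.~\eqref{eq:sidonclosed} gives $M_2=3k-\log_2(7\cdot2^k-6)=2k-\log_2 7+O(2^{-k})$. Hence
\[
k=\tfrac12 M_2+\tfrac12\log_2 7+O(2^{-k}).
\]
The upper bound then follows from $\log_2\xi\le k$, and the lower bound is Eq.~\eqref{eq:srebound} itself. The gap between the two bounds is $\tfrac12\log_2 7$, i.e. a multiplicative factor of $\sqrt7$ in $\xi$.

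The main obstacle is not computational. It is making sure the cited inequality $\log_2\xi\ge\tfrac12M_2$ and the $c^t$ rank-growth property are used in a consistent normalization, so that the extent–rank–circuit chain holds rather than an inequality pointing the wrong way. I would therefore state the chain $\tfrac12M_2\le\log_2\xi\le\log_2\chi\le t\log_2c$ explicitly as the single line from which the count follows.
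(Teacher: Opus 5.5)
Your three steps are the paper's own argument: the chain $\tfrac12M_2\le\log_2\xi\le\log_2\chi\le t\log_2 c$, the codeword decomposition giving $\chi,\xi\le2^k$, and the Sidon closed form. The upper bounds are fine. The gap is the first link, Eq.~\eqref{eq:srebound}. You take it as given (as the paper does) and treat it as a normalization issue, but it is false for general pure states, and already false for CWS code states.

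Take $C=(\{0\}\times\langle e_1,e_2\rangle)\cup(\{1\}\times\langle e_3,e_4\rangle)\subset\F_2\times\F_2^4$. This is a union of two affine planes $A_1,A_2$, with $m=8$ and $k=3$. Its layers are $C$ itself, six layers equal to $A_1$ or $A_2$ (each with $A(x)=4$), and sixteen two-element layers. Hence $E(C)=224$, $E^{(1)}(C)=224+6\cdot64+16\cdot8=736$, and $M_2(|C\rangle)=12-\log_2736\approx2.48$. Yet $|C\rangle=\tfrac1{\sqrt2}(|A_1\rangle+|A_2\rangle)$ with $|A_1\rangle,|A_2\rangle$ orthogonal stabilizer states, so $\chi,\xi\le2$ and $\log_2\xi\le1<\tfrac12M_2\approx1.24$. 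More generally, two complementary $(k-1)$-dimensional planes give rank-$2$ code states with $M_2=3-\log_2(1+7\cdot2^{2-2k})\to3$. So the step $\tfrac12M_2(|C\rangle)\le\log_2\chi(|C\rangle)$ in your $T$-count derivation fails, and citing Eq.~\eqref{eq:srebound} does not justify the lower half of the Sidon pinning.

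Both conclusions can be rescued, by different arguments.

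\emph{$T$-count.} In general only a factor $4$ is available. Let $\sigma$ be an optimal stabilizer state with signed stabilizer group $\mathcal S$. Then $F:=|\langle\sigma|\psi\rangle|^2=2^{-n}\sum_{P\in\mathcal S}\pm\langle\psi|P|\psi\rangle\le\bigl(2^{-n}\sum_{P\in\mathcal P_n}\langle\psi|P|\psi\rangle^4\bigr)^{1/4}=2^{-M_2/4}$, by the power-mean inequality over the $2^n$ elements of $\mathcal S$. Since $\xi\ge1/F$, this gives $\log_2\xi\ge M_2/4$. Next, $T=\mathrm e^{\mathrm i\pi/8}(\cos\frac\pi8\,I-\mathrm i\sin\frac\pi8\,Z)$ multiplies the $\ell_1$ norm of any stabilizer decomposition by at most $\cos\frac\pi8+\sin\frac\pi8$. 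So $t$ $T$ gates take a stabilizer input to extent at most $(1+2^{-1/2})^t$, which yields $t\ge M_2/[4\log_2(1+2^{-1/2})]$. This is your bound with $4$ in place of $2$, obtained through the extent rather than the rank.

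\emph{Sidon pinning.} Use a fidelity estimate specific to Sidon sets. A stabilizer state supported on an affine subspace $A$ of dimension $a$ has flat modulus $2^{-a/2}$, so $|\langle\sigma|C\rangle|^2\le2^{-(k+a)}|C\cap A|^2$. Since $C\cap A$ is again Sidon, $\binom{|C\cap A|}{2}\le2^a-1$, which forces $|C\cap A|^2\le\frac94\,2^a$. Therefore $\log_2\xi\ge\log_2(1/F)\ge k-\log_2\frac94$. This is at least $\tfrac12M_2=k-\tfrac12\log_2(7-6\cdot2^{-k})$ for every $k\ge2$. Together with your upper bound $\log_2\xi\le k$, this restores the two-sided pinning.
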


The pinning is a statement about the stabilizer extent, the yardstick of classical simulation cost; the $T$-gate bound of the theorem, in contrast, is only a lower bound. In these terms a Sidon code state realizes its nonstabilizerness certificate essentially optimally as a simulation resource: by Eq.~\eqref{eq:srebound} no state with nonstabilizerness $M_2$ admits an extent below $2^{M_2/2}$, and the codeword decomposition attains $2^{M_2/2}\sqrt7$, so the extent is pinned within the constant $\sqrt7$ of this optimum.

\begin{figure*}
  \centering
  \includegraphics[width=17.2cm]{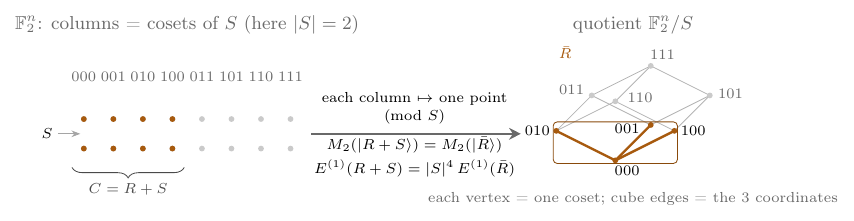}
    \caption{Coset invariance of nonstabilizerness. In the drawn example $\bar R=\{000,001,010,100\}$ is the Sidon $4$-set, $|S|=2$, and $k=3$. Left: the ambient space $\F_2^n$ is partitioned into the cosets of a subgroup $S$, one column per coset (the first column is $S$ itself); the coset closure $C=R+S$ is the union of the highlighted cosets. Right: the quotient $\F_2^n/S$, with each coset collapsed to a point. The quotient is drawn as the cube so that its coordinates remain visible: the three edge directions are the three coordinates of $\F_2^n/S$, and $\bar R$ is the star $\{000\}$ together with its three neighbours. Each face of the cube through $000$ is a parallelogram $\{000,x,y,x\oplus y\}$, and the Sidon condition $x\oplus y\notin\bar R$ means that every such face misses its diagonal vertex: $\bar R$ never completes a face. The energies scale as $E^{(1)}(R+S)=|S|^4 E^{(1)}(\bar R)$, and the factor $|S|^4$ cancels the shift $4\log_2|S|$ of the logical dimension in Eq.~\eqref{eq:M2full}, so that $M_2(|R+S\rangle)=M_2(|\bar R\rangle)$.}
  \label{Fig3}
\end{figure*}

\section{Coset Closure and the Kerdock Codes}
\label{sec:co-Ker}

\subsection{Nonstabilizerness lives on the quotient}
\label{sec:coset}

A stabilizer code carries no nonstabilizerness, so enlarging a code by a stabilizer-type subgroup cannot change how much nonstabilizerness it holds. Precisely, fixing a subgroup $S\le\F_2^n$ and a set $R$ of
distinct coset representatives, and forming the coset closure
$C=R+S=\{r\oplus s \mid r\in R,\ s\in S\}$, we show that the nonstabilizerness depends only on the quotient $\bar R$, the projection of $R$ to $\F_2^n/S$.

The key observation is how the energies scale under coset closure. A union $T=\bigsqcup_i(a_i\oplus S)$ of $t$ cosets has additive energy $E(T)=|S|^3 E(\bar T)$, where $\bar T$ is the corresponding set of $t$ cosets in the quotient. Indeed, grouping the sum $E(T)=\sum_y|T\cap(T\oplus y)|^2$ by the coset of $y$, the $|S|$ shifts $y\in S$ each leave $T$ invariant (contributing $|T|=t|S|$), while each nonzero coset $\bar y$ contributes $|S|$ shifts with $|T\cap(T\oplus y)|=|S|\bar A(\bar y)$; the factor $|S|^3$ emerges.

\begin{theorem}[Coset invariance of nonstabilizerness]
\label{thm:coset}
Let $S\le\F_2^n$ have order $d$, let $R$ be a set of distinct coset representatives with $|R|=2^\ell$, and let $C=R+S$, so $|C|=2^{\ell+\log_2 d}=2^k$. Let $\bar R$ be the projection of $R$ to $\F_2^n/S$. Then
\begin{subequations}
  \begin{align}
    E^{(1)}(R+S)&=d^4 E^{(1)}(\bar R),\\
    M_2(|R+S\rangle)&=M_2(|\bar R\rangle)=4\ell-\log_2 E^{(1)}(\bar R).
  \end{align}
\end{subequations}
Nonstabilizerness depends only on the quotient $C/S$, not on $S$ itself.
\end{theorem}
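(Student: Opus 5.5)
The plan is to prove the energy identity $E^{(1)}(R+S)=d^4E^{(1)}(\bar R)$ directly at the level of the parallelogram counts $A_2$. The $M_2$ statement then follows from Eq.~\eqref{eq:M2full} of Lemma~\ref{lem:dict}. Here $|C|=2^{\ell}d$, so $4k=4\ell+4\log_2 d$, and the $4\log_2 d$ cancels the factor $d^4$, giving $M_2(|R+S\rangle)=4\ell-\log_2E^{(1)}(\bar R)=M_2(|\bar R\rangle)$. The last equality applies Eq.~\eqref{eq:M2full} in the quotient group $\F_2^n/S\cong\F_2^{n-\log_2 d}$, which is legitimate since that formula only uses the group structure of an elementary abelian $2$-group.

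The core step is a fiber count for $A_2$. Let $\pi:\F_2^n\to\F_2^n/S$ be the projection and $C=\pi^{-1}(\bar R)$. I would show that for all $x,u\in\F_2^n$,
\[
A_2^{C}(x,u)=d\,\bar A_2^{\bar R}(\pi x,\pi u).
\]
The argument is that $C$ is a union of full $S$-cosets. Hence membership of $w$, $w\oplus x$, $w\oplus u$, $w\oplus x\oplus u$ in $C$ depends only on $\pi w,\pi x,\pi u$. So the anchors $w$ counted by $A_2^C(x,u)$ are exactly the preimages of the anchors $\bar w$ counted by $\bar A_2(\pi x,\pi u)$, and each such $\bar w$ has exactly $d$ preimages.

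Summing squares then gives
\[
E^{(1)}(C)=\sum_{x,u}A_2^C(x,u)^2=d^2\sum_{x,u}\bar A_2(\pi x,\pi u)^2 .
\]
Each pair $(\bar x,\bar u)$ in the quotient has $d^2$ preimages $(x,u)$, so the sum equals $d^2\cdot d^2\,E^{(1)}(\bar R)$. This is the same mechanism as the $|S|^3$ scaling of $E(T)$ noted before the theorem, with one extra factor of $d$ because the parallelogram energy sums over two offset variables instead of one.

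I expect no real obstacle; the only point needing care is bookkeeping. First, $R$ must consist of distinct coset representatives, so that $|\bar R|=|R|=2^\ell$ and $C$ is a disjoint union of $2^\ell$ cosets with $|C|=2^\ell d$. Second, $E^{(1)}(\bar R)$ and $M_2(|\bar R\rangle)$ must be understood in the quotient group, identified with $\F_2^{n'}$ via any linear complement, so that Lemma~\ref{lem:dict} applies verbatim. Independence from $S$ beyond the quotient is then immediate, because the right-hand side only involves $\bar R$.
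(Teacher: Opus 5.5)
Your proposal is correct and is essentially the paper's argument: both rest on $C$ being a union of full $S$-cosets, so that membership depends only on the projection. The paper splits the factor as $d^4=d^3\cdot d$, since each layer $C\cap(C\oplus x)$ is a coset union with energy $d^3E(\bar R\cap(\bar R\oplus\bar x))$ and there are $d$ shifts $x$ per coset $\bar x$; you split it as $d^2\cdot d^2$ through the single fiber identity $A_2^C(x,u)=d\,\bar A_2(\pi x,\pi u)$, which avoids the separately sketched scaling law $E(T)=|S|^3E(\bar T)$, and your $M_2$ cancellation via $k=\ell+\log_2 d$ is the paper's.
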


The mechanism behind this invariance is sketched in Fig.~\ref{Fig3}: each coset of $S$ collapses to a single point of the quotient, and $M_2$ only sees the reduced code $\bar R$. To see this, fix $x$ and note that the layer $C\cap(C\oplus x)$ is itself a union of cosets of $S$, with coset set $\bar R\cap(\bar R\oplus\bar x)$. By the scaling law above, $E(C\cap(C\oplus x))=d^3 E(\bar R\cap(\bar R\oplus\bar x))$. Summing over $x$, with $d$ elements $x$ per coset $\bar x$, $E^{(1)}(C)=d^4 E^{(1)}(\bar R)$, and the factor $d^4$ cancels against $4\log_2 d$ in Eq.~\eqref{eq:M2full}, since $k=\ell+\log_2 d$. The nonstabilizerness is therefore a coset-level invariant.

An immediate consequence follows. Let $R$ be a Sidon set of fixed size $2^\ell$ and let $S$ range over subgroups of order $d=2^r$ with the coset representatives of $R$ in distinct cosets of $S$. Then $C=R+S$ is non-stabilizer whenever $\bar R$ is non-affine, it has $k=\ell+r$ logical qubits, and its nonstabilizerness is
\begin{equation}
  M_2(|C\rangle)=M_2(|R\rangle)=3\ell-\log_2(7\cdot2^\ell-6),
\end{equation}
independent of $r$. As $r\to\infty$ the logical dimension grows while the nonstabilizerness stays constant, so the nonstabilizerness density $M_2/k\to0$. These are non-stabilizer codes with arbitrarily small nonstabilizerness density, and they require no $Z_4$-linear structure; low nonstabilizerness is here a generic consequence of coset closure rather than a feature of the almost-linear $Z_4$ families such as Kerdock or Preparata. For $\ell=2$ the constant is $6-\log_2 22\approx1.5406$.

\subsection{Kerdock codes, exactly}
\label{sec:kerdock}

The Kerdock codes \cite{Kerdock1972182} are the standard nonlinear binary code family. For every even $m\ge4$ there is a code of length $n=2^m$, size $2^{2m}$ (so $k=2m$ logical qubits) and minimum distance $2^{m-1}-2^{m/2-1}$ \cite{Hammons94TIT, Calderbank97ProcLondonMathSoc, Carlet10book}. The smallest member is the Nordstrom--Robinson code \cite{Nordstrom1967613}, the $(16, 256, 6)$ code at $m=4$. Their nonstabilizerness can be computed exactly because they are coset closures of Sidon sets, as we shall demonstrate.

A Kerdock code is a union of $2^{m-1}$ cosets of the first-order Reed--Muller code \cite{Abbe21TIT} $\mathrm{RM}(1,m)$ inside $\mathrm{RM}(2,m)$. Writing codewords as evaluations of Boolean functions, each coset is $\{Q_B(x)+\ell(x) \mid \ell\ \text{affine}\}$ for a quadratic form $Q_B$ whose alternating part is a skew-symmetric (zero-diagonal) $m\times m$ binary matrix $B$. The cosets are indexed by a Kerdock set $K_m$, a set of $2^{m-1}$ such matrices, containing the zero matrix, with the property that the difference of any two distinct elements has full rank $m$ \cite{Calderbank97ProcLondonMathSoc}. Thus
\begin{equation}
  K(m)=\bigcup_{B\in K_m}\bigl(Q_B+\mathrm{RM}(1,m)\bigr),
\end{equation}
which is exactly the coset closure of Sec.~\ref{sec:coset}, with $S=\mathrm{RM}(1,m)$ of order $2^{m+1}$ and $R=\{Q_B \mid B\in K_m\}$. The quotient $\bar R$ is the Kerdock set $K_m$ itself, regarded as a subset of the space $\F_2^{m(m-1)/2}$ of alternating forms. Hence $\ell=m-1$ and, by Theorem~\ref{thm:coset},
\begin{equation}
  M_2\bigl(K(m)\bigr)=M_2\bigl(|K_m\rangle\bigr),
\label{eq:kerdock-coset}
\end{equation}
independent of the physical length $n=2^m$. It remains only to know the difference structure of $K_m$, which the next Lemma supplies. We emphasize that the Kerdock set $K_m$ of skew-symmetric matrices is not the same object as the set of symmetric matrices that appears in the $Z_4$/Gray-map construction \cite{Hammons94TIT}; that set is closed under addition (a subgroup) and would wrongly give $M_2=0$. Throughout, ``Kerdock set'' means the skew-symmetric set with pairwise full-rank differences.

\begin{lemma}[Kerdock sets are Sidon]
\label{lem:kerdocksidon}
For every even $m$, the standard Kerdock set $K_m$ is a Sidon set. The map $(a,b)\mapsto B_a+B_b$ is injective on unordered pairs $a\ne b$.
\end{lemma}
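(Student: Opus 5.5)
We reduce the Sidon property to a single vanishing statement. A set $K\subseteq\F_2^N$ is Sidon iff there are no four \emph{distinct} elements $B_a,B_b,B_c,B_d\in K$ with $B_a+B_b+B_c+B_d=0$. Take two unordered pairs $\{a,b\}\ne\{c,d\}$ with $a\ne b$, $c\ne d$ and $B_a+B_b=B_c+B_d$. If they share an index, say $a=c$, the equation gives $B_b=B_d$. Since distinct elements of $K_m$ have a full-rank, hence nonzero, difference, $a\mapsto B_a$ is injective, so $b=d$, a contradiction. Hence it suffices to exclude additive quadruples of four distinct elements, which is what the plan below does.

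\emph{Concrete model.} We use the standard description of $K_m$ for even $m$. Index the elements by $a\in\F_q$ with $q=2^{m-1}$ ($m-1$ odd). Identify $\F_2^m=\F_q\times\F_2$ and take the quadratic form obtained from the Galois-ring (Teichm\"uller) lift,
\[
Q_a(x,\varepsilon)=\Tr(a x^{3}\text{-type terms})+\varepsilon\,\Tr(a x).
\]
Its alternating part can be written as
\[
B_a\bigl((x,\varepsilon),(y,\eta)\bigr)=\Tr\bigl(a(xy^{2}+x^{2}y)\bigr)+\Tr\bigl(\sigma(a)\,xy\bigr)+\varepsilon\Tr(ay)+\eta\Tr(ax),
\]
where $\sigma(a)$ is the nonlinear piece (essentially $a^{2}$ or $\sqrt a$, coming from the carry in $\mathbb Z_4$). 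The essential feature is that $a\mapsto B_a$ is the sum of an $\F_2$-linear map $L(a)$ and a genuinely nonlinear map $N(a)$; the full-rank property is exactly what rules out $K_m$ being a subgroup.

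\emph{Main steps.}
\begin{enumerate}
\item Suppose $B_a+B_b+B_c+B_d=0$ with $a,b,c,d$ distinct.
\item Evaluate this identity on the $\varepsilon$-coupling terms. Since $\Tr(\cdot\,y)$ is nondegenerate, the linear part forces $a+b+c+d=0$.
\item Substitute $d=a+b+c$ into the nonlinear part. The remaining condition becomes $\Tr\bigl((N(a)+N(b)+N(c)+N(a+b+c))\,xy\bigr)=0$ for all $x,y$.
\item If $N$ is a Frobenius power, that is additive, the nonlinearity must instead come from the $\mathbb Z_4$ correction, whose binary shadow is the quadratic cross term. In either case we expect $N(a)+N(b)+N(c)+N(a+b+c)$ to factor as a product like $(a+b)(a+c)(b+c)$, up to a Frobenius twist. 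This is nonzero for distinct $a,b,c$ with $d=a+b+c\ne a,b,c$, which gives the contradiction.
\end{enumerate}

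\emph{Main obstacle.} The hardest step is pinning down the exact nonlinear term of the binary alternating form attached to the standard Kerdock set, and checking that its third finite difference factors and is nonvanishing.

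If the explicit formula proves awkward, we would instead argue intrinsically. Suppose $B_a+B_b=B_c+B_d=:D$. Then $B_a+B_c=B_b+B_d$ is full rank as well, so the four elements span an affine $2$-plane $\{B_a,B_a+u,B_a+v,B_a+u+v\}$. All three nonzero differences $u$, $v$, $u+v$ are then full rank and lie in the difference set of $K_m$. We would then use the known fact that the standard Kerdock set is the $\F_q$-orbit image of a nonlinear (quadratic-in-$a$) parametrization to show that such a plane cannot lie in $K_m$.

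As a sanity anchor we would verify $m=4$ directly: the $8$ matrices of $K_4$ should yield $28$ distinct pairwise sums, which is consistent with the Nordstrom--Robinson value $9-\log_2 50$ quoted earlier.
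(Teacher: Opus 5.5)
Your reduction to additive quadruples of four distinct elements is correct. Your Step~2 is also the paper's first move: comparing the $\varepsilon$-coupling (first-component) terms forces $a+b=c+d$.

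The gap is everything after that. Steps~3--4 carry the whole content of the lemma, and they are stated only as an expectation, built on a model that is wrong.

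\begin{itemize}
\item As written, your $B_a$ is not alternating. The term $\Tr(\sigma(a)xy)$ gives $B_a(v,v)=\Tr(\sigma(a)x^2)\neq0$ in general.
\item If $\sigma$ is a Frobenius power ($a^2$ or $\sqrt a$), every term of your $B_a$ is $\F_2$-linear in $a$. Your family would then be a subgroup, $B_a+B_b=B_{a+b}$, which is maximally non-Sidon. This is exactly the failure the paper warns about for the symmetric $\mathbb Z_4$ set.
\item The predicted factorization $(a+b)(a+c)(b+c)$ is never derived, and it is not what actually occurs.
\item Your intrinsic fallback does not close the gap. ``No affine $2$-plane lies in $K_m$'' is just the Sidon property restated. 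Pairwise full rank alone is not known to imply it: the paper records Sidon-ness of general Kerdock sets as open. So the explicit model must be used.
\item An $m=4$ check would cover only one case. The Nordstrom--Robinson value you cite is itself deduced from this lemma, so it cannot serve as evidence.
\end{itemize}

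What is missing is the correct model plus a two-line polarization. The paper's Kerdock set is the polar form of $\sum_{i=1}^{(m-2)/2}\Tr\bigl((ax)^{2^i+1}\bigr)+\varepsilon\Tr(ax)$ on $\F_{2^{m-1}}\times\F_2$, namely
\[
  B_a\bigl((\alpha,x),(\beta,y)\bigr)=\alpha\Tr(ay)+\beta\Tr(ax)+\Tr(a^2xy)+\Tr(ax)\Tr(ay).
\]
As a function of $a$, this is linear plus Frobenius-additive plus a single quadratic term $q(a)=\Tr(ax)\Tr(ay)$.

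\begin{itemize}
\item Once $a+b=c+d=:s$, the linear and additive parts cancel.
\item Set $e:=a+c$. Then $q(a)+q(b)+q(c)+q(d)$ reduces to the polarization $\Tr(sx)\Tr(ey)+\Tr(ex)\Tr(sy)$. This is a bilinear expression in $(s,e)$, not a field product.
\item Since $s\neq0$, this form vanishes identically only when the functionals $\Tr(s\,\cdot)$ and $\Tr(e\,\cdot)$ are $\F_2$-dependent, i.e.\ $e\in\{0,s\}$.
\item Either case gives $\{c,d\}=\{a,b\}$.
\end{itemize}

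This is the paper's proof, phrased there through the maps $D(a)$ with $B_a(v,w)=Q(v,D(a)w)$. Your guessed nonvanishing condition, that $a,b,c$ are distinct, does coincide with $s,e,s+e\neq0$. So your intuition about the shape of the answer was right; the proposal lacks the computation that establishes it.
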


\begin{proof}
Let $m$ be even, put $n=m-1$, and write $U=\F_{2^n}$ with absolute trace $\Tr:U\to\F_2$. Set $V=\F_2\oplus U\cong\F_2^{m}$ with the non-degenerate bilinear form $Q\bigl((\alpha,x),(\beta,y)\bigr)=\alpha\beta+\Tr(xy)$. For $a\in U$ define the $\F_2$-linear map $D(a):V\to V$ by
\begin{equation}
  D(a)(\alpha,x)=\bigl(\Tr(xa),\ \alpha a+a^2x+\Tr(xa)a\bigr),
\end{equation}
and the alternating form $B_a(v,w):=Q(v,D(a)w)$, i.e.
\begin{equation}
\begin{aligned}
  B_a\bigl((\alpha,x),(\beta,y)\bigr)={}&\alpha\Tr(ay)+\beta\Tr(ax)\\
  &+\Tr\bigl((ax)(ay)\bigr)+\Tr(ax)\Tr(ay).
\end{aligned}
\end{equation}
Then $K_m=\{B_a \mid a\in U\}$ is the standard Kerdock set, a set of $2^{m-1}$ alternating forms with $B_0=0$ and $B_a+B_b$ non-degenerate for $a\ne b$ \cite{Dempwolff15JAlgebraicCombin, Calderbank97ProcLondonMathSoc}.

It suffices to argue on the maps $D(a)$. Since $Q$ is non-degenerate, the correspondence $B\mapsto D$ given by $B(v,w)=Q(v,Dw)$ is $\F_2$-linear and injective, so $B_a+B_b=B_c+B_d$ if and only if $D(a)+D(b)=D(c)+D(d)$. Assume the latter with $a\ne b$, $c\ne d$. Comparing first components gives $\Tr(x(a+b))=\Tr(x(c+d))$ for all $x$, whence by non-degeneracy of the trace form $a+b=c+d=:s$. Writing $b=s+a$, $d=s+c$ and comparing second components, the $\alpha$-terms and the quadratic terms cancel (as $a^2+b^2=(a+b)^2=s^2=c^2+d^2$), leaving
\begin{equation}
  \Tr(xa)a+\Tr\bigl(x(s+a)\bigr)(s+a)=\Tr(xc)c+\Tr\bigl(x(s+c)\bigr)(s+c),
\end{equation}
which expands to $\Tr(xs)a+\Tr(xa)s=\Tr(xs)c+\Tr(xc)s$ for all $x$, or, with $e:=a+c$,
\begin{equation}
  \Tr(xs)e+\Tr(xe)s=0,\quad\forall x\in U .
\end{equation}
If $e=0$ then $a=c$, $b=d$. If $e\ne0$, then since $s=a+b\ne0$ both $x\mapsto\Tr(xs)$ and $x\mapsto\Tr(xe)$ are nonzero linear functionals on $U$; if $s\ne e$ they are distinct, hence have distinct kernels, and some $x$ has $\Tr(xs)=1$, $\Tr(xe)=0$, forcing $e=0$, a contradiction. Therefore $s=e$, i.e.\ $a+c=a+b$, so $c=b$ and $d=s+c=a+b+b=a$. In every case $\{a,b\}=\{c,d\}$.
\end{proof}

For $m=4$ the argument is immediate: the space of alternating forms on $\F_2^4$ is $\F_2^6$ with $64$ forms, of which $28$ are non-degenerate, and the $\binom82=28$ pairwise differences of a Kerdock $4$-set are exactly these, hence pairwise distinct. Kerdock therefore sits at roughly half the Sidon maximum of $2k$: it is half-maximal, non-stabilizer, with vanishing physical-density nonstabilizerness. The true low-nonstabilizerness corner is the fixed-size Sidon base of Sec.~\ref{sec:coset}, not Kerdock.

\begin{theorem}[Kerdock nonstabilizerness]
\label{thm:kerdock}
For every even $m$, the standard Kerdock code $K(m)$ has
\begin{equation}
\begin{aligned}
  M_2\bigl(K(m)\bigr)&=3(m-1)-\log_2\bigl(7\cdot 2^{m-1}-6\bigr)\\
  &=2m-\log_2 7-2+O(2^{-m}).
\end{aligned}
\end{equation}
\end{theorem}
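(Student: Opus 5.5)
The plan is to chain three earlier results: the coset reduction, the Sidon property of $K_m$, and the closed-form Sidon energies.

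First, use the decomposition $K(m)=\bigcup_{B\in K_m}(Q_B+\mathrm{RM}(1,m))$. This exhibits $K(m)$ as a coset closure $R+S$ with $S=\mathrm{RM}(1,m)$ of order $d=2^{m+1}$ and $R=\{Q_B\mid B\in K_m\}$. Before applying Theorem~\ref{thm:coset}, I have to check two of its hypotheses.

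The first is that the $Q_B$ lie in distinct cosets of $S$. Two quadratic forms differ by an affine function exactly when their alternating parts coincide, and the $B\in K_m$ are distinct, so this holds.

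The second is that the projection $\bar R\subset \F_2^n/S$ is identified with $K_m$ as an additive set. The map $Q\mapsto B_Q$ from $\mathrm{RM}(2,m)$ onto the space of alternating forms is $\F_2$-linear with kernel $\mathrm{RM}(1,m)$. It therefore induces an injective group homomorphism from $\mathrm{RM}(2,m)/\mathrm{RM}(1,m)$ into $\F_2^{m(m-1)/2}$. Since $R\subset \mathrm{RM}(2,m)$, every layer $\bar R\cap(\bar R\oplus\bar x)$ is nonempty only for $\bar x$ in this subquotient. Consequently $E^{(1)}(\bar R)$ computed in $\F_2^n/S$ equals $E^{(1)}(K_m)$ computed in the space of alternating forms, because $E^{(1)}$ is invariant under injective affine embeddings. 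I expect this identification, ensuring that the ambient group does not matter, to be the only point needing care; the rest is bookkeeping.

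Theorem~\ref{thm:coset} then gives Eq.~\eqref{eq:kerdock-coset}:
\[
M_2(K(m))=M_2(|K_m\rangle)=4\ell-\log_2E^{(1)}(K_m),\qquad \ell=m-1,
\]
since $|K_m|=2^{m-1}$.

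Next, by Lemma~\ref{lem:kerdocksidon}, $K_m$ is a Sidon set of size $2^{\ell}$, so Eq.~\eqref{eq:sidonclosed} applies with $M=2^\ell$:
\[
E^{(1)}(K_m)=7\cdot 2^{2\ell}-6\cdot 2^{\ell}.
\]
Substituting, $M_2=4\ell-2\ell-\log_2\bigl(7\cdot2^{\ell}-6\bigr)+\ell=3\ell-\log_2(7\cdot2^\ell-6)$, which with $\ell=m-1$ is the stated formula.

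For the asymptotic form, write $\log_2(7\cdot2^{m-1}-6)=(m-1)+\log_2 7+\log_2\bigl(1-\tfrac{6}{7}2^{-(m-1)}\bigr)$. This gives $2(m-1)-\log_2 7+O(2^{-m})=2m-2-\log_2 7+O(2^{-m})$, as claimed. As a sanity check, at $m=4$ the formula returns $9-\log_2 50$, matching the Nordstrom--Robinson value quoted in the introduction.
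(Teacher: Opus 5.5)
Your proposal is correct and follows the paper's own argument: the coset reduction of Theorem~\ref{thm:coset} (Eq.~\eqref{eq:kerdock-coset}), the Sidon property of $K_m$ from Lemma~\ref{lem:kerdocksidon}, and the closed-form Sidon energy of Eq.~\eqref{eq:sidonclosed} with $2^\ell=2^{m-1}$. You also check explicitly that the $Q_B$ lie in distinct cosets and that $\bar R$ is identified with $K_m$ through the linear isomorphism $\mathrm{RM}(2,m)/\mathrm{RM}(1,m)\to\F_2^{m(m-1)/2}$; the paper only asserts this step, and your algebra and asymptotic expansion are right.
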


Theorem~\ref{thm:kerdock} is directly obtain via the Sidon property; Eq.~\eqref{eq:kerdock-coset} and Eq.~\eqref{eq:sidonclosed} (with $2^\ell=2^{m-1}$) give the stated value. In particular the Nordstrom--Robinson code ($m=4$) has $M_2=9-\log_2 50\approx3.356$, and as $m\to\infty$ the nonstabilizerness density $M_2/k\to1$ while the physical density $M_2/n\to0$. 

The same coset-closure argument applies to the Delsarte--Goethals codes $\mathrm{DG}(m,r)$, which are unions of cosets of $\mathrm{RM}(1,m)$ with pairwise coset differences of rank $\ge m-2(r-1)$ \cite{Hammons94TIT, Calderbank97ProcLondonMathSoc}. For $r=m/2$ one recovers the Kerdock code and the closed form above; for $r<m/2$ the differences need not be distinct and the energy must be computed directly. The Preparata code, the $Z_4$-dual of the Kerdock code \cite{Hammons94TIT}, is not a union of $\mathrm{RM}(1,m)$ cosets, so coset closure does not apply directly. Whether all Kerdock sets  are Sidon, including the Kantor-type non-desarguesian constructions of \cite{Dempwolff15JAlgebraicCombin}, reduces to a regularity statement on non-desarguesian symplectic spreads.

\begin{figure}
  \centering
  \includegraphics[width=8.6cm]{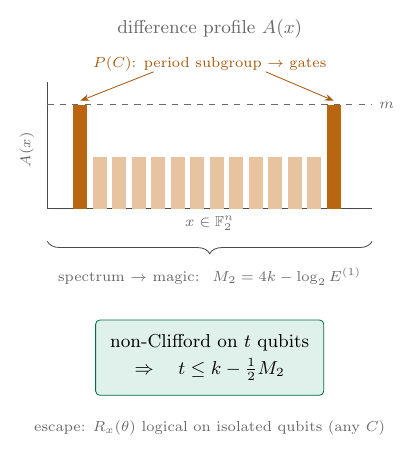}
  \caption{The layer/spectrum contrast behind the gate--nonstabilizerness tradeoff. Difference profile $A(x)=|C\cap(C\oplus x)|$ of the coset code $C=R+P(C)$, with $\bar R$ a Sidon $4$-set in $\mathbb F_2^4$ ($k=3$): two directions $x$ have $A(x)=m=8$, twelve have $A(x)=4$, and two have $A(x)=0$ (the sixteen directions are reordered for display). The top layer $P(C)=\{x\mid A(x)=m\}$ is the period subgroup; it governs the diagonal transversal gates, since a diagonal gate $D(\varphi)$ is logical iff $B(\varphi)\subseteq P(C)$. The nonstabilizerness, by contrast, is read off the fourth moment of the full spectrum, $M_2(|C\rangle)=4k-\log_2 E^{(1)}(C)$. The two resources are complementary: a diagonal transversal gate non-Clifford on $t$ coordinates forces $t\le k-M_2/2$. Maximal nonstabilizerness does not preclude transversal non-Clifford gates however; on an isolated qubit ($\Gamma e_i=0$) the non-diagonal rotation $R_x(\theta)$ is logical for any code, even for Sidon codes with $P(C)=\{0\}$.}
  \label{Fig4}
\end{figure}

\section{Nonstabilizerness and transversal gates}
\label{sec:gates}

A stabilizer code carries no nonstabilizerness and admits no transversal non-Clifford gate, and these two absences are the same absence. Nonstabilizerness lives on the quotient, in the fourth moment of the full difference spectrum $A(x)$; a diagonal transversal gate, by contrast, lives in the top layer of that spectrum (see Fig.~\ref{Fig4}), the period subgroup 
\begin{equation}
  P(C):=\bigl\{x\in\F_2^n \mid C\oplus x=C\bigr\}=\bigl\{x \mid A(x)=m\bigr\} \le \F_2^n
\label{eq:period}
\end{equation}
of directions that leave the code invariant---the very subgroup one quotients by. A gate non-Clifford on $t$ coordinates therefore forces $t$ independent directions into $P(C)$, committing at least $t$ of the $k$ logical qubits to invariance; Sec.~\ref{sec:coset} has shown that such commitment is free of magic, and Sec.~\ref{sec:quant-cost} capped each remaining logical qubit at two units, leaving at most $2(k - t)$. Nonstabilizerness and transversal non-Clifford power are thus two spendings of one and the same additive structure, and a code cannot be rich in both.

\subsection{Diagonal transversal gates}

Consider a diagonal transversal gate $D(\varphi)=\bigotimes_{i=1}^n\mathrm{diag}(1, \mathrm{e}^{\mathrm{i}\varphi_i})$. Evaluated in the graph-state basis $\{|c_w\rangle=Z^w|G\rangle\}_{w\in\F_2^n}$, which is orthonormal since $\langle G|Z^t|G\rangle=\delta_{t,0}$ \cite{Hein04PRA}, its matrix elements are graph-independent:
\begin{equation}
\begin{aligned}
  \langle c_{w'}|D(\varphi)|c_w\rangle&=\langle G|Z^{w'\oplus w}D(\varphi)|G\rangle\\
  &=2^{-n}\prod_{i=1}^{n}\Bigl[1+\mathrm{e}^{\mathrm{i}\varphi_i}(-1)^{(w\oplus w')_i}\Bigr]\\
  &=:2^{-n}F\bigl(\varphi; w\oplus w'\bigr).
\end{aligned}
\end{equation}
The graph quadratic form cancels in the diagonal sum. The $i$-th factor vanishes unless the coordinate of $d=w\oplus w'$ is compatible with $\varphi_i$, so $F(\varphi;d)\ne0$ exactly on the affine subspace
\begin{equation}
\begin{aligned}
  B(\varphi):=\Bigl\{d\in\F_2^n \mid d_i&=0\ \text{whenever}\ \varphi_i\equiv0,\\
  d_i&=1\ \text{whenever}\ \varphi_i\equiv\pi\Bigr\}.
\end{aligned}
\label{eq:B}
\end{equation}

\begin{theorem}[Diagonal transversal gates]
\label{thm:diag}
The diagonal transversal gate $D(\varphi)$ is a logical gate on the CWS code $(G,C)$ if and only if $B(\varphi)\subseteq P(C)$. Equivalently, writing $F=\{i \mid \varphi_i\not\equiv0,\pi\}$ for the non-Pauli coordinates and $z=\sum_{i:\varphi_i\equiv\pi}e_i$ for the Pauli-$Z$ support,
\begin{equation}
  e_i\in P(C) \ \forall\,i\in F
  \quad\text{and}\quad
  z\in P(C).
\end{equation}
The set of logical diagonal transversal gates depends on $C$ alone, through $P(C)$; the graph $G$ plays no role.
\end{theorem}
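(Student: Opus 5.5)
The plan is to work entirely in the orthonormal graph-state basis $\{|c_w\rangle=Z^w|G\rangle\}_{w\in\F_2^n}$ of the full $2^n$-dimensional space, using the matrix elements $\langle c_{w'}|D(\varphi)|c_w\rangle=2^{-n}F(\varphi;w\oplus w')$ computed just above. I interpret ``$D(\varphi)$ is a logical gate'' as $D(\varphi)V_C\subseteq V_C$. Since $D(\varphi)$ is unitary and $V_C$ is finite-dimensional, this is the same as $D(\varphi)V_C=V_C$.

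First I pin down the support of $F$. The $i$-th factor is $1+\mathrm e^{\mathrm i\varphi_i}(-1)^{d_i}$.
\begin{itemize}
\item If $\varphi_i\equiv0$, the factor vanishes exactly when $d_i=1$.
\item If $\varphi_i\equiv\pi$, it vanishes exactly when $d_i=0$.
\item If $i\in F$, both values $1\pm\mathrm e^{\mathrm i\varphi_i}$ are nonzero.
\end{itemize}
Hence $F(\varphi;d)\neq0$ if and only if $d\in B(\varphi)$. Moreover $B(\varphi)=z\oplus\mathrm{span}\{e_i\mid i\in F\}$, which is a nonempty affine subspace.

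Next I expand the image of a single codeword:
\[
D(\varphi)|c_w\rangle=2^{-n}\sum_{d\in B(\varphi)}F(\varphi;d)\,|c_{w\oplus d}\rangle .
\]
Every coefficient in this sum is nonzero, and the $|c_v\rangle$ for $v\in\F_2^n$ form an orthonormal basis.

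\emph{Sufficiency.} If $B(\varphi)\subseteq P(C)$, then $w\oplus d\in C$ for all $w\in C$ and $d\in B(\varphi)$. So each codeword is mapped into $V_C$, and by linearity $D(\varphi)V_C\subseteq V_C$.

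\emph{Necessity.} Suppose $D(\varphi)V_C\subseteq V_C$. For each $w\in C$, the vector $D(\varphi)|c_w\rangle$ lies in $V_C$, so its components along the $|c_v\rangle$ with $v\notin C$ must all vanish. Because every coefficient $F(\varphi;d)$ is nonzero, this forces $w\oplus d\in C$ for every $d\in B(\varphi)$. Thus $C\oplus d\subseteq C$, and since translation is a bijection, $C\oplus d=C$. Hence $d\in P(C)$ for every $d\in B(\varphi)$.

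This proves the first statement. Graph independence is automatic, because neither $F$ nor $P(C)$ involves $\Gamma$.

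For the equivalent form I use that $P(C)$ is a subgroup of $\F_2^n$. The point $z$ itself lies in $B(\varphi)$, taking all free coordinates equal to zero. Each $z\oplus e_i$ with $i\in F$ also lies in $B(\varphi)$. So $B(\varphi)\subseteq P(C)$ gives $z\in P(C)$ and then $e_i=z\oplus(z\oplus e_i)\in P(C)$. Conversely, if $z$ and all $e_i$ with $i\in F$ lie in the subgroup $P(C)$, then so does the whole coset $z\oplus\mathrm{span}\{e_i\mid i\in F\}=B(\varphi)$.

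I expect the only real care point to be the necessity step. It needs two facts: that the graph-state basis spans all of $(\mathbb C^2)^{\otimes n}$, so that leakage outside $V_C$ is detected coordinatewise, and that no coefficient $F(\varphi;d)$ with $d\in B(\varphi)$ vanishes. The rest is bookkeeping.
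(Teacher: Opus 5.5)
Your proposal is correct and follows essentially the same route as the paper: both use the graph-independent matrix elements $2^{-n}F(\varphi;w\oplus w')$, identify the support of $F$ with $B(\varphi)$, and require the components outside the code to vanish. The paper phrases this as the condition $F(\varphi;d)=0$ for all $d\notin P(C)$, since the differences between $C$ and its complement are exactly $\F_2^n\setminus P(C)$; you instead fix $d\in B(\varphi)$ and show $C\oplus d=C$ directly. Your explicit check of the equivalent form, via $z\in B(\varphi)$ and $z\oplus e_i\in B(\varphi)$, fills in the one-line remark the paper makes about the affine subspace $B(\varphi)$ and the subgroup $P(C)$.
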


Indeed, $D(\varphi)$ is logical exactly when it preserves the code space, that is, when $\langle c_{w'}|D(\varphi)|c_w\rangle=0$ for all $w\in C$ and $w'\notin C$. The difference $d=w\oplus w'$ then ranges over $\F_2^n\setminus P(C)$, so the condition is $F(\varphi;d)=0$ for all $d\notin P(C)$, which is exactly $B(\varphi)\subseteq P(C)$. The equivalent form follows because $B(\varphi)$ is the affine subspace $z+\langle e_i:i\in F\rangle$ and $P(C)$ is a linear subspace.

Theorem~\ref{thm:diag} decides when $D(\varphi)$ descends to a logical unitary; the next lemma computes that unitary and hence settles whether it is a Clifford or a non-Clifford logical gate.

\begin{lemma}[Logical action of a diagonal transversal gate]
\label{lem:logicalaction}
Let $D(\varphi)=\bigotimes_{i=1}^{n}\mathrm{diag}(1,\mathrm{e}^{\mathrm{i}\varphi_i})$ be a logical diagonal transversal gate on $(G,C)$, and write $F=\{i\mid\varphi_i\not\equiv0,\pi\}$ for its non-Pauli coordinates and $z=\sum_{i:\varphi_i\equiv\pi}e_i$ for its Pauli-$Z$ support. Its restriction to the code space acts on the codeword basis as
\begin{equation}
  U_L:=D(\varphi)\big|_{V_C}
  =\mathrm{e}^{\mathrm{i}\Theta}\,
  \mathcal T_{z}\,
  \prod_{i\in F}\mathcal R_x^{(i)}(\varphi_i),
  \quad
  \Theta=\tfrac12\sum_{i\in F}\varphi_i,
\label{eq:logicalaction}
\end{equation}
where $\mathcal T_z:|c_w\rangle\mapsto|c_{w\oplus z}\rangle$ is the logical translation induced by the physical Pauli $Z^{z}$, and for each $i\in F$
\begin{equation}
  \mathcal R_x^{(i)}(\varphi_i):=\cos\tfrac{\varphi_i}{2}I-\mathrm{i}\sin\tfrac{\varphi_i}{2}\mathcal X^{(i)},
  \quad
  \mathcal X^{(i)}:|c_w\rangle\mapsto|c_{w\oplus e_i}\rangle
\label{eq:logicalrotation}
\end{equation}
is a logical $X$-rotation by $\varphi_i$ on the logical qubit of direction $e_i$. The $\mathcal X^{(i)}$ are $|F|$ pairwise commuting independent involutions, so the rotations act on $|F|$ distinct logical qubits and commute with $\mathcal T_z$. Consequently $U_L$ is a product of single-qubit logical gates; it is a logical Clifford gate if and only if $\varphi_i\in\frac\pi2\mathbb Z$ for every $i\in F$. Equivalently, $D(\varphi)$ implements a non-Clifford logical gate if and only if some coordinate is non-Clifford, $\varphi_i\notin\frac\pi2\mathbb Z$.
\end{lemma}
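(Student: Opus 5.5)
The plan is to compute $U_L$ directly from the graph-independent matrix elements $\langle c_{w'}|D(\varphi)|c_w\rangle=2^{-n}F(\varphi;w\oplus w')$ already derived before Theorem~\ref{thm:diag}. I would factor $F(\varphi;d)$ coordinatewise and read off a sum of translations.
\begin{itemize}
\item For $\varphi_i\equiv0$, the $i$-th factor is $2\,\mathbf 1_{d_i=0}$.
\item For $\varphi_i\equiv\pi$, the factor is $2\,\mathbf 1_{d_i=1}$. Up to the unimodular phase $\mathrm e^{\mathrm i\pi}$ absorbed into the global phase, this gives $2\,\mathbf 1_{d_i=1}$.
\item For $i\in F$, the factor is $1+\mathrm e^{\mathrm i\varphi_i}(-1)^{d_i}$. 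It equals $2\mathrm e^{\mathrm i\varphi_i/2}\cos\tfrac{\varphi_i}{2}$ when $d_i=0$ and $-2\mathrm i\,\mathrm e^{\mathrm i\varphi_i/2}\sin\tfrac{\varphi_i}{2}$ when $d_i=1$.
\end{itemize}
Multiplying, the powers of $2$ give exactly $2^n$ and cancel the prefactor. The support is $B(\varphi)=z+\langle e_i:i\in F\rangle$, so every $d\in B(\varphi)$ is uniquely $d=z\oplus\bigoplus_{i\in S}e_i$ with $S\subseteq F$, since $z$ has support disjoint from $F$. Writing $\mathcal T_d:|c_w\rangle\mapsto|c_{w\oplus d}\rangle$ on the full graph basis, this gives the operator identity
\[
D(\varphi)=\mathrm e^{\mathrm i\Theta}\sum_{S\subseteq F}\prod_{i\in S}\bigl(-\mathrm i\sin\tfrac{\varphi_i}{2}\bigr)\prod_{i\in F\setminus S}\cos\tfrac{\varphi_i}{2}\;\mathcal T_{z\oplus\bigoplus_{i\in S}e_i}.
\]
Because the $\mathcal T_d$ form a commuting group representation ($\mathcal T_d\mathcal T_{d'}=\mathcal T_{d\oplus d'}$), the sum factorizes as $\mathrm e^{\mathrm i\Theta}\mathcal T_z\prod_{i\in F}(\cos\tfrac{\varphi_i}{2}I-\mathrm i\sin\tfrac{\varphi_i}{2}\mathcal T_{e_i})$. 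This is Eq.~\eqref{eq:logicalaction} on the whole Hilbert space.

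Next I would restrict to $V_C$ using Theorem~\ref{thm:diag}. Logicality gives $e_i\in P(C)$ for $i\in F$ and $z\in P(C)$, so each $\mathcal T_{e_i}$ and $\mathcal T_z$ maps $\{|c_w\rangle:w\in C\}$ to itself. The restrictions $\mathcal X^{(i)}$ are then well-defined commuting involutions, and they commute with $\mathcal T_z$. Independence follows because the $e_i$ are linearly independent vectors of the subgroup $P(C)$.

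To justify the phrase ``distinct logical qubits,'' I would use the subgroup $W=\langle e_i:i\in F\rangle\le P(C)$. Then $C$ is a union of cosets of $W$, and choosing coset representatives $R$ gives $V_C\cong\mathbb C^{W}\otimes\mathbb C^{R}$ via $|c_{r\oplus v}\rangle\mapsto|v\rangle|r\rangle$. Under this isomorphism $\mathcal X^{(i)}$ is the Pauli $X$ on the $i$-th qubit of $\mathbb C^W\cong(\mathbb C^2)^{\otimes|F|}$, and $\mathcal R^{(i)}_x$ is a genuine single-qubit $X$-rotation.

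For the Clifford criterion I would argue in this logical Pauli frame. A single-qubit $R_x(\varphi)$ is Clifford up to phase iff $\varphi\in\frac\pi2\mathbb Z$: conjugating $Z$ gives $\cos\varphi\,Z+\sin\varphi\,Y$, which is a Pauli iff $\sin\varphi\cos\varphi=0$. A tensor product of single-qubit unitaries on distinct factors is Clifford iff each factor is, by conjugating a Pauli $Z$ supported on one factor. The remaining factors $\mathcal T_z$ and $\mathrm e^{\mathrm i\Theta}$ are Clifford in any frame in which $\mathcal T_z$ acts as a logical Pauli, and they do not affect the criterion.

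The main obstacle is this last step. $\mathcal T_z$ acts on the $\mathbb C^R$ factor as a permutation of representatives, which need not be a Pauli for arbitrary nonlinear $R$. I would first try to choose $R$ to be $\mathcal T_z$-adapted, i.e. a union of $\langle z\rangle$-orbits, which is possible since $z\in P(C)$. Then $\mathcal T_z$ becomes $X$ on an additional logical qubit when $z\notin W$, and the identity when $z=0$; the case $z\in W\setminus\{0\}$ cannot occur since $z$ is disjoint from $F$. The iff then reduces entirely to the rotation angles $\varphi_i$, $i\in F$.
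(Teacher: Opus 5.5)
Your proposal is correct and follows the paper's route. Like the paper, you factor $2^{-n}F(\varphi;d)$ coordinatewise over $B(\varphi)=z+\langle e_i:i\in F\rangle$ and read off $\mathrm{e}^{\mathrm{i}\Theta}\mathcal T_z\prod_{i\in F}\mathcal R_x^{(i)}(\varphi_i)$; lifting this to a full-space operator identity via $\mathcal T_d\mathcal T_{d'}=\mathcal T_{d\oplus d'}$ is just a cleaner packaging of the same computation. One small correction: for $\varphi_i\equiv\pi$ the factor is exactly $2\,\mathbf 1_{d_i=1}$, so there is no phase to absorb, which is why $\Theta$ sums only over $F$.

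You go beyond the paper in the Clifford criterion. The paper simply asserts that $\mathcal T_z$ is a logical Pauli and that the $\mathcal R_x^{(i)}$ act on distinct logical qubits. Your $\langle z\rangle$-adapted coset frame $V_C\cong\mathbb C^{W\oplus\langle z\rangle}\otimes\mathbb C^{R_0}$ actually constructs such a frame, in which $U_L$ is a tensor product of $X$, the $R_x(\varphi_i)$, and identities. This is a genuine improvement, because for nonlinear $C$ the notion of ``logical Clifford'' depends on the chosen frame.
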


\begin{proof}
For $d=w\oplus w'$ the matrix element computed above reads
\[
  \langle c_{w'}|D(\varphi)|c_w\rangle
  =2^{-n}\prod_{i=1}^{n}\bigl(1+\mathrm{e}^{\mathrm{i}\varphi_i}(-1)^{d_i}\bigr),
\]
and by Theorem~\ref{thm:diag} it vanishes unless $d\in B(\varphi)\subseteq P(C)$. For $d\in B(\varphi)$ the Pauli coordinates contribute a factor $2$ each (with $d_i$ pinned to $0$ when $\varphi_i\equiv0$ and to $1$ when $\varphi_i\equiv\pi$), so
\[
  \langle c_{w'}|D(\varphi)|c_w\rangle
  =2^{-|F|}\prod_{i\in F}\bigl(1+\mathrm{e}^{\mathrm{i}\varphi_i}(-1)^{d_i}\bigr).
\]
For $i\in F$,
\[
  1+\mathrm{e}^{\mathrm{i}\varphi_i}(-1)^{d_i}
  =\mathrm{e}^{\mathrm{i}\varphi_i/2}\cdot
  \begin{cases}
    2\cos\frac{\varphi_i}{2}, & d_i=0,\\
    -2\mathrm{i}\sin\frac{\varphi_i}{2}, & d_i=1,
  \end{cases}
\]
which is, up to the prefactor $\mathrm{e}^{\mathrm{i}\varphi_i/2}$, $2$ times the corresponding matrix element of $\mathcal R_x^{(i)}(\varphi_i)=\cos\frac{\varphi_i}{2}I-\mathrm{i}\sin\frac{\varphi_i}{2}\mathcal X^{(i)}$ in the pair $\{|c_w\rangle,|c_{w\oplus e_i}\rangle\}$. Collecting the prefactors into $\Theta=\frac12\sum_{i\in F}\varphi_i$ and using $w'=w\oplus z\oplus\bigoplus_{i\in F}d_ie_i$ gives Eq.~\eqref{eq:logicalaction}. Finally, $\mathcal R_x^{(i)}(\varphi_i)$ is a single-qubit logical Clifford gate exactly for $\varphi_i\in\frac\pi2\mathbb Z$, and $\mathcal T_z$ is a logical Pauli operator, so $U_L$ is Clifford if and only if no coordinate is non-Clifford.
\end{proof}

Two consequences are immediate. First, a diagonal gate that is non-Pauli on the single coordinate $i$ (that is, $\varphi_i\not\equiv0,\pi$) is logical if and only if $e_i\in P(C)$, that is, if and only if $C$ is invariant under the bit flip $x\mapsto x\oplus e_i$; by Lemma~\ref{lem:logicalaction} such a gate is non-Clifford precisely when $\varphi_i\notin\frac\pi2\mathbb Z$. In particular a transversal $T$ gate on qubit $i$ is logical exactly under this invariance. Second, if $C$ is a Sidon set with $k\ge2$, then $P(C)=\{0\}$ and the only diagonal transversal gate is the identity; maximal nonstabilizerness and a trivial diagonal gate group go together. The Kerdock codes, by contrast, are not Sidon: only their quotient, the Kerdock set $K_m$, is, and coset closure (Theorem~\ref{thm:coset}) transfers nonstabilizerness, not gates, since the gate criterion of Theorem~\ref{thm:diag} is formulated on $C$ itself, not on the quotient. For the standard Kerdock code $K(m)=\bigcup_{B\in K_m}\bigl(Q_B+\mathrm{RM}(1,m)\bigr)$ the period subgroup is exactly the first-order Reed--Muller code,
\begin{equation}
  P\bigl(K(m)\bigr)=\mathrm{RM}(1,m),
\end{equation}
of order $2^{m+1}$. Indeed, every $\ell\in\mathrm{RM}(1,m)$ is a period, fixing each coset; and a translation by $Q_{B_0}+\ell$ with $\ell$ affine and $B_0\ne0$ moves the coset indexed by $B$ to the coset indexed by $B+B_0$, so it is a period only if $K_m+B_0=K_m$, which the Sidon property forbids: $K_m$ has trivial period subgroup, since a nonzero period $B_0$ would lie in $K_m$ (as $0\in K_m$), and the two distinct pairs $\{0,B_0\}$ and $\{x,x+B_0\}$ with $x\in K_m\setminus\{0,B_0\}$ would share the difference $B_0$. By Theorem~\ref{thm:diag} the diagonal transversal gates of $K(m)$ are therefore exactly the Pauli gates $Z^z$ with $z\in\mathrm{RM}(1,m)$: $Z^{\otimes n}$ (the all-ones function lies in $\mathrm{RM}(1,m)$) and the hyperplane gates $Z^{1_H}$, $H=\{x \mid \ell(x)=1\}$, are logical, each acting as the translation permutation $w\mapsto w\oplus z$ on the codeword basis and hence nontrivially on the code space; while $e_i\notin\mathrm{RM}(1,m)$ for every coordinate, since a point indicator has weight $1$ and an affine function has weight $0$, $2^{m-1}$, or $2^m$, so no non-Pauli phase is allowed. The Nordstrom--Robinson and Kerdock code states thus carry no non-Clifford (indeed no non-Pauli) diagonal transversal gate, and their diagonal gate group $\{Z^z \mid z\in\mathrm{RM}(1,m)\}$, a nontrivial Pauli subgroup, is fixed by the top layer of $A(x)$ alone. The same graph-state algebra determines the transversal Pauli group. Acting on a codeword, $X^aZ^b|c_w\rangle=(-1)^{a\cdot(b\oplus w)}|c_{w\oplus b\oplus\Gamma a}\rangle$, so $X^aZ^b$ is a logical gate if and only if $b\oplus\Gamma a\in P(C)$, and the transversal Pauli group is exactly $\{X^aZ^b \mid b\oplus\Gamma a\in P(C)\}$, determined by $P(C)$ and $\Gamma$ alone. In particular, a single-qubit $Z_i$ is a nontrivial logical operator if and only if $e_i\in P(C)$, so a code of distance $\ge2$ has $e_i\notin P(C)$ for every $i$ and, by the diagonal criterion above, admits no nontrivial single-qubit diagonal transversal gate; multi-qubit diagonal gates $Z^z$ with $z\in P(C)$ may still be logical, as the hyperplane gates of the Kerdock code show. The single-qubit statement is the combinatorial form of the Bravyi--K\"onig theorem.

\subsection{The gate--nonstabilizerness tradeoff}

The tradeoff between nonstabilizerness and transversal non-Clifford power is then a direct application of the coset and extremal theorems \cite{Bravyi13Apr}.

\begin{theorem}[Gate--nonstabilizerness tradeoff]
\label{thm:tradeoff}
If the CWS code $(G,C)$ admits a logical diagonal transversal gate $D(\varphi)$ that is non-Clifford on $t$ coordinates, i.e.\ with $\bigl|\{i:\varphi_i\notin\frac\pi2\mathbb Z\}\bigr|\ge t$, then
\begin{equation}
  M_2(C) \le 2 (k-t),
  \quad\text{equivalently}\quad
  t \le k-\frac{M_2(C)}2 .
\end{equation}
Each non-Clifford coordinate costs two units of the Sidon bound, with equality only in the affine (zero-nonstabilizerness) case.
\end{theorem}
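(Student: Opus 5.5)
The plan is to turn the non-Clifford coordinates of the gate into a subgroup of periods, quotient the code by that subgroup, and then bound the magic of the quotient with the Sidon bound.

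\emph{Step 1: a subgroup of periods.} Let $T=\{i:\varphi_i\notin\frac\pi2\mathbb Z\}$, so $|T|\ge t$. Every such $\varphi_i$ satisfies $\varphi_i\not\equiv 0,\pi$, hence $T\subseteq F$ in the notation of Theorem~\ref{thm:diag}. That theorem then gives $e_i\in P(C)$ for every $i\in T$. Since $P(C)$ is a linear subgroup of $\F_2^n$, the span
\[
  S:=\langle e_i : i\in T\rangle
\]
lies in $P(C)$. The $e_i$ are distinct unit vectors, so they are independent and $|S|=2^{|T|}\ge 2^t$. Because $t$ only enters as a lower bound, it suffices to prove the claim with $t$ replaced by $t':=|T|$, since $2(k-t')\le 2(k-t)$. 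From here on write $t$ for $|T|$.

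\emph{Step 2: $C$ is a coset closure.} Since $S\le P(C)$, we have $C\oplus s=C$ for every $s\in S$, so $C$ is a union of cosets of $S$. Choose one representative from each coset to form $R$; then $C=R+S$, the representatives are distinct, and $|R|=|C|/|S|=2^{k-t}$. This also shows $t\le k$. Theorem~\ref{thm:coset}, applied with $d=2^t$ and $\ell=k-t$, gives
\[
  M_2(|C\rangle)=M_2(|\bar R\rangle),
\]
where $\bar R\subseteq\F_2^n/S$ has size $2^{k-t}$.

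\emph{Step 3: bound the quotient.} The quotient $\F_2^n/S\cong\F_2^{n-t}$ is again a binary vector space, and all of Sec.~\ref{sec:ext} is stated for arbitrary subsets of such a space. Theorem~\ref{thm:ext} therefore applies to $\bar R$ and gives
\[
  M_2(|\bar R\rangle)\le 3(k-t)-\log_2\bigl(7\cdot2^{k-t}-6\bigr)<2(k-t)
\]
whenever $k-t\ge1$. In the degenerate case $k=t$, $\bar R$ is a single point, so $M_2=0=2(k-t)$. Alternatively, the capacity bound~\eqref{eq:capbound} applied to the $2^{k-t}$-element code $\bar R$ gives $\le 2(k-t)$ directly.

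\emph{Step 4: the equality statement.} The inequality in Step 3 is strict whenever $k>t$. So equality $M_2=2(k-t)$ forces $k=t$, which means $C=r\oplus S$ is a single coset, i.e.\ an affine code with zero nonstabilizerness. This is the equality clause of the theorem.

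\emph{Main obstacle.} The delicate points are bookkeeping rather than analysis. One is checking that the hypotheses of Theorem~\ref{thm:coset} hold: $R$ must consist of representatives of distinct cosets, and $|S|$ must divide $|C|$ so that $|R|$ is a power of two. Both follow from $C$ being a union of $S$-cosets. The other is confirming that Theorem~\ref{thm:ext} transfers verbatim to the quotient group. I expect both to go through directly, because $P(C)$ is a subgroup and $\F_2^n/S$ is itself a binary vector space.
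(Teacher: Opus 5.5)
Your proof is correct and follows the paper's argument. Non-Clifford coordinates force $e_i\in P(C)$ by Theorem~\ref{thm:diag}, coset invariance (Theorem~\ref{thm:coset}) reduces the problem to the quotient, and the Sidon bound (Theorem~\ref{thm:ext}) finishes it; the differences are minor: the paper quotients by all of $P(C)$, getting the slightly stronger intermediate bound $M_2(|C\rangle)\le 2(k-\log_2|P(C)|)$, where you quotient by $\langle e_i: i\in T\rangle$, and your separate handling of $k=t$ (where the strict "$<2k$" of Theorem~\ref{thm:ext} fails because $\bar R$ is a single point) and of the equality clause is more careful than the paper's one-line version.
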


A non-Clifford coordinate has $\varphi_i\notin\frac\pi2\mathbb Z$, hence in particular $\varphi_i\not\equiv0,\pi$, so by Theorem~\ref{thm:diag} its coordinate vector lies in $P(C)$; hence $\langle e_{i_1},\dots,e_{i_t}\rangle\le P(C)$ and $d:=|P(C)|\ge2^t$. Writing $C=R+P(C)$ with $|R|=2^{k-r}$ and $r=\log_2 d$, Theorem~\ref{thm:coset} gives $M_2(C)=M_2(\bar R)$, and Theorem~\ref{thm:ext} gives $M_2(\bar R)<2(k-r)\le2(k-t)$. The nonstabilizerness budget and the reach of the transversal non-Clifford diagonal gates are thus complementary.

Note that the capacity $M_2^{\max}(C)$ of Eq.~\eqref{eq:capacity} does not sharpen the tradeoff just established. The bound $M_2^{\max}(C)\le 2(k-t)$ is false, and the capacity cannot replace $M_2(|C\rangle)$ in Theorem~\ref{thm:tradeoff}. For the affine code $C=\F_2^k$ on the empty graph, every coordinate vector lies in $P(C)=\F_2^k$, so by Theorem~\ref{thm:diag} the transversal gate $T^{\otimes k}$ is logical---non-Clifford on all $t=k$ coordinates---yet by Proposition~\ref{prop:affinecap}, $M_2^{\max}(C)=\mu_k\approx k>0=2(k-t)$. The obstruction is structural: the gate criterion of Theorem~\ref{thm:diag} constrains only the top layer $P(C)$ of the difference multiplicity, whereas the capacity is governed by its full fourth-moment spectrum, and the two are independent data of one function. A diagonal gate non-Clifford on $t$ coordinates forces $\langle e_{i_1},\dots,e_{i_t}\rangle\le P(C)$, hence $r:=\log_2|P(C)|\ge t$, and the code space then contains the $P(C)$-orbit of any codeword, an affine copy of $\mathbb C^{2^r}$, so $M_2^{\max}(C)\ge\mu_r\ge\mu_t$; the gate data bound the capacity from below, not above. Its only universal law remains $M_2^{\max}(C)\le2k$ (Eq.~\eqref{eq:capbound}), saturated by Sidon codes, whose diagonal gate group is trivial. Maximal nonstabilizerness and maximal transversal non-Clifford power stay complementary, but the capacity quantifies this only through the extremal bound $2k$.

\subsection{Beyond diagonal gates}

The diagonal criterion extends to general single-qubit transversal gates, and this extension resolves a subtlety left open in the introduction. Since $M_2>0$ is equivalent to $C$ non-affine, one might hope that it certifies the existence of a transversal non-Clifford gate. It does not: nonstabilizerness and transversal gates are read off independent data of $A(x)$, as just shown after Theorem~\ref{thm:tradeoff}. The correct criterion is an axis condition on $P(C)$, which we now state.

Let $V_i=\alpha I+\beta X+\gamma Y+\delta Z$ be a single-qubit unitary on qubit $i$. On a non-isolated qubit (that is, $\Gamma e_i\ne0$), the four displacements $0,e_i,\Gamma e_i,e_i+\Gamma e_i$ are pairwise distinct, and $V_i$ is a logical gate if and only if
\begin{subequations}\label{eq:single-criterion}
  \begin{align}
    \delta\ne0\ &\Rightarrow\ e_i\in P(C),\\
    \beta\ne0\ &\Rightarrow\ \Gamma e_i\in P(C),\\
    \gamma\ne0\ &\Rightarrow\ e_i+\Gamma e_i\in P(C).
  \end{align}
\end{subequations}
The coefficient $\alpha$ is unconstrained. This follows by expanding $V_i$ in the Pauli basis and acting on a codeword; each Pauli component displaces the codeword by one of the four vectors above, and the gate is logical exactly when every nonzero component displaces codewords within the code, which is the condition that the corresponding vector lie in $P(C)$. As a consequence, the axis rotations
\begin{subequations}
  \begin{align}
    R_z(\theta)_i\ \text{logical}&\iff e_i\in P(C),\\
    R_x(\theta)_i\ \text{logical}&\iff \Gamma e_i\in P(C),\\
    R_y(\theta)_i\ \text{logical}&\iff e_i+\Gamma e_i\in P(C)
  \end{align}
\end{subequations}
for $\theta\not\equiv0\pmod{2\pi}$, and each is a non-Clifford logical gate exactly when $\theta\notin\frac\pi2\mathbb Z$ (at $\theta=\pm\frac\pi4$ it is $T^{\pm1}$ up to global phase and a Clifford axis conjugation). The $R_z$ case is the diagonal criterion of Theorem~\ref{thm:diag}. Hence a code carries a transversal single-qubit non-Clifford gate on qubit $i$ if and only if $\{e_i,\Gamma e_i,e_i+\Gamma e_i\}\cap P(C)\ne\varnothing$. The general multi-qubit product-gate criterion, where distinct displacements can coalesce and phases cancel, is included in Appendix~\ref{sec:appgates}.

Two remarks close the section. First, isolated vertices are exceptional. If $\Gamma e_i=0$, then $e_i+\Gamma e_i=e_i$, the $X$-axis condition holds vacuously, and an $X$-rotation $R_x(\theta)_i$ is logical for any $C$, acting as $|c_w\rangle\mapsto \mathrm{e}^{\mathrm{i}(\theta/2)(-1)^{w_i+1}}|c_w\rangle$. On the empty graph every qubit is isolated, so every CWS code, including the Sidon codes of maximal nonstabilizerness whose diagonal gate group is trivial, carries the transversal non-Clifford gates $R_x(\theta)_i$ (for $\theta\notin\frac\pi2\mathbb Z$) for all $i$. This makes precise the observation that opens the section, i.e., maximal nonstabilizerness and a trivial diagonal gate group do not preclude transversal non-Clifford power; the gate is simply non-diagonal. Nothing here conflicts with the Bravyi--K\"onig theorem, because empty-graph codewords are product states and these are distance-$1$ codes that detect no errors. Second, when independent forced axis vectors on $t$ non-isolated qubits span a rank-$\rho$ subgroup of $P(C)$, the same coset-closure argument gives $M_2(C)<2(k-\rho)\le2(k-\dim P(C))$, the axis-generalization of Theorem~\ref{thm:tradeoff}. Both criteria were verified by exact matrix-element enumeration on small examples, cf. Appendix~\ref{sec:appledger}.

\section{Conclusion}
\label{sec:conc}

We have shown that the nonstabilizerness of a CWS code state is governed by a single classical object, the difference-set multiplicity $A(x)$ of its classical code $C$. Its fourth moment $E^{(1)}(C)$ gives the order-$2$ SRE of the code state through $M_2=4k-\log_2 E^{(1)}(C)$, and its $2\alpha$-th moment gives the order-$\alpha$ entropy. Weighting each parallelogram by the amplitudes and phases of an arbitrary code-space state $|\psi\rangle$ yields the same identity, $M_2(|\psi\rangle)=-\log_2\mathcal E^{(1)}_{\psi}(C)$, in which $M_2(|C\rangle)$ is the flat benchmark and $M_2^{\max}(C)$ the extremum. Nonstabilizerness thus reduces to a question in additive combinatorics, and the extremal, structural, and gate results of this paper all follow from that reduction.

Physically, this is a quantitative form of the Bravyi--K\"onig intuition. The nonstabilizerness of a code state never exceeds $\sim2k$, no matter how large the physical block length, whereas a generic $n$-qubit state carries $\sim n$ bits, so the physical overhead never enters the nonstabilizerness budget. Nonstabilizerness is invariant under coset closure and depends only on the resulting quotient. Stabilizing a code therefore costs it nothing, and there exist non-stabilizer codes with constant nonstabilizerness and arbitrarily many logical qubits that carry no $Z_4$-linear structure. The most magical codes are offered by the Sidon sets, whose difference structure is as sparse as possible. The bound is attained exactly whenever a Sidon set of the required size exists, so maximal nonstabilizerness is precisely the absence of additive structure. These two extremes, the affine and the Sidon codes, frame the entire dictionary.

Three concrete payoffs follow. First, exact values. The Kerdock family, the canonical nonlinear code family, carries nonstabilizerness $3(m-1)-\log_2(7\cdot2^{m-1}-6)$ for every even $m$, the key step being the proof that the Kerdock set is a Sidon set. Second, operational readings. The quantity $\frac{M_2}{2\log_2 c}$ lower-bounds the number of non-Clifford gates that any realization of the encoding must inject, while $M_2/2$ lower-bounds the cost of classically simulating the code state; for Sidon codes the stabilizer extent is pinned within a constant factor $\sqrt7$ of this bound. Third, a quantitative gate--nonstabilizerness budget. A diagonal transversal gate that is non-Clifford on $t$ coordinates forces $t\le k-M_2/2$, upgrading the Bravyi--K\"onig dichotomy from a qualitative certificate into a quantitative bound. A stabilizer code has vanishing nonstabilizerness, and for a non-stabilizer code the number of coordinates on which a transversal diagonal gate can be non-Clifford is at most $k-M_2/2$, with $M_2$ read off $C$ alone.

Known exact results fit into the same picture (Appendix~\ref{sec:related}). The weight-layer specialization of Eq.~\eqref{eq:M2full} reproduces the Dicke-state nonstabilizerness and extends it to arbitrary excitation number, while the hypergraph-state formula of \cite{Chen24Quantum} is the phase-side counterpart of the same fourth-moment identity. The design problem is thereby recast. Prescribing the nonstabilizerness of a code family becomes a constructive problem in additive combinatorics, in which Sidon sets maximize it, coset closure preserves it, and the classical machinery of energies and sumset estimates transfers directly to the quantum resource.

The same machinery extends in several directions. At higher R\'enyi orders the extremal theory runs in parallel, with $(2\alpha-2)$-design (BCH) sets playing the role of Sidon sets. On the gate side, the single-qubit classification and the product-gate criterion point to a common form for the full transversal group, namely an invariance condition on $A(x)$ together with a symplectic constraint from $\Gamma$, extending the stabilizer-code classification of \cite{Anderson14QuantumInfComput} to the CWS family. Beyond the code state, the nonstabilizerness capacity $M_2^{\max}(C)=\max_{\psi\in V_C}M_2(\psi)$ of the code space \cite{Cepollaro25arXiv} obeys $M_2^{\max}(C)\le2k$ and equals $3k-\log_2(6\cdot2^k-6)$ for Sidon codes and the maximal $k$-qubit SRE $\mu_k$ for affine codes. The code state $|C\rangle$ is the representative flat point of this family. Its nonstabilizerness differs from that of an arbitrary code-space state by the single shift \eqref{eq:shift}, whose phase part is exactly quadratic and whose amplitude part is first order except on the extremal Sidon and affine families, where the benchmark is a second-order stationary point (Proposition~\ref{prop:stability}). Finally, the nonstabilizerness of Sidon and coset-closure code states is within reach of small devices through Bell difference sampling \cite{Xiao26Aug, Chen25STOC}, which offers an experimental test of the dictionary.

\section*{Acknowledgments}
This work is supported by the Fundamental and Interdisciplinary Disciplines Breakthrough Plan of the Ministry of Education of China (Grant Nos.~JYB2025XDXM115 and JYB2025XDXM201) and the Beijing Science and Technology Planning Project (Grant No.~Z25110100040000).

\section*{Data Availability}
All data supporting the findings of this article are available within the paper. The code used for exhaustive enumeration is available from the authors upon reasonable request.

\appendix

\section{Detailed proof of the graph--code decoupling Lemma}
\label{sec:decoupling-proof}

Here we prove Lemma~\ref{lem:decouple}. The proof uses the definition of the graph state, the single-qubit Clifford relations $HZ = XH$ and $XH = HZ$, and elementary graph theory (Vizing's theorem).

Step 1: The graph state in the Hadamard basis.
The $n$-qubit graph state associated to $G=(V,E)$ is
\begin{equation}\label{eq:apx-gs}
  |G\rangle = \prod_{(i,j)\in E} \mathrm{CZ}_{ij}\, |+\rangle^{\otimes n},
\end{equation}
where the product runs over all undirected edges and the order is immaterial because all $\mathrm{CZ}$ gates commute. Using $|+\rangle = H|0\rangle$ and $H^{\otimes n} H^{\otimes n} = I$, we rewrite
\begin{equation}\label{eq:apx-gs2}
  |G\rangle = \Bigl(\prod_{e\in E} \mathrm{CZ}_e\Bigr) H^{\otimes n} |0\rangle^{\otimes n}.
\end{equation}

The encoded state is $|\psi_{\mathrm{enc}}\rangle = \sum_{w\in C} \psi_w Z^w |G\rangle$, with $Z^w \equiv \bigotimes_{i=1}^n Z^{w_i}$. Apply $H^{\otimes n}$:
\begin{equation}\label{eq:apx-Hpsi}
  H^{\otimes n}|\psi_{\mathrm{enc}}\rangle = \sum_{w\in C} \psi_w H^{\otimes n} Z^w |G\rangle.
\end{equation}
The key single-qubit identity is $H Z = X H$ (which follows from $H Z H = X$ and $H^2=I$). Tensoring over $n$ qubits,
\begin{equation}\label{eq:apx-HZ-XH}
  H^{\otimes n} Z^w = \bigotimes_{i=1}^n H Z^{w_i} = \bigotimes_{i=1}^n X^{w_i} H = X^w H^{\otimes n},
\end{equation}
where $X^w \equiv \bigotimes_{i=1}^n X^{w_i}$. Substituting \eqref{eq:apx-HZ-XH} into \eqref{eq:apx-Hpsi}:
\begin{equation}\label{eq:apx-Hpsi2}
  H^{\otimes n}|\psi_{\mathrm{enc}}\rangle = \sum_{w\in C} \psi_w X^w H^{\otimes n} |G\rangle.
\end{equation}

Step 2: Conjugation of the entangling circuit.
Insert Eq.~\eqref{eq:apx-gs2} for $|G\rangle$ into $H^{\otimes n}|G\rangle$:
\begin{equation}\label{eq:apx-conj}
\begin{aligned}
  H^{\otimes n}|G\rangle & = H^{\otimes n} \Bigl(\prod_{e\in E} \mathrm{CZ}_e\Bigr) H^{\otimes n} |0\rangle^{\otimes n}\\
  & = \prod_{e\in E} \bigl(H^{\otimes n} \mathrm{CZ}_e H^{\otimes n}\bigr) |0\rangle^{\otimes n},
\end{aligned}
\end{equation}
where we used $H^{\otimes n} H^{\otimes n} = I$ to insert the identity between successive $\mathrm{CZ}$ gates and convert the conjugated product into a product of conjugates.

For a specific edge $e=(i,j)$, the gate $\mathrm{CZ}_{ij}$ acts as identity on all qubits except $i,j$. The Hadamards on the unaffected qubits commute past $\mathrm{CZ}_{ij}$ and are cancelled by their counterparts from the right: for $k\notin\{i,j\}$, $H_k \mathrm{CZ}_{ij} H_k = \mathrm{CZ}_{ij}$ (since $H_k$ acts on a different qubit). Hence
\begin{equation}\label{eq:apx-edgegate}
  H^{\otimes n} \mathrm{CZ}_{ij} H^{\otimes n} = (H_i \otimes H_j)\mathrm{CZ}_{ij}(H_i \otimes H_j) \otimes I_{\widehat{ij}} \equiv \tilde{U}_{ij},
\end{equation}
where $I_{\widehat{ij}}$ is the identity on all qubits other than $i,j$. We define the decoupling edge gate
\begin{equation}\label{eq:apx-tildeU-edge}
  \tilde{U}_{ij} \equiv (H_i \otimes H_j)\,\mathrm{CZ}_{ij}\,(H_i \otimes H_j),
\end{equation}
which is a two-qubit Clifford unitary (product of Hadamards and CZ). The full decoupling circuit is
\begin{equation}\label{eq:apx-tildeUG}
  \tilde{U}_G \equiv \prod_{(i,j)\in E} \tilde{U}_{ij},
\end{equation}
and \eqref{eq:apx-conj} becomes simply
\begin{equation}\label{eq:apx-HG}
  H^{\otimes n}|G\rangle = \tilde{U}_G |0\rangle^{\otimes n}.
\end{equation}

For readers who prefer concrete matrix elements, $\mathrm{CZ}_{ij} = I - 2|11\rangle\langle 11|_{ij}$ in the computational basis. Since $H|0\rangle = |+\rangle$ and $H|1\rangle = |-\rangle$, one finds the manifestly Clifford representation
\begin{equation}\label{eq:apx-tildeU-explicit}
\begin{aligned}
  \tilde{U}_{ij} & = |+\rangle\langle+|_i \otimes I_j + |-\rangle\langle-|_i \otimes X_j \\
  & = I_i \otimes |+\rangle\langle+|_j + X_i \otimes |-\rangle\langle-|_j,
\end{aligned}
\end{equation}
which is a controlled-$X$ gate in the $|+\rangle,|-\rangle$ basis. The two expressions in \eqref{eq:apx-tildeU-explicit} are equivalent; either makes the commutativity argument of Step~4 transparent.

Step 3: The commutation lemma $[X^w, \tilde{U}_G]=0$.
This is the structural heart of the proof.

\begin{lemma}\label{lem:apx-commute}
  For any edge $e=(i,j)$, the single-qubit Pauli operators $X_i$ and $X_j$ each commute with $\tilde{U}_{ij}$.
\end{lemma}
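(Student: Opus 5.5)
The plan is to move $X_i$ and $X_j$ through $\tilde U_{ij}$ by conjugating with the Hadamards, so that the question becomes whether $Z_i$ and $Z_j$ commute with $\mathrm{CZ}_{ij}$. That holds because $\mathrm{CZ}_{ij}$ is diagonal in the computational basis.

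Concretely, I will use the single-qubit relation $HXH=Z$, so that $X_iH_i=H_iZ_i$, while $X_i$ commutes with $H_j$ because the two act on different qubits. Then
\[
X_i\tilde U_{ij}
=X_i(H_i\otimes H_j)\,\mathrm{CZ}_{ij}\,(H_i\otimes H_j)
=(H_i\otimes H_j)\,Z_i\,\mathrm{CZ}_{ij}\,(H_i\otimes H_j).
\]
Since $Z_i$ and $\mathrm{CZ}_{ij}=I-2|11\rangle\langle11|_{ij}$ are both diagonal in the computational basis, they commute, and we may write $Z_i\,\mathrm{CZ}_{ij}=\mathrm{CZ}_{ij}\,Z_i$. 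Then $Z_i(H_i\otimes H_j)=(H_i\otimes H_j)X_i$ moves the operator out on the right, giving $X_i\tilde U_{ij}=\tilde U_{ij}X_i$. The case of $X_j$ is identical, with the roles of $i$ and $j$ exchanged; this relies on the symmetry of $\mathrm{CZ}_{ij}$ in its two qubits.

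As an independent check, I would also verify the claim from the explicit form \eqref{eq:apx-tildeU-explicit}, namely
\[
\tilde U_{ij}=|+\rangle\langle+|_i\otimes I_j+|-\rangle\langle-|_i\otimes X_j .
\]
For $X_j$: it commutes with both $I_j$ and $X_j$, and it acts on a different qubit from the projectors $|\pm\rangle\langle\pm|_i$, so it commutes with $\tilde U_{ij}$. For $X_i$: the states $|\pm\rangle$ are eigenvectors of $X$ with eigenvalues $\pm1$, so $X_i$ commutes with each projector $|\pm\rangle\langle\pm|_i$, and it acts trivially on qubit $j$.

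There is no real obstacle in this argument. The only care needed is in the bookkeeping of the Hadamard conjugation on each side, and in noting explicitly that $\mathrm{CZ}$ is symmetric in $i$ and $j$, since that symmetry is what makes the $X_j$ case follow from the $X_i$ case. The lemma then feeds Step~3: a qubit not in $\{i,j\}$ commutes with $\tilde U_{ij}$ trivially, so every $X^w$ commutes with every edge gate, and hence $[X^w,\tilde U_G]=0$.
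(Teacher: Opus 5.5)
Your proposal is correct and follows the paper's proof essentially verbatim. You push $X_i$ through the Hadamards to get $Z_i$, commute it past the diagonal $\mathrm{CZ}_{ij}$, convert back to $X_i$ on the right, and obtain the $X_j$ case by exchanging $i$ and $j$; your explicit-form check via $|\pm\rangle\langle\pm|_i$ is the alternative the paper itself points to in Eq.~\eqref{eq:apx-tildeU-explicit}. One small remark: the $X_j$ case needs only that $Z_j$ is diagonal, not the $i\leftrightarrow j$ symmetry of $\mathrm{CZ}_{ij}$, though invoking the symmetry does no harm.
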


\begin{proof}
  Using $X H = H Z$, we compute
  \begin{equation}\label{eq:apx-Xi-tildeU}
    \begin{aligned}
      X_i \tilde{U}_{ij}
      &= X_i (H_i \otimes H_j)\mathrm{CZ}_{ij}(H_i \otimes H_j)  \\
      &= (H_i Z_i \otimes H_j)\mathrm{CZ}_{ij}(H_i \otimes H_j)   \\
      &= (H_i \otimes H_j)\,(Z_i \otimes I_j)\mathrm{CZ}_{ij}(H_i \otimes H_j).
    \end{aligned}
  \end{equation}
  Both $Z_i\otimes I_j$ and $\mathrm{CZ}_{ij}$ are diagonal in the computational basis, hence they commute:
  \begin{equation}\label{eq:apx-Z-CZ-commute}
    (Z_i \otimes I_j) \mathrm{CZ}_{ij} = \mathrm{CZ}_{ij}(Z_i \otimes I_j).
  \end{equation}
  Inserting this into \eqref{eq:apx-Xi-tildeU} and using $(H_i\otimes H_j)^2 = I$:
  \begin{equation}\label{eq:apx-Xi-final}
    \begin{aligned}
      X_i \,\tilde{U}_{ij}
      &= (H_i \otimes H_j)\mathrm{CZ}_{ij}(Z_i \otimes I_j)(H_i \otimes H_j) \\
      &= (H_i \otimes H_j)\mathrm{CZ}_{ij}(H_i \otimes H_j) (H_i \otimes H_j)\\
      &\quad\times(Z_i \otimes I_j)(H_i \otimes H_j)\\
      &= \tilde{U}_{ij}(H_i Z_i H_i \otimes I_j) \\
      &= \tilde{U}_{ij}X_i,
    \end{aligned}
  \end{equation}
  where the last equality uses $H Z H = X$. By symmetry, $[X_j, \tilde{U}_{ij}] = 0$.
\end{proof}

For any qubit $k\notin\{i,j\}$, $X_k$ acts on a different Hilbert space than $\tilde{U}_{ij}$ and trivially commutes. Consequently, for \emph{every} $w\in\mathbb{F}_2^n$, $[X^w, \tilde{U}_{ij}] = 0$, and since $\tilde{U}_G$ is the product of edge gates $\tilde{U}_{ij}$,
\begin{equation}\label{eq:apx-commute-G}
  [X^w, \tilde{U}_G] = 0 \qquad \forall\, w\in\mathbb{F}_2^n.
\end{equation}

Step 4: Completion of the main identity.
Insert \eqref{eq:apx-HG} into \eqref{eq:apx-Hpsi2}:
\begin{equation}\label{eq:apx-almost}
  H^{\otimes n} |\psi_{\mathrm{enc}}\rangle  = \sum_{w\in C} \psi_w X^w \tilde{U}_G |0\rangle^{\otimes n}.
\end{equation}
By the commutation lemma \eqref{eq:apx-commute-G}, $X^w\tilde{U}_G = \tilde{U}_G X^w$, hence
\begin{equation}\label{eq:apx-final}
  H^{\otimes n} |\psi_{\mathrm{enc}}\rangle
  = \tilde{U}_G \sum_{w\in C} \psi_w X^w |0\rangle^{\otimes n}
  = \tilde{U}_G \sum_{w\in C} \psi_w |w\rangle,
\end{equation}
where we used $X^w|0\rangle^{\otimes n} = \bigotimes_{i=1}^n X^{w_i}|0\rangle
= \bigotimes_{i=1}^n |w_i\rangle = |w\rangle$. Equation~\eqref{eq:apx-final} is precisely the decoupling identity $H^{\otimes n}|\psi_{\mathrm{enc}}\rangle = \tilde{U}_G \sum_{w\in C}\psi_w|w\rangle$, establishing Lemma~\ref{lem:decouple}. \hfill $\square$

Step 5: Circuit depth via Vizing's theorem.
Each $\tilde{U}_{ij}$ acts on qubits $i,j$. Two such gates on disjoint edges act on disjoint qubit pairs and can be executed in the same layer. Hence the minimum number of parallel layers equals the edge-chromatic number $\chi'(G)$: the minimal number of colors needed to assign each edge a color such that incident edges have different colors.

Vizing's theorem \cite{Diestel2017} states that for any simple graph,
\begin{equation}\label{eq:apx-vizing}
  \Delta(G) \le \chi'(G) \le \Delta(G) + 1,
\end{equation}
where $\Delta(G) = \max_{v\in V} \deg(v)$. The lower bound is trivial (all edges at a degree-$\Delta$ vertex need distinct colors). Hence $D(\tilde{U}_G) = \chi'(G) \le \Delta(G)+1$.

For bounded-degree graphs ($\Delta = O(1)$), the circuit depth is $O(1)$, independent of the code length $n$. For the square lattice ($\Delta=4$), $\chi'=4$ by K\"{o}nig's theorem \cite{Diestel2017} on bipartite graphs; for the honeycomb lattice ($\Delta=3$), $\chi'=3$; for the triangular lattice ($\Delta=6$), $\chi'=6$ or $7$, all constant.

\section{Higher R\'enyi orders}
\label{sec:appalpha}

We record the order-$\alpha$ generalization of the dictionary and its coset invariance. For $S\subseteq\F_2^n$ and integer $\alpha\ge2$ define the order-$2\alpha$ energy
\begin{equation}
  E_\alpha(S):=\#\bigl\{(w,w')\in S^{\alpha}\times S^{\alpha}\mid\
  w_1\oplus\cdots\oplus w_\alpha=w'_1\oplus\cdots\oplus w'_\alpha\bigr\},
\end{equation}
so that $E_2=E$, and the $\alpha$-fold layer energy
\begin{equation}
  E^{(\alpha-1)}(C):=\sum_{x}E_\alpha\bigl(C\cap(C\oplus x)\bigr),
\label{eq:Ealpha}
\end{equation}
reducing to Eq.~\eqref{eq:E1} at $\alpha=2$.

For the subset state $|C\rangle$ with $|C|=m=2^k$,
\begin{equation}
  \sum_{P\in\mathcal P_n}\langle C|P|C\rangle^{2\alpha}=\frac{2^{n}}{m^{2\alpha}}E^{(\alpha-1)}(C),
\end{equation}
\begin{equation}
  M_\alpha(|C\rangle)=\frac{2\alpha k-\log_2 E^{(\alpha-1)}(C)}{\alpha-1},
\end{equation}
where $M_\alpha=\frac{1}{\alpha-1}[n-\log_2\sum_P\langle C|P|C\rangle^{2\alpha}]$. The derivation repeats that of Eq.~\eqref{eq:M2full} with the fourth moment replaced by the $2\alpha$-th: for $S_x=C\cap(C\oplus x)$ one has $\langle C|X^xZ^z|C\rangle=\frac1m\widehat S_x(z)$ and the $2\alpha$-th-moment form of Parseval, $\sum_z|\widehat S_x(z)|^{2\alpha}=2^nE_\alpha(S_x)$, obtained by expanding $|\widehat S_x(z)|^{2\alpha}$ and projecting onto $\bigoplus_i w_i\oplus\bigoplus_i w'_i=0$. If $C$ is affine, then $S_x$ is $C$ or empty, $E_\alpha(C)=m^{2\alpha-1}$, and $E^{(\alpha-1)}(C)=m^{2\alpha}$, so $M_\alpha=0$ for all $\alpha$, as required of a stabilizer state. The weighted dictionary of Lemma~\ref{lem:dict} lifts to every order by the same derivation. With $f_x(w)=\overline{\psi_{w\oplus x}}\psi_w$ as there, define the weighted order-$2\alpha$ layer energy
\begin{equation}
  \mathcal E^{\psi}_\alpha(S_x):=\sum_{\substack{w^{(1)},\dots,w^{(\alpha)}\in S_x\\w'^{(1)},\dots,w'^{(\alpha)}\in S_x\\ \bigoplus_i w^{(i)}=\bigoplus_j w'^{(j)}}}
  \prod_{i=1}^{\alpha}f_x(w^{(i)})\prod_{j=1}^{\alpha}\overline{f_x(w'^{(j)})},
\end{equation}
and $\mathcal E^{(\alpha-1)}_{\psi}(C):=\sum_x\mathcal E^{\psi}_\alpha(S_x)$, so that $\sum_{P\in\mathcal P_n}\langle\psi|P|\psi\rangle^{2\alpha}=2^n\mathcal E^{(\alpha-1)}_{\psi}(C)$ and
\begin{equation}
  M_\alpha(|\psi\rangle)=-\frac{1}{\alpha-1}\log_2\mathcal E^{(\alpha-1)}_{\psi}(C),
\end{equation}
reducing to $\mathcal E^{(\alpha-1)}_{\psi}(C)=m^{-2\alpha}E^{(\alpha-1)}(C)$ for the uniform state.

The coset-invariance theorem also lifts verbatim. A union $T=\bigsqcup_{\bar t\in T}(t+S)$ of cosets of a subgroup $S$ of order $d$ has $E_\alpha(S\cdot T)=d^{2\alpha-1}E_\alpha(T)$: in the quotient the condition $\bigoplus_i\bar w_i=\bigoplus_i\bar w'_i$ selects the cosets, and for each quotient solution the subgroup equation leaves $2\alpha-1$ of the $2\alpha$ subgroup coordinates free. Applying this to the layer $(R+S)\cap((R+S)\oplus x)$, which is a coset union with coset set $\bar R\cap(\bar R\oplus\bar x)$, and summing over the $d$ elements $x$ per coset,
\begin{equation}
  E^{(\alpha-1)}(R+S)=d^{2\alpha} E^{(\alpha-1)}(\bar R),\;
  M_\alpha(|R+S\rangle)=M_\alpha(|\bar R\rangle),
\end{equation}
the factor $2\alpha\log_2 d$ cancelling against $\log_2 d^{2\alpha}$. Thus $M_\alpha$ is a coset-level invariant for every integer $\alpha\ge2$. We note that the naively nested sum $\sum_{x_1,\dots,x_{\alpha-1}}E(C\cap\bigcap_i(C\oplus x_i))$ is a different object for $\alpha\ge3$, scaling as $d^{\alpha+2}$ rather than $d^{2\alpha}$ under coset closure, and it does not enter the $M_\alpha$ dictionary; the two objects coincide only at $\alpha=2$.

\section{Relation to Dicke and hypergraph states}
\label{sec:related}

The dictionary of Eqs.~\eqref{eq:M2full} and \eqref{eq:Malpha} is a general one for subset states, and it unifies two earlier exact results; locating the overlap makes precise what the general formula adds.

\emph{Dicke states: the weight-layer corollary.} The Dicke state $|D_n^k\rangle$ is the subset state of the weight-$k$ layer $C_k=\{x\in\F_2^n:|x|=k\}$, a subset of size $m=\binom nk$. Its difference profile collapses to a weight profile: the slice $C_k\cap(C_k\oplus x)$ is empty unless $|x|$ is even, and for $|x|=2j$ a codeword $w$ paired by $x$ must use exactly $j$ of the $1$'s of $x$, so
\begin{equation}
  A(x)=\binom{2j}{j}\binom{n-2j}{k-j},\quad 0\le j\le\min(k,n-k) .
\end{equation}
Hence, with $x_{2j}$ any fixed vector of weight $2j$,
\begin{equation}
  E^{(1)}(C_k)=\sum_{j=0}^{\min(k,n-k)}\binom{n}{2j} E\bigl(C_k\cap(C_k\oplus x_{2j})\bigr),
\end{equation}
a single weight-sum in which permutation symmetry collapses the $4^n$-term Pauli sum exactly as the symmetric Pauli representatives do in Ref.~\cite{Passarelli24PRA}, except that here no symmetry is assumed. The $k=1$ case is the W state, whose support $\{e_1,\dots,e_n\}$ is a Sidon set, so Eq.~\eqref{eq:sidonclosed} gives the one-line closed form
\begin{equation}
  \begin{aligned}
    M_2(|W_n\rangle)&=4\log_2 n-\log_2(7n^2-6n)\\
    &=3\log_2 n-\log_2(7n-6),
  \end{aligned}
\end{equation}
the known value of Refs.~\cite{Odavic23SciPost,Catalano24arXiv}, attained here as the extremal case of the Sidon bound noted after Eq.~\eqref{eq:M2full}. The weight-sum likewise evaluates the next two layers to
\begin{equation}
\begin{aligned}
  M_2(|D_n^2\rangle)&=3\log_2[n(n-1)]\\
  &\quad-\log_2(91n^2-427n+492)-2,\\
  M_2(|D_n^3\rangle)&=3\log_2[n(n-1)(n-2)]\\
  &\quad-\log_2(1645n^3-18921n^2+71708n-89244)\\
  &\quad-2\log_2 6,
\end{aligned}
\end{equation}
reproducing the closed forms obtained in Ref.~\cite{Liu25TIMPS} by an MPS polynomial-fitting method that is noted there to extend step by step in $k$; Eq.~\eqref{eq:M2full} evaluates every $k$ at once, in closed form, from the weight-sum above. Dicke states are thus one orbit of the symmetric group among all subsets, and none of the results below---Sidon extremality, coset invariance, Kerdock values, the gate tradeoff---has an analogue in the symmetric setting, since each uses the full difference structure of a code rather than a weight profile.

\emph{Hypergraph states: the phase-side counterpart.} A hypergraph state $|H\rangle=2^{-n/2}\sum_x(-1)^{f(x)}|x\rangle$ has full support and $\pm1$ phases, while a subset state has flat phases and prescribed support; the two families are the two sides of one fourth-moment identity. In Ref.~\cite{Chen24Quantum} the correlator $\langle H|X^xZ^z|H\rangle=2^{-n}\sum_y(-1)^{f(y)+f(y\oplus x)+z\cdot y}$ is the Walsh transform of the derivative \emph{sign} function $(-1)^{f(\cdot)+f(\cdot\oplus x)}$, organized through induced hypergraphs; here $c(x,z)=m^{-1}\widehat S_x(z)$ is the Walsh transform of the slice \emph{indicator}, whose fourth moment is the additive energy $E(S_x)$. Equation~\eqref{eq:M2full} is therefore the indicator-side companion of the sign-side formula of Ref.~\cite{Chen24Quantum}, and Eq.~\eqref{eq:Malpha} its all-order counterpart. The two meet on the stabilizer side: affine $C$ is the graph-state case, and both formulas give $M_2=0$. Ref.~\cite{Chen24Quantum} names the W state as the natural target beyond hypergraph states; Eq.~\eqref{eq:M2full} supplies that generalization for arbitrary supports. What the dictionary adds is thus not the elementary fourth-moment identity itself, but what it becomes for indicators: the additive-combinatorics reading of Table~\ref{tab:dict} and the extremal, structural, and gate theorems built on it.

\begin{table}[ht]
\caption{Extremal data in the half-space regime $n=k+1$, from the knapsack and congruence above (exact solution by dynamic programming).}
\centering
\begin{tabular}{lccccc}
\toprule
$k$ & $n$ & $N_{\min}$ & $E^{(1)}$ & $M_2$ & status\\
\midrule
$3$ & $4$ & $5$ & $1240$ & $1.7239$ & exact (exhaustive)\\
$4$ & $5$ & $59$ & $11608$ & $2.4972$ & exact (achieved)\\
$5$ & $6$ & $570$ & $\ge102736$ & $\le3.3514$ & lower bound\\
$6$ & $7$ & $5001$ & $\ge868456$ & $\le4.2719$ & lower bound\\
$7$ & $8$ & $41830$ & $\ge7141360$ & $\le5.2322$ & lower bound\\
\bottomrule
\end{tabular}
\label{tab:unsat}
\end{table}

\section{The unsaturated regime}
\label{sec:appunsat}

When $\binom m2>2^n-1$ no Sidon set exists, and the maximizer of $M_2$ is a near-Sidon set. For a $2$-dimensional subspace $L\le\F_2^n$ let
\begin{subequations}
  \begin{align}
    c_L:&=\#\bigl\{v\in\F_2^n \mid v+L\subseteq C\bigr\},\\
    N:&=\sum_{L}c_L,\\
    Q:&=\sum_{L}c_L^{2},
  \end{align}
\end{subequations}
so that $N$ counts the affine $2$-planes contained in $C$ and $Q$ weights them by their $2$-subspace. The energies of Eq.~\eqref{eq:planes} follow by splitting into dependent and independent pairs: writing $p_x=A(x)/2$ for $x\ne0$, two disjoint $x$-pairs span a unique affine $2$-plane, each plane has three nonzero directions, and $\sum_{x\ne0}\binom{p_x}{2}=3N$; the independent pairs then contribute $96Q$ and the dependent pairs $3E(C)-2m^2$. Since $Q\ge N$ with equality when all affine $2$-planes of $C$ lie in pairwise distinct $2$-subspaces, and Sidon sets are exactly those with $N=Q=0$, the maximization of $M_2$ reduces to minimizing $N$ and $Q$.

A Fourier lower bound makes this minimization explicit. With $\widehat C(z)=\sum_{w\in C}(-1)^{z\cdot w}$,
\begin{gather}
    N=\frac{1}{24}\Bigl(\frac{m^{4}}{2^n}-3m^{2}+2m\Bigr) +\frac{1}{24\cdot 2^n}\sum_{z\neq0}\bigl|\widehat C(z)\bigr|^{4},\\
   \sum_{z\neq0}\bigl|\widehat C(z)\bigr|^{2}=m(2^n-m),
\end{gather}
and, since $Q\ge N$ gives $E^{(1)}\ge7m^2-6m+168N$, Cauchy--Schwarz inequality yields
\begin{equation}
\begin{aligned}
    M_2 & \le4k-\log_2\Bigl\{7m^{2}-6m\\
    &\quad+7\Bigl[\frac{m^{4}}{2^n}-3m^{2}+2m +\frac{m^{2}(2^n-m)^{2}}{2^n-1}\Bigr]\Bigr\}.
\end{aligned}
\end{equation}

In the half-space regime $n=k+1$ (so $2^n=2m$ and $\sum_{z\ne0}|\widehat C(z)|^2=m^2=2^{2k}$), write $|\widehat C(z)|=2j$ with $r_j=\#\{z\ne0 \;\big|\; |\widehat C(z)|=2j\}$; evenness is forced by parity of $m$. The minimal $N$ then solves the knapsack
\begin{equation}
  \min \sum_j j^{4}r_j \quad \text{s.t.}\;
  \sum_j j^{2}r_j=2^{2k-2},\quad \sum_j r_j\le 2^{k+1}-1,
\end{equation}
subject to the integrality congruence
\begin{equation}
  \sum_j j^{4}r_j \equiv -\bigl(2^{3k-1}-3\cdot2^{2k}+2^{k+1}\bigr)2^{k-3} \pmod{3\cdot2^k},
\end{equation}
with $N=\frac{2^{3k-1}-3\cdot2^{2k}+2^{k+1}}{24}+\frac{1}{3\cdot2^k}\sum_j j^{4}r_j$. Table~\ref{tab:unsat} records the resulting extremal data; for $k\ge5$ the values use $Q=N$ and are lower or upper bounds respectively.

Explicit maximizers are known for the two exact rows. For $(n,k)=(4,3)$ the set
\begin{equation}
  C=\{0,e_1,e_2,e_1\oplus e_2,e_3,e_2\oplus e_3,e_4,e_1\oplus e_4\}\subseteq\F_2^4
\end{equation}
has difference profile $A=2$ (4 directions), $A=4$ (9), $A=6$ (2), and $N=Q=5$; exhaustive search over all $\binom{16}{8}=12870$ subsets confirms $E^{(1)}=1240$ and $M_2=12-\log_2 1240\approx1.7239$ is maximal, with $10080$ maximizers sharing this profile. For $(n,k)=(5,4)$ the set (binary $5$-vectors)
\begin{equation}
  \begin{aligned}
    C=\{& 00010, 00011, 01010, 01011, 01100, 01111, 10001, 10011, \\
    &10100, 10101, 11000, 11001, 11010, 11011, 11110, 11111\}
  \end{aligned}
\end{equation}
has $N=Q=59$, $E^{(1)}=11608$, $M_2=16-\log_2 11608\approx2.4972$, with a single heavy direction $A=12$ forced by the congruence (the flat profile $\{6,8\}$ would violate integrality). Asymptotically, for $n=k+1$ and $k\to\infty$, $N\sim2^{3k-4}/3$, $E^{(1)}\sim7\cdot2^{3k-1}$, and
\begin{equation}
  M_2=k+1-\log_2 7+o(1),
\end{equation}
half the Sidon slope of $2$. The exact optimum for general $(n,k)$ beyond the half-space regime remains open.

\section{Nonstabilizerness capacity of Sidon code spaces}
\label{sec:appcapacity}

This appendix proves the two lemmas used in Sec.~\ref{sec:ext}. By Eq.~\eqref{eq:capmin} the capacity of a Sidon code is the minimum of the weighted energy $\mathcal E^{(1)}_{\psi}(C)$ over amplitudes and phases; Lemma~\ref{lem:sidonfourth} evaluates this energy in closed form for a Sidon set, and Lemma~\ref{lem:amplitude} solves the resulting amplitude optimization.

\begin{lemma}[Fourth moment of a Sidon code space]
\label{lem:sidonfourth}
Let $C$ be a Sidon set and $|\psi\rangle=\sum_{w\in C}\psi_w|w\rangle$, with $p_w=|\psi_w|^2$, $\phi_w=\arg\psi_w$ and $s_q=\sum_w p_w^q$. Then
\begin{equation}
  \sum_{x,z}\bigl|\langle\psi|X^xZ^z|\psi\rangle\bigr|^4
  =2^n\Bigl[6\bigl(s_2^2-s_4\bigr)+\Bigl|\textstyle\sum_w p_w^2\mathrm{e}^{4\mathrm i\phi_w}\Bigr|^2\Bigr].
\label{eq:fourthfactor}
\end{equation}
\end{lemma}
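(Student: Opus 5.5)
The plan is to start from the weighted dictionary of Lemma~\ref{lem:dict}, which already gives $\sum_{x,z}|\langle\psi|X^xZ^z|\psi\rangle|^4=2^n\,\mathcal E^{(1)}_{\psi}(C)=2^n\sum_{x,u}|A_2^{\psi}(x,u)|^2$. The task then reduces to evaluating $A_2^{\psi}(x,u)$ for a Sidon set and summing.

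\emph{Step 1: only degenerate parallelograms survive.} If $w,w\oplus x,w\oplus u,w\oplus x\oplus u\in C$ are four distinct points, then the two pairs $\{w,w\oplus x\}$ and $\{w\oplus u,w\oplus x\oplus u\}$ are distinct and share the difference $x$, which the Sidon property forbids. Hence $A_2^{\psi}(x,u)\neq0$ only for $(x,u)=(0,0)$, for $(x,0)$ and $(0,x)$ with $x\ne0$, and for $(x,x)$ with $x\neq0$. Moreover, for $x\ne0$ the layer $S_x$ is either empty or a single pair $\{a,b\}$ with $a\oplus b=x$, and distinct unordered pairs give distinct $x$. So the sum over $x\ne0$ becomes a sum over unordered pairs $\{a,b\}\subseteq C$.

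\emph{Step 2: evaluate each layer from the defining formula for $A_2^{\psi}$.}
\begin{itemize}
\item At $(0,0)$, each summand is $|\psi_w|^4$, so $A_2^{\psi}=s_2$, contributing $s_2^2$.
\item At $(x,0)$ and at $(0,x)$, the summand for $w\in\{a,b\}$ reduces to $p_ap_b$. This gives $A_2^{\psi}=2p_ap_b$ and a contribution $4p_a^2p_b^2$ from each of the two types.
\item At $(x,x)$, the summand is $\psi_w^2\,\overline{\psi_{w\oplus x}}^{\,2}$. Hence $A_2^{\psi}=2\,\mathrm{Re}\bigl(\psi_a^2\overline{\psi_b}^{\,2}\bigr)$, with squared modulus $4p_a^2p_b^2\cos^2(2(\phi_a-\phi_b))=2p_a^2p_b^2\bigl[1+\cos 4(\phi_a-\phi_b)\bigr]$.
\end{itemize}
This is the step where care is needed, to keep conjugations straight and not double count $(0,0)$. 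The resulting total is
\[
s_2^2+10\sum_{a<b}p_a^2p_b^2+2\sum_{a<b}p_a^2p_b^2\cos 4(\phi_a-\phi_b).
\]

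\emph{Step 3: recombine into symmetric functions.} I would use $\sum_{a<b}p_a^2p_b^2=\tfrac12(s_2^2-s_4)$ and the expansion $\bigl|\sum_w p_w^2\mathrm e^{4\mathrm i\phi_w}\bigr|^2=s_4+2\sum_{a<b}p_a^2p_b^2\cos4(\phi_a-\phi_b)$. Substituting these, the total becomes
\[
s_2^2+5(s_2^2-s_4)+\Bigl|\textstyle\sum_wp_w^2\mathrm e^{4\mathrm i\phi_w}\Bigr|^2-s_4=6(s_2^2-s_4)+\Bigl|\textstyle\sum_wp_w^2\mathrm e^{4\mathrm i\phi_w}\Bigr|^2 .
\]
Multiplying by $2^n$ gives Eq.~\eqref{eq:fourthfactor}.

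The main obstacle is Step 1's bookkeeping: confirming that the three pair types are disjoint and exhaust all nonzero layers. In particular, $(x,x)$ is not a special case of $(x,0)$ since $u\ne0$ there, and no other $(x,u)$ with $u\notin\{0,x\}$ can have a nonempty parallelogram set.
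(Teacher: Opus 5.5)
Your proposal is correct and is essentially the paper's argument. Both use the per-layer fourth-moment Parseval identity (equivalently Lemma~\ref{lem:dict}) and evaluate the same four degenerate parallelogram types $(0,0)$, $(0,x)$, $(x,0)$, $(x,x)$. The paper groups these by layer $x$, giving $3s_2^2-2s_4$ for $x=0$ and $2p_a^2p_b^2(3+\cos4\theta)$ for each pair layer, whereas you group them by $(x,u)$ type and add an explicit Sidon argument that every non-degenerate parallelogram vanishes.
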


\begin{proof}
For fixed $x$ write $c(x,z)=\langle\psi|X^xZ^z|\psi\rangle=\sum_w\overline{\psi_w}\psi_{w\oplus x}(-1)^{z\cdot w}=\hat{f}_x(z)$ with $f_x(w)=\overline{\psi_w}\psi_{w\oplus x}$ supported on the layer $S_x=C\cap(C\oplus x)$, and use $\sum_z|\hat f(z)|^4=2^n\sum_u|\sum_w f(w)\overline{f(w\oplus u)}|^2$. The $x=0$ layer is $f_0=p$, with $u=0$ term $s_2^2$; since every nonzero difference of a Sidon set is realized by a unique pair, the $u\ne0$ terms contribute $4\sum_{w<w'}p_w^2p_{w'}^2$, so the layer energy is $3s_2^2-2s_4$. For $x\ne0$ the layer is a single pair $\{w,w'\}$, with $f_x(w)=r\mathrm{e}^{\mathrm i\theta}$, $f_x(w')=r\mathrm{e}^{-\mathrm i\theta}$, $r^2=p_wp_{w'}$, $\theta=\phi_{w'}-\phi_w$, and energy $4r^4(1+\cos^2 2\theta)=2p_w^2p_{w'}^2(3+\cos4\theta)$. Summing over all $x$, with $\sum_{w<w'}p_w^2p_{w'}^2=\tfrac12(s_2^2-s_4)$ and $|\sum_w p_w^2\mathrm{e}^{4\mathrm i\phi_w}|^2=s_4+2\sum_{w<w'}p_w^2p_{w'}^2\cos4\theta_{ww'}$, gives Eq.~\eqref{eq:fourthfactor}.
\end{proof}

The phases are then optimized out in closed form: for fixed amplitudes, the quartic phases $4\phi_w$ are chosen to close the polygon of side lengths $p_w^2$, so that
\begin{equation}
  \min_{\phi}\Bigl|\textstyle\sum_w p_w^2\mathrm{e}^{4\mathrm i\phi_w}\Bigr|^2=\bigl[\max\bigl(0,\,2p_{\max}^2-s_2\bigr)\bigr]^2 .
\end{equation}
It remains to minimize over amplitudes the functional
\begin{equation}
  F(p)=6\bigl(s_2^2-s_4\bigr)+\bigl[\max\bigl(0,\,2p_{\max}^2-s_2\bigr)\bigr]^2 ,
\label{eq:F}
\end{equation}
on the simplex $\Delta_m=\{p\in\mathbb R_{\ge0}^m \mid \sum_w p_w=1\}$.

\begin{lemma}[Uniform amplitudes optimize the Sidon capacity]
\label{lem:amplitude}
For $m\ge4$,
\begin{equation}
  \min_{p\in\Delta_m}F(p)=\frac{6(m-1)}{m^3},
\end{equation}
attained uniquely at the uniform distribution $p_w\equiv1/m$. The exceptional values are $F=2/3$ at $m=2$ and $F=7/16$ at $m=3$.
\end{lemma}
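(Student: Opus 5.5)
The plan is to minimise $F$ on $\Delta_m$ by splitting the simplex according to whether the polygon penalty is active. The benchmark value comes first. At $p_w\equiv1/m$ one has $s_2=1/m$, $s_4=1/m^3$ and $2p_{\max}^2-s_2=2/m^2-1/m\le0$. So the penalty vanishes and $F=6(1/m^2-1/m^3)=6(m-1)/m^3$. Everything else is the lower bound $F(p)\ge 6(m-1)/m^3$ with equality only at the uniform point.

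It is convenient to write $q_w=p_w^2$, so that
\[
  G(p):=s_2^2-s_4=\sum_{w\ne w'}q_wq_{w'} .
\]
One warning fixes the strategy. $G$ alone is \emph{not} minimised at the uniform point, because concentrating the mass drives $G$ to $0$. The constraint therefore matters genuinely, and for $m=4$ the competitor $p\propto(\sqrt3,1,1,1)$ gives $G\approx0.0478$ against the uniform $3/64\approx0.0469$. A crude bound such as $s_4\le q_{\max}s_2$ together with Cauchy--Schwarz only yields $G\ge s_2^2/2\ge 1/(2m^2)$, which is too weak. So I would run a genuine constrained optimisation in two regions.

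\emph{Region I:} $2q_{\max}\le s_2$, where $F=6G$. \emph{Region II:} $2q_{\max}>s_2$, where $F=6G+(2q_{\max}-s_2)^2$.

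In Region II, let $w_0$ carry $p_{\max}=a$ and let $s_2'$, $s_4'$ denote the sums over the other $m-1$ coordinates. Then
\[
  F=12a^2s_2'+6(s_2'^2-s_4')+(a^2-s_2')^2 .
\]
For fixed $a$ this is increasing in how spread the rest is only through $s_2'$ and $s_4'$. I would lower-bound it using $s_4'\le s_2'^2$ and $s_2'\ge(1-a)^2/(m-1)$, giving a one-variable function of $a\in[\,\cdot\,,1]$. I expect this function to exceed $6(m-1)/m^3$ comfortably for $m\ge4$; the value at $a=1$ is already $1$. A direct calculus check should suffice, and the exceptional values $2/3$ at $m=2$ and $7/16$ at $m=3$ should appear here as the cases where this check fails.

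Region I is the main obstacle. Here I would apply KKT to minimise $G$ on $\Delta_m$ subject to $2q_{\max}\le s_2$. The stationarity condition for $\partial G/\partial p_w=4p_w(s_2-p_w^2)$ reads
\[
  4p_w(s_2-p_w^2)=\lambda\ \ \text{for all } w \text{ in the support},
\]
with an extra multiplier term only on the maximal coordinates if the constraint is active. This is a cubic in $p_w$ with at most two positive roots. Hence at any critical point the positive amplitudes take at most two values, say $a$ with multiplicity $r$ and $b$ with multiplicity $m'-r$. Zero coordinates are allowed, and on the boundary $2q_{\max}=s_2$ there is at most one further value.

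This reduces the problem to finitely many one- or two-parameter families, one for each choice of $(m',r)$ and of whether the constraint is active. On each family $G$ is an explicit rational function. I would show that $6G\ge6(m-1)/m^3$ on each, with equality only for the uniform full-support point. The step I expect to be delicate is the boundary family $q_{\max}=s_2/2$ with the remaining amplitudes equal, i.e.\ the $(\sqrt{m-1},1,\dots,1)$-type competitor. It is close to optimal, as the $m=4$ numbers show, so the inequality there must be verified exactly, for instance by clearing denominators and checking that a polynomial in $m$ is positive for $m\ge4$. Supports with $m'<m$ should be dispatched by monotonicity, since deleting points only increases the uniform benchmark $6(m'-1)/m'^3$ when $m'\ge2$. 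Finally, uniqueness follows because every non-uniform critical family gives a strict inequality.
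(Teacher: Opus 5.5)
There is a genuine gap in your Region II step. After discarding $6(s_2'^2-s_4')$ via $s_4'\le s_2'^2$ and inserting $s_2'\ge(1-a)^2/(m-1)$, you are left with
\[
  h(a)=\frac{12a^2(1-a)^2}{m-1}+\Bigl(a^2-\frac{(1-a)^2}{m-1}\Bigr)^2 .
\]
Region II corresponds to $a>a_0:=(1+\sqrt{m-1})^{-1}$. At the edge $a=a_0$ the square vanishes, so $h(a_0)=12a_0^4=12/(1+\sqrt{m-1})^4$. For $m=4$ this is $\approx0.215$, well below the target $18/64\approx0.281$, and $h$ stays below the target until $a\approx0.5$. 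The same failure at $a=a_0$ occurs for every $4\le m\le18$, in particular for $m=8,16$. So the one-variable check does not pass ``comfortably'', and it cannot be what singles out $m=2,3$.

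The loss is structural. The two estimates you combine are sharp at opposite configurations: $s_4'\le s_2'^2$ is sharp when the remaining mass sits on one point, and $s_2'\ge(1-a)^2/(m-1)$ when it is spread evenly. Near $a=a_0$ no loss can be afforded. $F$ is continuous across $2q_{\max}=s_2$, so Region II contains points arbitrarily close to the $(\sqrt{m-1},1,\dots,1)$ competitor you flagged yourself. There $F=6(3m-4)/\bigl[(m-1)(1+\sqrt{m-1})^4\bigr]\approx0.287$ at $m=4$, only about $2\%$ above the target, while the term you dropped is $\approx0.072$.

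What is missing is a stationarity argument in Region II as well, which is how the paper proceeds. There the maximum is unique and $F$ is smooth, so the non-maximal coordinates satisfy $p_w(7s_2-2a^2-6p_w^2)=\mathrm{const}$. By concavity they take at most two values, and two distinct values are excluded by a direct inequality. This leaves the family $(a,b,\dots,b)$ with $b=(1-a)/(m-1)$, on which $\mathrm dF/\mathrm da=4b^3\,g(a/b)$ with $g(\rho)=\rho^3-5\rho^2+5(m-1)\rho-(7m-13)>0$ for $\rho\ge\sqrt{m-1}$. Hence $F$ increases on $[a_0,1]$ from $F(a_0)$, which must then be compared with $6(m-1)/m^3$ exactly.

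Your KKT multiplier in Region I is unnecessary. It also does not act only on the maximal coordinates, since the constraint involves $s_2$. The penalty $[\max(0,2p_{\max}^2-s_2)]^2$ has zero gradient on the region boundary, so a minimizer there already satisfies $p_w(s_2-p_w^2)=\mathrm{const}$. The two-valued families you leave unchecked are then settled by monotonicity: $\mathrm dF/\mathrm da=24r(a-b)\bigl[(r-1)a^2+(m-r-1)b^2-ab\bigr]\ge0$ for $r\ge2$, and $\mathrm dF/\mathrm da=24(a-b)b\bigl[(m-2)b-a\bigr]\ge0$ for $r=1$ on $[1/m,a_0]$.
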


\begin{proof}
The minimizer is interior, since a boundary point (some $p_w=0$) reduces the dimension to $m'<m$, where the minimum is at least $6(m'-1)/(m')^3\ge6(m-1)/m^3$ for $m'\ge4$ (and $2/3$ or $7/16$ at $m'=2,3$, both larger). Write $q=p_{\max}$. In the region $2q^2\le s_2$, where $F=6(s_2^2-s_4)$, the Lagrange equation is $p_w(s_2-p_w^2)=\mathrm{const.}$, and since $t\mapsto t(s_2-t^2)$ is strictly concave the coordinates take at most two values. In the region $2q^2>s_2$ the maximum $q$ is unique (two maxima would give $s_2\ge2q^2$), the remaining coordinates obey $p_w(7s_2-2q^2-6p_w^2)=\mathrm{const.}$, and $t\mapsto t(7s_2-2q^2-6t^2)$ is concave, so again at most two values; two distinct small values $a>b$, occurring $j$ and $m-1-j$ times, would force $7s_2-2q^2=6(a^2+ab+b^2)$ and hence $5q^2+(7j-6)a^2+(7(m-1-j)-6)b^2=6ab$ with $q>a>b$, whose left side exceeds $6a^2>6ab$. Thus the minimizer is two-valued, $p=(a^{(r)},b^{(m-r)})$ with $ra+(m-r)b=1$ and $a\in[1/m,1/r]$. For $r\ge2$ one always has $2a^2\le s_2$ and
\begin{equation}
  \frac{\mathrm{d}F}{\mathrm{d}a}=24r(a-b)\bigl[(r-1)a^2+(m-r-1)b^2-ab\bigr]\ge0 ,
\end{equation}
so $F$ is increasing in $a$ and minimized at $a=1/m$, the uniform point. For $r=1$ (one large amplitude $a$, $m-1$ equal small amplitudes $b=(1-a)/(m-1)$) the boundary $2a^2=s_2$ is $a=q_0:=(1+\sqrt{m-1})^{-1}$. On $[1/m,q_0]$, $\mathrm{d}F/\mathrm{d}a=24(a-b)b[(m-2)b-a]\ge0$ since $a\le\sqrt{m-1}\,b\le(m-2)b$; on $[q_0,1]$, $\mathrm{d}F/\mathrm{d}a=4b^3G(a/b)$ with $G(\rho)=\rho^3-5\rho^2+5(m-1)\rho-(7m-13)$, strictly increasing on $\rho\ge\sqrt{m-1}$ and positive there since $G(\sqrt{m-1})=6(m-1)^{3/2}-12m+18\ge18\sqrt3-30>0$. Hence $F$ decreases to $a=1/m$ and then increases; its transition value is $F(q_0)=6(3m-4)/[(m-1)(1+\sqrt{m-1})^4]\ge6(m-1)/m^3$, and the boundary $b=0$ gives $F=6(r-1)/r^3\ge6(m-1)/m^3$. The minimizer is therefore uniform.
\end{proof}

\begin{table*}[ht]
\caption{Verification ledger for the main formulas.}
\begin{tabular}{lcccccc}
\toprule
Code & $n$ & $k$ & $E^{(1)}$ & $M_2$ (code) & $M_2$ (Choi) & Note\\
\midrule
Sidon $4$-set, $\F_2^3$ & $3$ & $2$ & $88$ & $1.5406$ & $1.5406$ & extremal at $k=2$\\
Sidon $4$-set, $\F_2^5$ & $5$ & $2$ & $88$ & $1.5406$ & $1.5406$ & padding-invariance\\
Sidon $8$-set, $\F_2^6$ & $6$ & $3$ & $400$ & $3.3561$ & $3.3561$ & extremal at $k=3$\\
non-Sidon $8$-set, $\F_2^6$ & $6$ & $3$ & $568$ & $2.505$ & $3.3561$ ($J=400$) & $J<E^{(1)}$\\
even-weight $n=5$ (affine) & $5$ & $4$ & $65536$ & $0$ & $0$ & affine $\Rightarrow M_2=0$\\
$R_\mathrm{nl}{+}S$ ($n=7$) & $7$ & $4$ & $22528=2^8{\cdot}88$ & $1.5406$ & $1.5406$ & coset invariance\\
$R_\mathrm{lin}{+}S$ ($n=7$) & $7$ & $4$ & $65536=2^{16}$ & $0$ & $0$ & coset invariance\\
$K(4)$ Nordstrom--Robinson & $16$ & $8$ & $2^{20}{\cdot}400$ & $3.3561$ & --- & Sidon magic; $P=\mathrm{RM}(1,4)$, all-Pauli\\
affine $\langle e_2,e_3\rangle$, $\F_2^3$ & $3$ & $2$ & $256$ & $0$ & $0$ & $T{\otimes}T$ logical\\
coset $R{+}S$, $n=4$ & $4$ & $3$ & $1408=2^4{\cdot}88$ & $1.5406$ & $1.5406$ & $T$ on $q_1$ logical, $q_2$ not\\
Sidon $4$-set, $\F_2^3$ ($\alpha=3$) & $3$ & $2$ & --- & --- & --- & $E^{(2)}=736$, $M_3=1.2382$\\
\bottomrule
\end{tabular}
\label{tab:ledger}
\end{table*}

\section{Transversal gates: multi-qubit product gates}
\label{sec:appgates}

We record the general product-gate criterion and the derivation of the single-qubit criterion Eq.~\eqref{eq:single-criterion}. Expand $V=\bigotimes_i V_i=\sum_{a,b}\mu(a,b)X^aZ^b$ with factorized $\mu(a,b)=\prod_i\mu_i(a_i,b_i)$, where $\mu_i(0,0)=\alpha_i$, $\mu_i(1,0)=\beta_i$, $\mu_i(0,1)=\delta_i$, $\mu_i(1,1)=i\gamma_i$. Acting on a codeword, $X^aZ^bZ^w|G\rangle=(-1)^{a\cdot(b+w)}Z^{b+w+\Gamma a}|G\rangle$, so grouping by the displacement $d=b+\Gamma a$,
\begin{equation}
  V|c_w\rangle=\sum_d\Bigl[\sum_a\mu(a,d+\Gamma a)(-1)^{a\cdot(d+w)}\Bigr]|c_{w+d}\rangle .
\end{equation}
The gate $V$ is logical if and only if, for every $d\notin P(C)$ and every $w\in C$ with $w+d\notin C$,
\begin{equation}
  \sum_a\mu(a,d+\Gamma a)(-1)^{a\cdot(d+w)}=0 .
\end{equation}
For a single non-isolated qubit each displacement $d$ is attained by a unique $a$, the sum is a single term, and this reduces to Eq.~\eqref{eq:single-criterion}. For general multi-qubit gates distinct displacements can coalesce and the phase cancellation above can render $V$ logical without any single coordinate forcing an axis vector into $P(C)$; characterizing which configurations of $\{a \mid \mu(a,d+\Gamma a)\ne0\}$ relative to $C^{\perp}$ realize this remains open.

\section{Verification ledger}
\label{sec:appledger}

All equalities were confirmed by brute-force Pauli fourth-moment computation and by independent enumeration of $E^{(1)}$, $E^{(2)}$, and $J$ (sixth moment for the $M_3$ row). Table~\ref{tab:ledger} records representative cases. Key readings: $M_2$ (code) and $M_2$ (Choi) agree on Sidon codes, as required by the equality case of Proposition~\ref{prop:Jle}; the non-Sidon $8$-set has $J=400<E^{(1)}=568$, so the Choi state is more magical than the code state; the rows $R_\mathrm{nl}+S$ and $K(4)$ realize the $|S|^4$ factor of Theorem~\ref{thm:coset} through $E^{(1)}(R_\mathrm{nl}+S)=2^8E^{(1)}(\text{Sidon }4\text{-set})$ and $E^{(1)}(K(4))=2^{20}E^{(1)}(\text{Sidon }8\text{-set})$; and the gate rows confirm Theorem~\ref{thm:tradeoff}, with the affine code carrying $T\otimes T$ (non-Clifford on $t=2$ coordinates) at $M_2=0=2(k-t)$, and the coset code carrying only the $q_1$ diagonal $T$ at $M_2=1.5406<2(k-t)=4$. Here $R_\mathrm{nl}=\{0,16,8,25\}$, $R_\mathrm{lin}=\{0,2,4,6\}$, $S=\{0,64,33,97\}\subseteq\F_2^7$. The $K(4)$ row further verifies $P(K(4))=\mathrm{RM}(1,4)$: all $32$ affine translations preserve the code, the fifteen hyperplane indicators $1_H$ are periods while no point indicator $e_i$ is, so by Theorem~\ref{thm:diag} the diagonal transversal gates are exactly the Pauli gates $\{Z^z \mid z\in\mathrm{RM}(1,4)\}$. The stability checks of Proposition~\ref{prop:stability} are likewise numerical: the amplitude vertex weight takes the constant value $G=88$ on the Sidon $4$-set, $G=200$ on the Sidon $8$-set, and $G=256$ on the affine $4$-set (matching $28m-24$ and $4m^3$), and the two values $536,704$ on the non-Sidon $8$-set; the phase formula \eqref{eq:phase-exact} reproduces $M_2\approx1.752$ of the phase-shifted Sidon $4$-set from $\Xi=12$.

\end{document}